\useunder{\uline}{\ul}{}
\newlength\aftertitskip     \newlength\beforetitskip
\newlength\interauthorskip  \newlength\aftermaketitskip
\DeclareMathOperator*{\tr}{\text{tr}}
\newtheorem{lemma}{Lemma}
\newtheorem{remark}{Remark}
\newtheorem{definition}{Definition}
\newtheorem{theorem}{Theorem}
\renewcommand\b{{\bf b}} 
\newcommand{\defeq}{:=}
\newcommand{\dmin}{d_{\min}}
\newcommand{\dmax}{d_{\max}}
\newcommand{\normFull}[1]{\left\|#1\right\|}
\newcommand{\norm}[1]{\|#1\|}
\newcommand{\pars}[1]{\left( #1 \right)}
\newcommand{\curs}[1]{\left\{ #1 \right\}}
\newcommand{\sqrs}[1]{\left[ #1 \right]}
\newcommand{\abs}[1]{\left| #1 \right|}
\newcommand{\Norm}[1]{\left\Vert #1 \right\Vert}
\newcommand{\R}{\mathbb{R}}
\newcommand{\Z}{\mathbf{Z}}
\newcommand{\I}{\mathbf{I}}
\newcommand{\A}{\mathbf{A}}
\newcommand{\B}{\mathbf{B}}
\newcommand{\Smat}{\mathbf{S}}
\newcommand{\As}{\widebar{\mathbf{A}}}
\newcommand{\Bs}{\widebar{\mathbf{B}}}
\newcommand{\rad}{\xi}
\newcommand{\Ntilde}{\widetilde{\mathbf{N}}}
\newcommand{\normgen}[2]{\left\|#1 \right\|_{#2} }
\newcommand{\trace}[1]{\mathrm{tr}(#1)}
\newcommand{\Prob}[1]{\mathbb{P} \left[ #1 \right]}
\newcommand{\one}{\mathbbm{1}}
\newcommand{\Rho}[1]{\rho\left( #1 \right)}
\newcommand{\D}{\mathbf{D}}
\newcommand{\Q}{\mathbf{Q}}
\newcommand{\N}{\mathbf{N}}
\newcommand{\Ps}{\mathbf{P}}
\newcommand{\Y}{\mathbf{Y}}
\newcommand{\X}{\mathbf{X}}
\newcommand{\W}{\mathbf{W}}
\newcommand{\M}{\mathbf{M}}
\newcommand{\J}{\mathbf{J}}
\newcommand{\Mp}{\mathbf{M}'}
\newcommand{\Qp}{\mathbf{Q}'}
\newcommand{\Abar}{\widebar{\mathbf{A}}}
\newcommand{\Ldagger}{\mathbf{L}^\dagger}
\newcommand{\rtildeij}{\widetilde{r}_{i,j}}
\newcommand{\effres}{r}
\newcommand{\rGij}{{\effres}_G({i,j})}
\newcommand{\rGab}{{\effres}_{G}(a,b)}
\newcommand{\Gprime}{H}
\newcommand{\deltaij}{\vec{\delta}_{i,j}}
\newcommand{\deltaab}{\vec{\delta}_{a,b}}
\newcommand{\Abs}[1]{ \left\lvert #1 \right\rvert}
\newcommand{\Exp}[1]{\exp \left( #1 \right)}
\newcommand{\Otilde}[1]{\widetilde{O}( {#1} )}
\newcommand{\mlap}{\mathbf{L}}
\newcommand{\laplacian}[1]{\mlap_{#1}} 
\newcommand{\nnz}{\mathrm{nnz}} 
\newcommand{\nnzFull}[1]{\mathrm{nnz}\left({#1}\right)} 
\newcommand{\nsD}[2]{\mathrm{ns}_{#1}\left({#2}\right)}
\newcommand{\normalized}[1]{\mathbf{N}_{#1}} 
\newcommand{\normalizedlazy}[1]{{\mathbf{N}_\ell}_{#1}}
\newcommand{\diagonal}[1]{{\mathbf{D}}_{#1}}
\newcommand{\adjacency}[1]{{\mathbf{A}}_{#1}}
\newcommand{\adjacencylazy}[1]{{\mathbf{A}_\ell}_{#1}} 
\newcommand{\nullspace}[1]{\mathrm{ker}\left(#1\right)}
\newcommand{\Spectrum}[1]{\lambda\left(#1\right)}
\newcommand{\lambdamin}{\lambda_{\mathrm{min}}}
\newcommand{\pcondOf}[1]{\widebar{\kappa}\left({#1}\right)}
\newcommand{\timeSketch}{\mathcal{T}_{\mathrm{s}}}
\newcommand{\timeQuery}{\mathcal{T}_{\mathrm{q}}}
\title{Towards Optimal Effective Resistance Estimation}
\author{Rajat Dwaraknath \\ Stanford University \\ \texttt{rajatvd@stanford.edu} \and Ishani Karmarkar \\ Stanford University \\ \texttt{ishanik@stanford.edu} \and Aaron Sidford \\ Stanford University \\ \texttt{sidford@stanford.edu}}
\date{}
\begin{document}
\maketitle

\begin{abstract}%
We provide new algorithms and conditional hardness for the problem of estimating effective resistances in $n$-node, $m$-edge, undirected, expander graphs. We provide an $\widetilde{O}(m\epsilon^{-1})$-time algorithm that produces with high probability, an $\widetilde{O}(n\epsilon^{-1})$-bit sketch from which the effective resistance between any pair of nodes can be estimated, to $(1 \pm \epsilon)$-multiplicative accuracy, in $\widetilde{O}(1)$-time. Consequently, we obtain an $\widetilde{O}(m\epsilon^{-1})$-time algorithm for estimating the effective resistance of all edges in such graphs, improving (for sparse graphs) on the previous fastest runtimes of $\widetilde{O}(m\epsilon^{-3/2})$ \cite{chu2018graph} and $\widetilde{O}(n^2\epsilon^{-1})$ \cite{JS2018} for general graphs and $\widetilde{O}(m + n\epsilon^{-2})$ for expanders \cite{li2022new}. 
We complement this result by showing a conditional lower bound that a broad set of algorithms for computing such estimates of the effective resistances between all pairs of nodes require $\widetilde{\Omega}(n^2 \epsilon^{-1/2})$-time, improving upon the previous best such lower bound of $\widetilde{\Omega}(n^2 \epsilon^{-1/13})$ \cite{musco2017spectrum}. Further, we leverage the tools underlying these results to obtain improved algorithms and conditional hardness for more general problems of sketching the pseudoinverse of positive semidefinite matrices and estimating functions of their eigenvalues.
\end{abstract}

\tableofcontents

\section{Introduction}\label{section:intro}
In a weighted, undirected graph $G$ the \emph{effective resistance} between a pair of vertices $a$ and $b$, denoted $\rGab$, is the energy of a unit of electric current sent from $a$ to $b$ in the natural resistor network induced by $G$. Effective resistances arise for a broad set of graph processing tasks and have multiple equivalent definitions. For example, $\rGab$ is proportional to the expected roundtrip commute time between $a$ and $b$ in the natural random walk induced on the graph and when $\{a,b\}$ is an edge in the graph, it is proportional to the probability that the edge is in a random spanning tree.

Effective resistances are also a metric on the vertices \cite{Madry_2014, Devriendt_2022} and are a key measure of proximity between vertex pairs. Correspondingly, effective resistances can arise in a variety of data analysis tasks. For example, effective resistances have been used in social network analysis for measuring edge centrality in social networks \cite{kirchoff} as well as for measuring chemical distances \cite{Klein_1993}.  

Effective resistances have a broad range of algorithmic implications. Sampling edges of a graph using effective resistance is known to efficiently produce cut and spectral sparsifiers (sparse graphs which approximately preserve cuts, random walk properties, and more) \cite{spielman2009graph, lee2015constructing, durfee2019fully}. Effective resistance-based graph sparsifiers have also been applied to develop fast graph attention neural networks \cite{srinivasa2020fast}, to design graph convolutional neural networks for action recognition \cite{AHMAD2021389}, to sample from Gaussian graphical models \cite{pmlr-v40-Cheng15}, and beyond \cite{pmlr-v80-calandriello18a, Mercier_2022}. Effective resistances have been used in algorithms for maximum flow problems, \cite{christiano2010electrical, madry2013navigating, madry2016computing, van2022faster, van2021minimum, lee2014path, van2022faster}, sampling random spanning trees \cite{durfee2017sampling,schild2018, madry2014fast}, and graph partitioning \cite{alev2017graph, zhao2019}. More recently, effective resistances have also been used to analyze the problem of oversquashing in GNNs and in designing algorithms to alleviate oversquashing \cite{digiovanni2023oversquashing, Banerjee2022, black2023understanding} and have been applied to increase expressivity when incorporated as edge features in certain GNNs \cite{velingker2022affinityaware}.

\paragraph{Algorithms.}
Given the broad utility of effective resistances, there have been many methods for estimating and  approximately compressing them \cite{JS2018, li2022new, chu2018graph, spielman2009graph, goranci2018dynamic}. In this paper, our main focus is the following effective resistance estimation problem. (We use $x \approx_\epsilon y$ as shorthand for $(1 - \epsilon) y \leq x \leq (1 + \epsilon) y$ and assume all edge weights in graphs are positive. See Section~\ref{section:preliminaries} for notation more broadly.)

\begin{definition}[Effective Resistance Estimation Problem]
\label{def:effres_est}
In the \emph{effective resistance estimation problem} we are given an undirected, weighted graph $G = (V,E,w)$, a set of vertex pairs $S \subseteq V \times V$, and $\epsilon \in (0, 1)$ and must output $\tilde{r} \in \R^{S}$ such that whp., $\tilde{r}_{(a,b)} \approx_\epsilon \rGab$ for all $(a,b) \in S$. 
\end{definition}

The state-of-the-art runtimes for solving the effective resistance estimation problem on $n$-node, $m$-edge graphs are given in Table~\ref{table:estimate}. To contextualize these results, consider the special case of estimating the effective resistance of a graph's edges, i.e., when $S = E$. This special case appears in many of the aforementioned applications, e.g., \cite{christiano2010electrical, schild2018, madry2013navigating, madry2016computing,durfee2017sampling}. The state-of-the-art runtimes for this problem are an $\Otilde{n^2 \epsilon^{-1}}$ time algorithm \cite{JS2018} and an $\Otilde{m \epsilon^{-1.5}}$ algorithm \cite{chu2018graph}. A major open problem is whether improved runtimes, e.g., $\Otilde{m \epsilon^{-1}}$ (which would subsume prior work), are attainable.

One of the main results of this paper is resolving this open problem in the case of well-connected graphs, i.e., expanders.
Formally, we parameterize our bounds in terms of the graph's expansion $\pcondOf{G}$, defined in Section~\ref{sec:preliminaries} and provide a number of results for \emph{expander graphs}, i.e., when $\pcondOf{G} = \Otilde{1}$. In particular, we provide an $\Otilde{m \epsilon^{-1} \pcondOf{G}}$ time algorithm for effective resistance estimation when $S = E$. Previously, the only non-trivial improvement in this setting was an independently obtained runtime of  $\Otilde{m + n\epsilon^{-2} (\pcondOf{G})^{3}}$.\footnote{While our algorithms for the effective resistance estimation problem (Definition~\ref{def:effres_est}) were obtained indpendently, our writing and discussion of effective resistance sketch algorithms (Definition~\ref{def:eff_res_sketch}) was informed by their paper. We provide a more complete comparison in Table~\ref{table:sketch}.} Expanders are a non-trivial, previously studied special case that is often the first step or a key component for developing more general algorithms \cite{dinitz2015resistance}. 

Interestingly, we obtain our main result by providing new effective resistance sketch algorithms.

\begin{definition}[Effective Resistance Sketch]
\label{def:eff_res_sketch}

We call a randomized algorithm an \emph{$(\timeSketch, \timeQuery, s)$-effective resistance sketch algorithm} if given an input $n$-node, $m$-edge  undirected, weighted graph $G = (V,E,w)$ and $\epsilon \in (0, 1)$ in time $O(\timeSketch(G,\epsilon))$ it creates a binary string of length $O(s(G,\epsilon))$ from which when queried with any $a,b\in V$, it outputs $\tilde{r}_{a,b} \approx_\epsilon \rGab$ whp.\ in time $O(\timeQuery(G,\epsilon))$. 
\end{definition}

Effective resistance sketching algorithms immediately imply algorithms for the effective resistance estimation problem. We obtain our result by obtaining an $(\Otilde{n \epsilon^{-1}}, \Otilde{m\epsilon^{-1}})$-effective resistance sketch algorithm for expanders (see Section~\ref{section:intro_results} for a comparison to prior work). 

\paragraph{Lower Bounds.} Given the central role of the effective resistance estimation and the challenging open-problem of determining its complexity, previous work has sought complexity theoretic lower bounds for the problem. \cite{musco2017spectrum} showed a conditional lower bound of $\Omega(n^2\epsilon^{-1/13})$ for the problem by showing that an algorithm that computes effective resistances in $O(n^2\epsilon^{-1/13 + \delta})$ time for some $\delta > 0$ could be used to obtain a subcubic algorithm for triangle detection in undirected graphs.

This \emph{triangle detection problem} is the problem of determining whether an undirected, unweighted graph contains a triplet of edges $(\{a, b\}, \{b, c\}, \{c, a\})$. Currently, the only known subcubic algorithms for the triangle detection problem leverage fast matrix multiplication (FMM) and therefore their practical utility (in the worst case) is questionable. Meanwhile, there are no known deterministic or randomized algorithms for the triangle detection problem that do not use FMM. \cite{william_williams2018} showed that any algorithm which solves this triangle detection problem in subcubic time could be used to obtain a subcubic algorithm for Boolean matrix multiplication (BMM) and additional problems which currently are only known to be solvable subcubically with FMM. 
Consequently, subcubic triangle detection is a common hardness assumption used to illustrate barriers towards improving non-FMM based methods, e.g., the effective resistance estimation algorithms of this paper.
 
In this paper we take a key step towards closing the gap between the best known running times for effective resistance estimation and lower bounds by improving the conditional lower bound of $\Omega(n^2\epsilon^{-1/13})$ to $\widetilde{\Omega}(n^2 \epsilon^{-1/2})$ for randomized algorithms. We show this lower bound holds \emph{even for expander graphs}, and hence our effective resistance estimation algorithm (as well as \cite{JS2018}) are optimal up to an $\epsilon^{-1/2}$-factor among non-FMM based algorithms, barring a major breakthrough in BMM.

\paragraph{Broader Linear Algebraic Tools.}
The effective resistance between vertex $a$ and vertex $b$ in a graph $G$ has a natural linear algebraic formulation. For all $a,b \in V$ it is known that $r_G(a,b) = \deltaab \laplacian{G}^\dagger \deltaab$, where $\laplacian{G} \in \R^{V \times V}$ is a natural matrix associated with $G$ known as the \emph{Laplacian matrix} and $\deltaab = e_a - e_b$ (see Section~\ref{section:preliminaries} for notation). Thus, sketching effective resistances can be viewed as problems of preserving information about subsets of entries of the pseudoinverse of a Laplacian. 

To obtain our algorithms and lower bounds we develop tools that apply to related problems for more general (not-necessarily Laplacian) matrices. In terms of algorithms, we show our techniques lead to algorithms and data structures for computing certain quadratic forms involving well-conditioned SDD and PSD matrices. In terms of hardness, we show our techniques improve triangle detection hardness bounds for estimating various properties of the singular values of a matrix. 

\paragraph{Paper Organization.} 

In the remainder of the introduction we provide a more precise statement and comparison of our results in Section~\ref{section:intro_results}. 
In the remainder of the paper we cover preliminaries in Section~\ref{sec:preliminaries}, present upper bounds in Section~\ref{section:upper_bounds}, and present lower bounds in Section~\ref{sec:lower_bounds}. 

\subsection{Our Results}\label{section:intro_results}

\paragraph{Algorithms.} { Here we outline our main algorithmic results pertaining to effective resistance sketching and estimation, and in Section~\ref{section:upper_bounds} we describe some extensions of our work to broader linear algebraic problems involving SDD and PSD matrices. Our main algorithmic result is a new efficient effective resistance sketch for \textit{expanders}, a term which is used to refer to graphs with $\widetilde{\Omega}(1)$-expansion. 

\begin{definition}[Expander]\label{def:expander} For $\alpha > 0$, we say that a graph $G = (V, E, w)$ has $\alpha$-expansion if $\alpha \leq \phi(G)$, where $\phi(G)$ denotes the conductance of $G$ and is defined as
\begin{align*}
    \phi(G) := \min_{S \subseteq V, S\notin \{0, V\}} \frac{\sum_{\{u, v\} : u \in S, v \in V\setminus{S}} w_{u, v}}{\min\left(\sum_{u \in S} d_u, \sum_{v \in V\setminus{S}} d_u\right)}, \text{ where } d_u := \sum_{\{u, v\} \in E} w_u.
\end{align*}
We say a graph is an expander if it has $\widetilde{\Omega}(1)$-expansion. 
\end{definition}

\begin{theorem}\label{thm:ub_main_informal} There is an $(\Otilde{m\epsilon^{-1}}, \Otilde{1}, \Otilde{n\epsilon^{-1}})$ effective resistance sketch algorithm for expanders.
\end{theorem}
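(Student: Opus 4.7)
The plan is to combine a Johnson--Lindenstrauss-style dimensionality reduction of the symmetric square root $L^{\dagger/2}$ with a fast iterative Laplacian solver that takes full advantage of the $\tilde\Omega(1)$-spectral gap afforded by the expander hypothesis. The starting point is the standard identity $r_G(a,b) = \delta_{a,b}^\top L^\dagger \delta_{a,b} = \|L^{\dagger/2}\delta_{a,b}\|_2^2$, which reduces effective-resistance sketching to preserving the $\ell_2$ norms of the $2$-sparse query vectors $\delta_{a,b}$ under the linear map $L^{\dagger/2}$.

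Concretely, I would take a random matrix $\Phi \in \R^{k\times n}$ with $k = \tilde\Theta(\epsilon^{-1})$ rows (for instance i.i.d.\ Gaussian or $\pm 1$ entries) and store $M \defeq \Phi L^{\dagger/2} \in \R^{k\times n}$ as the sketch, which consumes $nk = \tilde O(n\epsilon^{-1})$ bits after rounding to $O(\log n)$ bits per entry. A query on $(a,b)$ returns $\|M(e_a - e_b)\|_2^2$; reading two length-$k$ columns is nominally $\tilde O(\epsilon^{-1})$ time, and reaching the stated $\tilde O(1)$ query time requires layering an auxiliary per-vertex data structure (for example, a hash-based summary of the $O(1)$ coordinates that dominate the norm estimate) on top of $M$. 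To build $M$, I would compute each row $\phi_i^\top L^{\dagger/2}$ by applying $L^{\dagger/2}$ to $\phi_i$. Cheeger's inequality applied to an $\tilde\Omega(1)$-expander yields $\lambda_2(L) \geq \tilde\Omega(d_{\min})$, so $L$'s pseudo-condition number is $\tilde O(1)$ and a Chebyshev polynomial approximation of $L^{\dagger/2}$ converges in $\tilde O(\log(1/\epsilon))$ matrix--vector multiplications with $L$. Each row thus costs $\tilde O(m)$ time, summing to $\tilde O(mk) = \tilde O(m\epsilon^{-1})$.

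The principal difficulty is showing that $k = \tilde O(\epsilon^{-1})$ JL dimensions suffice to achieve $(1\pm\epsilon)$ relative accuracy for every pair simultaneously, since the classical Johnson--Lindenstrauss analysis requires $k = \Theta(\epsilon^{-2}\log n)$ for this accuracy target. I would attack this via variance reduction: write $r_G(a,b) = \hat r_0(a,b) + \delta_{a,b}^\top R\, \delta_{a,b}$, where $\hat r_0$ captures the bulk of $r_G(a,b)$ and is computable exactly (or to negligible error) in $\tilde O(1)$ time per pair---for example, via a degree-based estimator augmented by the first few terms of a Neumann series for $(I-N)^\dagger$, where $N = D^{-1/2} A D^{-1/2}$---and where the residual $R$ satisfies $\|R\,\delta_{a,b}\|_2^2 = O(\epsilon)\cdot r_G(a,b)$ on expanders. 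Under such a bound, the JL estimator applied only to the residual has standard deviation $\sqrt{2/k}\,\|R\,\delta_{a,b}\|^2 = O(\sqrt{\epsilon/k})\, r_G(a,b)$, so $k = \tilde O(\epsilon^{-1})$ suffices for multiplicative $\epsilon$-accuracy after a standard subgaussian union bound across the $\binom{n}{2}$ pairs.

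The main technical labor thus lies in constructing such a residual decomposition uniformly for general expanders and in carefully propagating both the Chebyshev approximation error in $L^{\dagger/2}$ and the JL concentration error so that their combined multiplicative effect stays within $1\pm\epsilon$. A secondary but nontrivial obstacle is achieving the $\tilde O(1)$ query time, which rules out scanning all $\tilde O(\epsilon^{-1})$ sketch coordinates per query and therefore requires an additional indexing layer that identifies, in polylogarithmic time, the few sketch coordinates carrying the dominant contribution to the norm estimate.
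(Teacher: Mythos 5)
Your proposal takes a genuinely different route from the paper: you stick with the symmetric factorization $r_G(a,b) = \|L^{\dagger/2}\delta_{a,b}\|_2^2$ and try to push a Johnson--Lindenstrauss sketch from the $\epsilon^{-2}$ regime down to $\epsilon^{-1}$ via a ``variance reduction'' decomposition $r_G(a,b) = \hat r_0(a,b) + \delta_{a,b}^\top R\,\delta_{a,b}$. The paper instead abandons the symmetric factorization entirely: it rewrites $r_G(i,j)$ as the \emph{asymmetric} inner product $\frac{1}{2}\langle \D^{-1}\deltaij,\ \D^{1/2}(\normalized{\mlap}/2)^\dagger \D^{-1/2}\deltaij\rangle$, shows that on an expander the product of $\ell_1$ norms of the two sides is $\Otilde{1}$ times the inner product (via a power-series/eigenvalue bound, Lemmas~\ref{lemma:lemmaEquivalentQForms}--\ref{lemma:lemmal1NormBound}), and then applies CountSketch (Lemma~\ref{lemma:lemmaCountSketch}), whose size scales as $\norm{v}_1\norm{w}_1/(\epsilon\langle v,w\rangle)$ rather than as $\epsilon^{-2}$. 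That is where the full $\epsilon^{-1}$ improvement comes from, with no variance-reduction step needed.

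Your plan has two concrete gaps that I do not see how to close along your chosen lines. First, the residual decomposition is not constructed. The natural candidate you gesture at --- the degree-based estimator plus the first few Neumann terms of $(\I - \normalized{\mlap}/2)^k$ --- fails the resource constraints: the $k\ge 2$ terms are entrywise dense on an expander, so precomputing them so that $\hat r_0(a,b)$ is available in $\Otilde{1}$ time per pair requires $\Omega(n^2)$ space, while obtaining the needed $O(\sqrt\epsilon)$ relative residual needs $\Theta(\pcondOf{\laplacian{G}}\log(1/\epsilon))$ Neumann terms. (As an aside, your own arithmetic is inconsistent: with residual $O(\epsilon)\cdot r_G$ the JL standard deviation would be $O(\epsilon/\sqrt k)\,r_G$ and $k=O(1)$ would already suffice; what you actually need for $k = \Theta(\epsilon^{-1})$ to be tight is residual $\Theta(\sqrt\epsilon)\cdot r_G$.) You never establish that a degree-plus-low-order correction has residual decaying with $\epsilon$ at all --- in a generic expander the correction to $\frac12(1/d_a + 1/d_b)$ is itself $\Theta(1)\cdot r_G(a,b)$, not $o(1)\cdot r_G(a,b)$. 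Second, the $\Otilde{1}$ query time is asserted rather than achieved: for a dense random $\Phi$, all $k$ coordinates of $M\deltaij$ contribute comparably to $\|M\deltaij\|_2^2$, so there is no small set of ``dominant coordinates'' to hash into, and scanning two columns is inherently $\Theta(k) = \Theta(\epsilon^{-1})$ time. The paper gets $\Otilde{1}$ query time for free because the CountSketch matrix is $\Otilde{1}$-column-sparse, so $\Smat \D^{-1}\deltaij$ has $\Otilde{1}$ nonzeros and the recovery inner products are $\Otilde{1}$-time operations. I'd suggest replacing both the variance-reduction step and the indexing layer with the asymmetric inner-product reformulation plus an $\ell_1$-based sketch, which resolves both issues simultaneously.
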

}

Table~\ref{table:sketch} summarizes and compares our Theorem~\ref{thm:ub_main_informal} to previous work on effective resistance sketches, including naive algorithms to explicitly compute the pseudoinverse of the Laplacian of $G$, which can be computed in $O(n^\omega)$ time using FMM or $\Otilde{mn}$ time using a Laplacian system solver (labeled Solver).\footnote{$\omega \leq n^{2.37188}$ \cite{duan2023faster} denotes the fast matrix multiplication constant. } A $(\timeSketch, \timeQuery, s)$
effective-resistance sketch algorithm implies an $O(\timeSketch + |S| \timeQuery )$ algorithm for the effective resistance estimation problem. Hence, Theorem~\ref{thm:ub_main_informal} implies the following. 
\begin{theorem}[Effective Resistance Estimation on Expanders] \label{thm:ub_main_informal2} There is an $\Otilde{m\epsilon^{-1} + |S|}$ time algorithm which solves the effective resistance estimation problem for expanders whp.
\end{theorem}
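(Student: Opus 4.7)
The plan is to derive Theorem~\ref{thm:ub_main_informal2} as an immediate corollary of Theorem~\ref{thm:ub_main_informal} via the standard reduction from sketch-and-query to batch estimation. Given the input $(G, S, \epsilon)$, I first invoke the sketch algorithm of Theorem~\ref{thm:ub_main_informal} on $G$ with accuracy parameter $\epsilon$, which in $\Otilde{m\epsilon^{-1}}$ time produces a data structure of size $\Otilde{n\epsilon^{-1}}$ supporting $\Otilde{1}$-time queries. Then I iterate over the requested pairs: for each $(a, b) \in S$, I issue one query to obtain $\tilde{r}_{(a,b)}$ with $\tilde{r}_{(a,b)} \approx_\epsilon \rGab$ whp. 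Summing the sketch construction and per-query costs yields total running time $\Otilde{m\epsilon^{-1}} + |S|\cdot \Otilde{1} = \Otilde{m\epsilon^{-1} + |S|}$, as claimed.

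The only nontrivial point in this deduction is upgrading the per-query high-probability correctness guarantee of Theorem~\ref{thm:ub_main_informal} to one that holds simultaneously over all pairs in $S$. Since $S \subseteq V \times V$, we always have $|S| \leq n^2$, so I would boost the failure probability of each query (equivalently, the failure probability of the underlying sketch) to $1/\mathrm{poly}(n)$ with a sufficiently large polynomial by standard repetition or by rerunning the randomized construction with a larger polylogarithmic overhead. This inflates the sketching and querying costs by only polylogarithmic factors, which are absorbed into the $\Otilde{\cdot}$ notation, and a union bound then guarantees that $\tilde{r}_{(a,b)} \approx_\epsilon \rGab$ simultaneously for all $(a,b) \in S$.

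I do not expect any real obstacle in this corollary itself; the substantive work is concentrated entirely in the construction and analysis of the effective resistance sketch of Theorem~\ref{thm:ub_main_informal}, where $\Otilde{1}$-expansion is exploited to obtain both the $\Otilde{n\epsilon^{-1}}$ sketch size and the $\Otilde{1}$-time queries. Treating that theorem as a black box, Theorem~\ref{thm:ub_main_informal2} follows immediately as sketched above.
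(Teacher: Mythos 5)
Your derivation is correct and matches the paper's own treatment: the paper likewise notes that a $(\timeSketch,\timeQuery,s)$ sketch yields an $O(\timeSketch + |S|\cdot\timeQuery)$ estimation algorithm, and handles the union-bound issue by maintaining $O(\log n)$ copies of the sketch and taking the median, which is the same polylogarithmic boosting you describe. Treating Theorem~\ref{thm:ub_main_informal} as a black box, Theorem~\ref{thm:ub_main_informal2} follows immediately, just as you argue.
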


Effective resistance sketches are a common approach to solving the effective resistance estimation problem; but there are also approaches to the problem that do not explicitly construct effective resistance sketches. Table~\ref{table:estimate} summarizes prior work on effective resistance estimation more broadly.

There has been a long line of research on the problem of computing sketches and sparsifiers of graph Laplacians \cite{andoni2015sketching, spielman2009graph, JS2018, chu2018graph} (i.e., computing a sparse graph $G'$ such that quadratic forms in the Laplacian of $G'$ approximate quadratic forms in the Laplacian of $G$). Building on this work, \cite{chu2018graph} showed there is an algorithm which processes a graph $G$ on $n$ nodes and $m$ edges in $O(m^{1 + o(1)})$ time and produces a sparse sketch graph $H$ with only $\Otilde{n\epsilon^{-1}}$ edges such that $r_G(a,b)\approx r_H(a, b)$ for all $a, b \in V$. Consequently, any algorithm which runs in $\Otilde{m\epsilon^{-c}}$ on expanders can be improved to run in $\Otilde{m^{1+o(1)} + n\epsilon^{-(c+1)}}$ time on expanders simply by running the algorithm on $H$ instead of $G$.

\begin{table}[ht]
\hrule
\centering
\begin{minipage}[t]{.45\linewidth}
    \centering
    \caption{Sketch}
    \label{table:sketch}
    \begin{tabular}{c c c c c c }
    \toprule
    Method & $\mathcal{T}_p$ & $\mathcal{T}_q$ & $s$ \\
    \midrule
    FMM & $ n^\omega $ & $1$ & $n^2$ \\
    Solver & $ nm $ & $1$ & $n^2$ \\
    \cite{spielman2009graph} & $m\epsilon^{-2}$ & $\epsilon^{-2}$ & $n\epsilon^{-2}$ \\
    \cite{JS2018} & $n^2\epsilon^{-1}$ & 1 & $n\epsilon^{-1}$ \\
    \cite{li2022new} & $m + n \epsilon^{-2}$ & $1$ & $n\epsilon^{-1}$ \\
    Ours & $m \epsilon^{-1}$ & $1$ & $n\epsilon^{-1}$ \\
    \end{tabular}
\end{minipage}%
\begin{minipage}[t]{.55\linewidth}
    \centering
    \caption{Effective Resistance Estimation}
    \label{table:estimate}
    \begin{tabular}{c c c  }
    \toprule
     Method & Runtime & \makecell{\makecell{Restriction}} \\
    \midrule
    FMM & $n^\omega + \abs{S}$ & None \\
    Solver & $n m + \abs{S}$ & None \\
    \cite{spielman2009graph} & $n^2\epsilon^{-2}$ & $S = V \times V$ \\
    \cite{JS2018} & $n^2\epsilon^{-1}$ & $S = V \times V$\\
    \cite{durfee2017sampling} & $m + (n + \abs{S})\epsilon^{-2}$ & None \\
    \cite{HuanSchild2018} & $m + \abs{S}\epsilon^{-2}$ & None \\
    \cite{chu2018graph} & $m\epsilon^{-1.5}$ & $S = E$ \\
    \cite{li2022new} & $m + n\epsilon^{-2} + \abs{S}$ & None \\
    Ours & $m\epsilon^{-1} + \abs{S}$ & None \\
    \end{tabular} \\
\end{minipage}
\\[1ex]
\hrule 
\vspace*{.2cm}
\parbox{\linewidth}{\textbf{Prior work on Effective Resistance Sketches (Table~\ref{table:sketch}) and Estimation (Table~\ref{table:estimate}).} Time and space complexities are reported for $n$-node, $m$-edge expanders
up to $\widetilde{O}(\cdot)$. Our results and \cite{li2022new} apply only to expanders; however, the remaining works apply to general graphs. As discussed, when $m^{1 + o(1)} + n \epsilon^{-(c+1)} = o(m\epsilon^{-c})$, any runtime dependence on $m\epsilon^{-c}$ in the table can be improved to a dependence on $m^{1 + o(1)} + n \epsilon^{-(c+1)}$.
} 
\vspace{-1em}
\end{table}

\paragraph{Lower Bounds} 
For the effective resistance estimation problem, \cite{musco2017spectrum} showed that any algorithm which solves the effective resistance estimation problem for $S = V \times V$ in $\Otilde{n^2\epsilon^{-1/13 + \delta}}$ for some $\delta > 0$, would imply a combinatorial subcubic deterministic algorithm which detects a triangle in an $n$-node undirected unweighted graph. We improve on their result, as follows. 

\begin{theorem}\label{thm:reff_lb_main_result} Given an algorithm which solves the effective resistance estimation problem for $S = V\times V$ on graphs with $\widetilde{\Omega}(1)$-expansion in $\Otilde{n^2\epsilon^{-1/2 + \delta}}$ time for $\delta > 0$, we can produce a randomized algorithm that solves the triangle detection problem on an $n$-node graph in $\Otilde{n^{3-2\delta}}$ time whp. 

\end{theorem}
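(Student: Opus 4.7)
The plan is to reduce triangle detection on an $n$-node graph $G$ to effective resistance estimation on an expander gadget $H$ on $N = \Theta(n)$ vertices, then tune parameters so that the target precision is exactly $\epsilon = \Theta(n^{-2})$. With $|S| = N^2 = O(n^2)$ and $\epsilon = \Theta(n^{-2})$, the assumed algorithm runs in $\widetilde{O}(N^2 \epsilon^{-1/2+\delta}) = \widetilde{O}(n^2 \cdot n^{1-2\delta}) = \widetilde{O}(n^{3-2\delta})$ time, giving the claimed reduction time after $\widetilde{O}(n^2)$ postprocessing of the estimates.

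For the gadget, a natural candidate is a tripartite construction: take three copies $V_1, V_2, V_3$ of $V(G)$, overlay a base $d$-regular expander $H_0$ with $d = \widetilde{\Theta}(1)$ to ensure $\widetilde{\Omega}(1)$-expansion, and add weighted edges of small weight $\beta$ between consecutive layers, one for each edge of $G$ (e.g., edge $\{u,v\} \in E$ contributes $\{u_i, v_{i+1 \bmod 3}\}$ for $i \in \{1,2,3\}$). In this construction a triangle $\{u,v,w\}$ in $G$ becomes a 3-cycle $u_1 \to v_2 \to w_3 \to u_1$ in $H$, so the triangle count $T$ is carried by $\mathrm{tr}\bigl((\beta A_{H,G})^3\bigr) = 6\beta^3 T$, where $A_{H,G}$ is the $G$-dependent part of the adjacency.

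To extract $T$ from the sketch output, I would use the identity $\sum_{a,b \in V(H)} r_H(a,b) = 2N \cdot \mathrm{tr}(L_H^\dagger)$ together with the Neumann expansion $\mathrm{tr}(L_H^\dagger) = \sum_{k \geq 0} d^{-k-1}\mathrm{tr}(A_H^k)$. The $k=3$ term isolates the triangle contribution, and the lower-order terms ($k = 0, 1, 2$) depend only on $N$, edge counts and base-expander moments which are either known in closed form or can be cancelled by subtracting a ``null'' instance constructed with the empty graph. A multiplicative $(1 \pm \epsilon)$ guarantee on each $r_H(a,b) \approx 2/d$ translates into additive error $\widetilde{O}(\epsilon N/d)$ on $\mathrm{tr}(L_H^\dagger)$, which for $\epsilon = n^{-2}$ and appropriate $\beta, d$ should be smaller than the single-triangle signal $6\beta^3/d^4$.

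The main obstacle is the joint tuning of $\beta$, $d$, and the aggregation functional to simultaneously satisfy (i) $\widetilde{\Omega}(1)$-expansion of $H$ (forcing $\beta$ times the maximum degree of $G$ to be dominated by $d$), (ii) detectability of a single triangle under $\epsilon = n^{-2}$ precision (forcing $\beta$ not too small), and (iii) reliable cancellation of the lower-order Neumann terms despite the multiplicative error in each estimate. If the naive tripartite perturbation cannot simultaneously meet these constraints on dense instances of $G$, the fallback would be to preprocess $G$ (e.g., sparsify by random sub-sampling of vertices, or work with bounded-degree triangle detection via a reduction) and/or to replace the global sum of resistances with a more targeted linear functional of a carefully chosen set of pair queries whose signal-to-noise ratio is amplified. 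The bulk of the technical work will lie in this spectral perturbation analysis of the gadget.
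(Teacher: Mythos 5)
Your high-level arithmetic is right (set $\epsilon = \Theta(n^{-2})$ so that $n^2\epsilon^{-1/2+\delta} = n^{3-2\delta}$) and using a tripartite gadget is the correct instinct, but the core of your reduction --- extracting $\tr(\A_H^3)$ from $\sum_{a,b} r_H(a,b) = 2N\,\tr(\mlap_H^\dagger)$ via the Neumann expansion --- cannot hit the claimed $n^2\epsilon^{-1/2}$ bound, for essentially the quantitative reason you half-acknowledge. Expansion of $H$ forces $\beta\,\dmax(G) = O(d)$; on a dense instance with $\dmax(G) = \Theta(n)$ this gives $\beta = O(d/n)$, so the single-triangle signal $\Theta(\beta^3/d^4) = O(1/(n^3 d))$ is dwarfed by the aggregated multiplicative error $\widetilde{O}(\epsilon N/d) = \widetilde{O}(1/(nd))$ at $\epsilon = n^{-2}$; closing the gap would need $\epsilon \lesssim n^{-4}$, which only yields an $n^2\epsilon^{-1/4}$-style lower bound. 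This is precisely the weakness of the trace-based approach of \cite{musco2017spectrum} that the paper is designed to overcome. The paper instead estimates each off-diagonal entry of the inverse individually, using $r_{i,j} = \M^{-1}_{i,i} + \M^{-1}_{j,j} - 2\M^{-1}_{i,j}$ with $\M = \I - \frac{\alpha}{n}\As$, and checks whether $(\As^2)_{i,j}$ is zero or nonzero for each candidate pair; because $(\As^2)_{i,j}$ is integer-valued, additive $1/2$ error per pair suffices, and the error from a single resistance query is $\epsilon\cdot O(1)$ rather than $\epsilon\cdot N^2$ aggregated across the sum.

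You are also missing two tricks that the paper's error budget relies on. First, it applies a symmetric random signing $\As_{\Gprime}$ of the adjacency matrix (preserving, with constant probability, whether $(\A_{\Gprime}^2)_{i,j}$ is nonzero), which drops the spectral radius from $O(n)$ to $\Otilde{\sqrt{n}}$ by matrix Chernoff; this shrinks the $k$-th power-series term from $O(\alpha^k)$ to $O((\alpha/\sqrt n)^k)$ and is what makes the tail error small enough at $\epsilon = \Theta(n^{-2})$. Second, instead of overlaying $G$'s edges across all three layers, the paper removes all edges between $V_1$ and $V_2$ to form $\Gprime$, so that $(\As_{\Gprime}^3)_{i,j} = 0$ for $i\in V_1, j\in V_2$; this makes the $k=3$ term of the Neumann series vanish exactly on the entries being queried, so only the $k\geq 4$ tail contributes error. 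Your proposed fallbacks (vertex sub-sampling or restricting to bounded degree) would not preserve the hardness of triangle detection, so they do not repair the construction.
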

Theorem~\ref{thm:reff_lb_main_result} implies an $\widetilde{\Omega}(n^2\epsilon^{-1/2})$ randomized conditional lower bound for the problem of estimating effective resistances of all pairs of nodes in an undirected unweighted expander graph, while \cite{musco2017spectrum} shows only an $\Omega(n^2\epsilon^{-1/13})$ lower bound. 

In addition to conditional lower bounds for effective resistance estimation, we also improve on existing conditional lower bounds for the problem of estimating spectral sums that we define below. The definition is inspired by Theorem 15 of \cite{musco2017spectrum}.

\begin{definition}[Spectral Sum]\label{def:spectral_sum} For $f:\R^+ \to \R^+$ and $\A \in \R^{n \times n}$ with singular values $\sigma_1(\A) \leq \sigma_2(\A) \leq \cdots \leq \sigma_n(\A)$, we define the spectral sum $\mathcal{S}_f : \R^{n \times n} \to \R^+$ as $\mathcal{S}_f (\A) := \sum_{i=1}^n f (\sigma_i(\A))$.
\end{definition}

\cite{musco2017spectrum} showed that for several spectral sums $\mathcal{S}_f$, any algorithm that outputs $Y \approx_\epsilon \mathcal{S}_f(\A)$ in $O(n^\gamma\epsilon^{-c})$ time for $\gamma \geq 2$ on an $n \times n$ PSD matrix would imply an $O(n^{\gamma + \alpha c})$ time algorithm which solves the triangle detection problem, where the scaling $\alpha$ varies depending on the specific $\mathcal{S}_f$ (see Table~\ref{table:spectral_sum_hardness}). We build on their results to show improved randomized conditional lower bounds for several spectral sum estimation problems, as presented in Theorem~\ref{thm:our_results_spectralsums_lowerbound} below. 

\begin{restatable}{theorem}{thmOurResultsSpectralSumsLowerBound}\label{thm:our_results_spectralsums_lowerbound}
Given an algorithm which on input $\B \in \R^{n \times n}$ outputs a spectral sum estimate $Y \approx_\epsilon \mathcal{S}_f(\B)$ in $O\pars{n^\gamma \epsilon^{-c}}$ time with $\gamma \geq 2$ for the spectral sums in Table~\ref{table:spectral_sum_hardness}, we can produce an algorithm that can detect a triangle in an n-node graph whp.\ in $\widetilde{O}\pars{n^{\gamma + \alpha c}}$ time, where $\alpha$ is a scaling that depends on properties of the function $f$ (see Table~\ref{table:spectral_sum_hardness} for values of $\alpha$ for several spectral sums.)
\end{restatable}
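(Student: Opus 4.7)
The plan is to follow the general reduction framework of \cite{musco2017spectrum} (their Theorem~15), which reduces triangle detection to spectral sum estimation via a carefully built matrix, but to replace their construction with a sharper one that amplifies the triangle signal and to exploit randomization (since we are now aiming for a randomized triangle-detection algorithm rather than a deterministic one). First, I would start from the tripartite form of triangle detection: given a tripartite graph $G$ on $3n$ vertices with bipartite adjacency blocks $\A_{12},\A_{13},\A_{23}\in\{0,1\}^{n\times n}$, detecting a triangle is equivalent to deciding whether $\tr(\A_{12}\A_{23}\A_{13}^\top)>0$, and classical reductions show this problem has the same triangle-detection hardness as the general case up to polylogarithmic factors.

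Next, I would build a block matrix $\B\in\R^{N\times N}$ with $N=O(n)$ of the form
$$\B \;=\; \lambda\I \;+\; \mu\,\M(\A_{12},\A_{13},\A_{23}),$$
where $\M$ is a (possibly rescaled) symmetric block arrangement of the three bipartite adjacency matrices, $\lambda>0$ is chosen so that all singular values of $\B$ lie in a narrow interval around $\lambda$ contained in the domain of $f$, and $\mu$ controls the spectral perturbation. Expanding $f$ as a Taylor series around $\lambda$ yields
$$\mathcal{S}_f(\B) \;=\; \sum_{k\ge 0} \frac{f^{(k)}(\lambda)}{k!}\,\mu^k\,\tr(\M^k),$$
and the block structure of $\M$ ensures that $\tr(\M^3)$ is proportional to the triangle count $\tr(\A_{12}\A_{23}\A_{13}^\top)$, while the terms for $k\in\{0,1,2\}$ depend only on $n$ and on easily computed degree statistics of $G$.

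Third, I would separate $\mathcal{S}_f(\B)$ into an explicit ``baseline'' contribution $B_0(G)$, computable in $\Otilde{n^2}$ time from edge counts and degree sums, a ``triangle'' contribution proportional to $\mu^3 f^{(3)}(\lambda)\cdot T(G)$ where $T(G)$ is the number of triangles, and a tail of higher-order terms bounded by combinatorial walk counts. Choosing $\mu$ and $\lambda$ so that (i) the tail is dominated by the cubic term and (ii) $\mu^3 f^{(3)}(\lambda)$ is as large as possible relative to $B_0(G)$, one obtains that any multiplicative $\epsilon$-approximation $Y\approx_\epsilon\mathcal{S}_f(\B)$ determines $T(G)$ to additive accuracy $\epsilon\cdot B_0(G)/(\mu^3 f^{(3)}(\lambda))$. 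Setting $\epsilon=\Theta(n^{-\alpha})$, where $\alpha$ is exactly the exponent in Table~\ref{table:spectral_sum_hardness}, makes this accuracy strictly less than one, so the sign of $Y-B_0(G)$ reveals the presence of a triangle whp. Substituting $\epsilon=n^{-\alpha}$ into the assumed runtime $O(n^\gamma\epsilon^{-c})$ yields the claimed $\Otilde{n^{\gamma+\alpha c}}$-time randomized triangle detector. Randomness enters in two places: the ``whp.'' guarantee of the spectral-sum algorithm is boosted by amplification, and (where it helps) one can randomly sparsify or reweight $\M$ to further suppress the non-triangle walks of length three.

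The main obstacle is the parameter optimization in the third step. For each $f$ in Table~\ref{table:spectral_sum_hardness}, one must choose $\lambda$ and $\mu$ so that all singular values of $\B$ remain in the domain of $f$, the higher-order Taylor terms are genuinely negligible compared to the cubic one (this requires bounding walk counts $\tr(\M^k)$ for $k\ge 4$ in terms of $n$ and the maximum degree), and the ratio $\mu^3 f^{(3)}(\lambda)/B_0(G)$ is as large as possible. The resulting exponent $\alpha$ depends on the local behavior of $f$ near $\lambda$ (roughly, how quickly $f^{(3)}(\lambda)$ decays relative to $f(\lambda)$ as $\lambda$ is tuned), which is precisely why different rows of Table~\ref{table:spectral_sum_hardness} give different values of $\alpha$, and this case analysis is the technical heart of the argument.
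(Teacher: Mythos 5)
Your high-level framework — take a matrix close to a multiple of the identity, Taylor-expand the spectral sum, isolate the cubic term that carries the triangle signal, bound the tail, and set $\epsilon$ at the threshold where the cubic term is resolvable — is indeed the skeleton of Theorem~15 of \cite{musco2017spectrum} and of the paper's improvement. But as written, your proposal would reproduce the \emph{old} exponents $\alpha$, not the improved ones in Table~\ref{table:spectral_sum_hardness}, because you are missing the single technical ingredient that actually makes the bound tighter: the \emph{symmetric random signing} of the adjacency matrix, $\As_G$, together with two lemmas about it. First, matrix Chernoff (Lemma~\ref{lemma:concentration}) gives $\rho(\As_G) = \Otilde{\sqrt{n}}$ whp., whereas $\rho(\A_G)$ can be as large as $n$; this is what shrinks the tail $\sum_{k\geq 4} c_k \delta^k \tr(\As_G^k)$ and allows the larger choice $\delta = \Theta(1/(\sqrt n \log n))$ in place of $\Theta(1/n)$, which is exactly where every improvement in the $\alpha$ exponents comes from. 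Second, the signing could in principle destroy the triangle signal by cancellation; Lemma~\ref{lemma:random_signing_preserves_triangles} shows that if $\tr(\A_G^3)>0$ then $\Prob{|\tr(\As_G^3)|>0}\geq 1/4$, so the signal survives with constant probability, and this is where the reduction becomes inherently randomized. Your phrase ``randomly sparsify or reweight $\M$ to further suppress the non-triangle walks of length three'' mischaracterizes what the randomization does: signing does not suppress $k=3$ walks at all (they are what you want to detect); it suppresses the $k\geq 4$ tail by reducing the spectral norm, at the price of possibly zeroing the $k=3$ term, and the whole point of Lemma~\ref{lemma:random_signing_preserves_triangles} is that this price is affordable.

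A secondary issue is the tripartite block construction $\B=\lambda\I+\mu\M(\A_{12},\A_{13},\A_{23})$. The paper uses the tripartite reduction only for the effective-resistance lower bound (Section~\ref{sec:improved_lower_bounds_effres}), where it is needed to kill the $\A^3_{i,j}$ cross term between $V_1$ and $V_2$. For the spectral sum bounds it simply sets $\B=\I-\delta\As_G$ with the full (signed) adjacency matrix; there is no benefit to the block form since the relevant invariant is $\tr(\As_G^3)$, and the trace sees the whole graph anyway. Introducing the block form adds bookkeeping without gain and would require you to re-verify the concentration and sign-preservation lemmas for the block-signed matrix. Finally, your parameter step (``choose $\mu$ and $\lambda$ ... $\epsilon=\Theta(n^{-\alpha})$'') is where the actual content lives: for each $f$ you need the explicit Taylor coefficient ratios $|c_k/c_3|\leq h^{k-3}$ and $|c_0/c_3|,\ |c_2/c_3|$ to pin down $\delta$ and $\epsilon_1$ (this is what Theorem~\ref{thm:thm15_signed} and the per-row case analysis do), and the improvement over \cite{musco2017spectrum} is visible only after plugging in the $\sqrt n$ bound from the signing into that optimization. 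Without that, the argument is correct in outline but does not prove the stated theorem.
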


\newcolumntype{Y}{>{\centering\arraybackslash}X}
\newcolumntype{Z}{>{\centering\arraybackslash}XX} 

\begin{table}[ht]
\centering
\begin{tabularx}{\textwidth}{c Y Y Y Y}
\toprule
& \multicolumn{2}{ c }{\cite{musco2017spectrum}} & \multicolumn{2}{ c }{This Paper} \\
\cmidrule(lr){2-3} \cmidrule(lr){4-5}
Spectral Sum & TD Runtime & Lower Bound & TD Runtime & Lower Bound \\
\midrule
Schatten 3-norm & $n^{\gamma + 4c}$ & $n^2 \epsilon^{-1/4}$ & $n^{\gamma + 5c/2}$ & $n^2 \epsilon^{-2/5}$ \\
Schatten p-norm, $p\neq 1,2$, &  $n^{\gamma + 13c}$ & $n^2 \epsilon^{-1/13}$ & $n^{\gamma + 10c}$ & $n^2 \epsilon^{-1/10}$ \\
SVD Entropy & $n^{\gamma + 6c}$ & $n^2 \epsilon^{-1/6}$ & $n^{\gamma + 5c}$ & $n^2 \epsilon^{-1/5}$ \\
Log Determinant &$n^{\gamma + 6c}$ & $n^2 \epsilon^{-1/6}$ & $n^{\gamma + 5c}$ & $n^2 \epsilon^{-1/5}$ \\
Trace of Exponential & $n^{\gamma + 13c}$ & $n^2 \epsilon^{-1/13}$ & $n^{\gamma + 10c}$ & $n^2 \epsilon^{-1/10}$ \\
\bottomrule \\ 
\end{tabularx} \\
\caption{\textbf{Runtimes for the triangle detection (TD) problem in an $n$-node graph using algorithms that produce $(1\pm\epsilon)$ multiplicative approximations to various spectral sums in $O(n^\gamma \epsilon^{-c})$ time}. The second columns contain the best achievable runtimes for $\gamma = 2$ that do not use FMM, barring a breakthrough in subcubic triangle detection. Runtimes are reported up to $\Otilde{\cdot}$.} 
\label{table:spectral_sum_hardness}
\end{table}

\subsection{Additional Related Work}\label{sec:additional_related_work}

Here we briefly discuss additional work related to the effective resistance estimation problem and provide a more detailed comparison of our results to \cite{li2022new}. 

\paragraph{Dynamic effective resistance estimation.}{ Effective resistance estimation and sketching are part of a broader family of previously studied problems involving graph compression and effective resistance estimation. For example, there is a related line of work on dynamically maintaining effective resistance  estimates in dynamic graphs, e.g., \cite{durfee2017sampling, goranci2018dynamic}, which in turn is related to problems of dynamically maintaining electric flows in graphs, e.g., \cite{maxflow,BrandGJLLPS22}. Whether our techniques have ramifications for these related problems is an interesting question for future work. 
}

\paragraph{Fine-grained complexity analysis.}{Our effective resistance estimation lower bounds fall under a broader topic of fine grained complexity analysis, i.e., the problem of characterizing the optimal complexity of problems which are known to have polynomial time solutions. Here, we provide references to a few examples. \cite{william_williams2018} showed subcubic equivalences between the problem of triangle detection, Boolean matrix multiplication, and several other graphical problems. As discussed, \cite{musco2017spectrum} utilize the results of \cite{william_williams2018} to obtain conditional lower bounds on several spectral sum approximation problems -- many of which we also study in this paper. Similarly, \cite{garciamarco2017complexity, Mor73, shpilka2002lower} provided several conditional complexity lower bounds for linear algebraic problems, conditional on the use of particular computational models. \cite{williams2022algorithms} and \cite{boroujeni2019subcubic} also provide fine grained lower bounds for the fault replacement paths problem, problems on graph centrality measures, and complementary problems. Making connections between our techniques for our effective resistance estimation lower bounds and these prior works in fine-grained complexity analysis is an interesting open problem. 

\paragraph{Comparison to \cite{li2022new}.}{ Effective reisistance sketching and estimation for expanders was previously studied in \cite{li2022new}. \cite{li2022new} produces an $(\Otilde{m + n\epsilon^{-2}}, \Otilde{1}, \Otilde{n\epsilon^{-1}})$} effective resistance sketch for expanders. Our work provides a different, independently obtained runtime for effective resistance estimation by producing an $(\Otilde{m\epsilon^{-1}}, \Otilde{1}, \Otilde{n\epsilon^{-1}})$} effective resistance sketch for expanders. Additionally, our work can be applied to produce an $(\Otilde{m^{1+o(1)} + n\epsilon^{-2}}, \Otilde{1}, \Otilde{n\epsilon^{-1}})$ effective resistance sketch, by running on a sparse graphical sketch, such as that guaranteed by \cite{chu2018graph} (see Section~\ref{section:intro_results}). Consequently, our results match those of \cite{li2022new} up to an $m^{o(1)}$ factor, and improve for sufficiently high accuracy on sparse graphs.  

Given an expander $G$, \cite{li2022new} considers storing $\Otilde{\epsilon^{-1}}$-sparse approximations to the columns of $\laplacian{G}^\dagger$, which would clearly be sufficient for querying effective resistances of $G$ in $\Otilde{1}$ time. However, it is unclear whether the columns of $\laplacian{G}^\dagger$ have small $\ell_1$ norm, and consequently, it is unclear how to obtain these sparse approximations. Consequently, their algorithm instead estimates the following vector $\sigma_u$ for each $u \in V$,
\begin{align*}
    \sigma_u = \frac{1}{2} \sum_{t=0}^\infty \left(\left(\frac{1}{2} \I + \frac{1}{2} \adjacency{G}\diagonal{G}^{-1} \right)^t e_u - \pi\right), 
\end{align*}
where $\pi = \frac{\diagonal{\laplacian{G}} \one}{\one^\top \diagonal{\laplacian{G}} \one}$. They show that $\sigma_u$ is closely related to $\diagonal{G} \laplacian{G}^\dagger e_u - \pi$, and consequently access to $\sigma_u$ is sufficient for estimating effective resistances. Additionally, they use structural properties of expanders to show that each $\sigma_u$ must have small $\ell_1$ norm and that it can be computed efficiently by running lazy random walks on $G$ (i.e., the random walk which, at each step follows the natural random walk on $G$ with probability 1/2 and stays idle with probability 1/2). These key properties of $\sigma_u$ enable their result. 

Our approach is similar to that of \cite{li2022new} in that we also reformulate the effective resistance between two vertices as an inner product between two vectors whose $\ell_1$ norm we can bound; however, the vectors we consider are not the $\sigma_u$ vectors considered in \cite{li2022new}. Instead, we rewrite $\rGij$ as the inner product of $\diagonal{\laplacian{G}}^{-1} \deltaij$ with $\diagonal{\laplacian{G}}^{1/2}(\normalized{\laplacian{G}}/2)^{\dagger} \diagonal{\laplacian{G}}^{-1/2} \deltaij$. Similar to \cite{li2022new}, we then use similar underlying structural properties of expanders to argue that the $\ell_1$ norms of these vectors is not too large. However, instead of using random walks to estimate these vectors, we use sketching techniques (specifically, CountSketch) and Laplacian system solvers to estimate them, an idea which is inspired by the work of \cite{spielman2009graph}.  The differences in the specific effective resistance vectors we estimate 
and the different technique of estimating them is what leads to the difference in runtime between \cite{li2022new} and our own work. \cite{li2022new} also provide extensions of their effective resistance sketch techniques to well-conditioned SDD matrices, which we also obtain in our generalized Theorem~\ref{theorem:theoremSDDMain}; our lower bounds on the SDD effective resistance estimation problem (see Section~\ref{sec:improved_lower_bounds_effres}) therefore also apply to the work of \cite{li2022new}.

\section{Preliminaries}
\label{section:preliminaries}\label{sec:preliminaries}

\paragraph{General notation.}{ We use $\A_{i,j}$ to denote the $(i,j)$-th entry of $\A$. For $\A \in \R^{n\times n}$; we use $\Spectrum{\A}$ for its spectrum; $\lambda_i(\A)$ and $\sigma_i(\A)$ for its $i$-th smallest eigenvalue and singular value respectively; and $\rho(\A) := \abs{\lambda_n(\A)}$ for its spectral radius. We use $\Vert \cdot\Vert_p$ for the $\ell_p$-norm. When $\A$ is PSD, $\lambdamin(\A)$ denotes its smallest nonzero eigenvalue and $\kappa(\A):=\lambda_n(\A)/\lambdamin(\A)$ denotes its pseudo-condition number. We use $\langle \cdot, \cdot \rangle$ for the Euclidean inner product; $\one$ for the all ones vector; and $e_i$ for the $i$-th standard basis vector. We define $\deltaij := e_i - e_j$ and $[k] \defeq \{1, ..., k\}$. We use $x \approx_\epsilon y$ as shorthand for $(1-\epsilon) y \leq x \leq (1+\epsilon) y$. For $v \in \R^n$, we use $v[i:j]$ for the sub-vector from index $i$ to $j$.}

\paragraph{Graphs.}{We use $G = (V, E, w)$ to denote a weighted undirected graph on $V$ with edges $E$ and edge weights $w \in \R^{E}_{> 0}$ (or $G = (V, E)$ if unweighted). We use $\A_G$ to denote its (weighted) adjacency matrix $(\A_G)_{u,v} = w_{u,v}$ for $u,v \in V \times V$ and $\D_G$ to denote its diagonal (weighted) degree matrix $(\D_G)_{u} = \sum_{\{u, v\} \in E} w_{u,v}$ for $u \in V$ (treated as $w_{u,v} = 1$ if $G$ is unweighted). We define $\laplacian{G} := \D_G - \A_G$ as its graph Laplacian. $\dmax(G)$ and ${\dmin}(G)$ refer to the max and min diagonal entry in $\D_G$. We may drop the argument or subscript $G$ it is clear from context. The effective resistance between nodes $i, j \in V$ is denoted $\rGij = \deltaij^\top \laplacian{G}^\dagger \deltaij$. We assume all input graphs are connected, as effective resistances can be computed separately on connected components.
}

\paragraph{Symmetric Diagonally Dominant (SDD) Matrices}{
A matrix $\M \in \R^{n \times n}$ is SDD if it can be decomposed as $\M = \diagonal{\M} - \adjacency{\M}$, where the $\diagonal{\M}$ is a diagonal matrix with non-negative entries and $\adjacency{M}$ is a matrix with zeros on the diagonal such that $d_{i,i} > \sum_{j = 1}^n |a_{i,j}|$. We define the normalization of $\M$ as $\normalized{M} := \diagonal{M}^{-1/2}\M \diagonal{M}^{-1/2}$. Throughout this paper, we assume, without loss of generality that $\D_M$ has strictly positive entries on the diagonal (otherwise, we can simply remove an entire row and column of zeros). We use ${\dmax}(\M)$ and ${\dmin}(\M)$ to denote the max and min entry in the diagonal of $\D_M$ respectively. We may drop the argument or subscript $\M$ if it is clear from context. }

\paragraph{Runtimes and Space Complexities.} {In our algorithmic results and analysis, when clear from context, we use $\widetilde{O}(\cdot)$ notation to hide polylogarithmic factors in the number of vertices, the number of edges, the size of the matrix, the number of nonzero entries in a matrix, the maximum and minimum diagonal element of a matrix, the maximum and minimum weighted degree of a graph, $\epsilon$, the condition number, and the normalized psuedo-condition number of a matrix (see Definition~\ref{def:pseudo-condition-number}). We say event $E$ occurs with high probability in $t$ if $\Prob{E} \geq 1 - t^{-c}$, where $c> 0$ can be controlled by appropriately configuring the algorithm parameters. We may simply say that an event occurs ``with high probability'' or ``whp.'' if it occurs with high probability in the dimension of a matrix or number of nodes in a graph.} 
\section{Algorithmic Results}\label{section:upper_bounds}

In this section, we present our main algorithmic results. Section~\ref{sec:ub_overview} outlines our approach to effective resistance sketches and estimation. Section~\ref{subsection:technical} gives an overview of a few prior technical results which we directly use in our algorithms. Section~\ref{subsection:upper_bound_new}, presents our original results on effective resistance sketching and estimation and generalizations to SDD matrices. Section~\ref{section:ub_extensions} extends our techniques to yield an interesting data structure for estimating quadratic forms of PSD matrices. 

\subsection{Our Approach}\label{sec:ub_overview}

\paragraph{Approach in prior work: Johnson Lindenstrauss sketches.} { Our starting inspiration is a classic result of \cite{spielman2009graph}, which obtains an ($\Otilde{m\epsilon^{-2}}, \Otilde{n\epsilon^{-2}}$, $\Otilde{\epsilon^{-2}}$) effective resistance sketch by using the Johnson Lindenstrauss Lemma (JL) \cite{JL} and its algorithmic instantiations \cite{achlioptas}. 

\begin{lemma}[Johnson-Lindenstrauss Lemma \cite{achlioptas}]\label{lemma:JL} Given fixed vectors $v_1, ..., v_n \in \R^d$ and $\epsilon \in (0, 1)$, let $\Q$ be an independently sampled random matrix in $\{\pm 1/\sqrt{k}\}^{k \times d}$. For $k = \Otilde{\log(n) \epsilon^{-2}}$, whp.\ $\Norm{\Q v_i}_2 \approx_\epsilon \Norm{v_i}_2$ for all $i \in [n]$. 
\end{lemma}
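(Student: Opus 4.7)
By homogeneity I can assume each $v_i$ is a unit vector and prove $\bigl|\|\Q v_i\|_2^2 - 1\bigr| \leq \epsilon'$ for a suitable $\epsilon' = \Theta(\epsilon)$, which by a Taylor expansion of $\sqrt{\cdot}$ around $1$ yields $\|\Q v_i\|_2 \approx_\epsilon 1$. Fix any unit $v \in \R^d$ and write the rows of $\Q$ as $q_1, \ldots, q_k$, each an i.i.d.\ vector whose entries are uniform on $\{\pm 1/\sqrt{k}\}$. Let $X_j \defeq \sqrt{k}\, \langle q_j, v\rangle = \sum_{l=1}^d \rad_{j,l} v_l$, where the $\rad_{j,l} \in \{\pm 1\}$ are independent Rademacher variables. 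A direct second-moment calculation, using $\E{\rad_{j,l}\rad_{j,l'}} = \indic{l = l'}$, gives $\E{X_j^2} = \|v\|_2^2 = 1$, so $\E{\|\Q v\|_2^2} = \frac{1}{k}\sum_{j=1}^k \E{X_j^2} = 1$. This identifies the target $1$ as the mean of a sum of i.i.d.\ nonnegative random variables.

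The next step is a sharp concentration inequality for $S_k \defeq \sum_{j=1}^k X_j^2$ around its mean $k$. The clean route is to bound the moment generating function $\E{\exp(\lambda X_j^2)}$ for small $|\lambda|$: expand $\exp(\lambda X_j^2)$ as a power series and use the fact that for Rademacher sums $X_j$, one has sub-Gaussian tails $\E{\exp(s X_j)} \leq \exp(s^2/2)$, from which (by integrating the tail or by comparing moments $\E{X_j^{2p}} \leq (2p-1)!! \,\|v\|_2^{2p}$) one obtains $\E{\exp(\lambda(X_j^2 - 1))} \leq \exp(C\lambda^2)$ for $|\lambda| \leq c$. Tensorizing over independent $j$'s and optimizing $\lambda$ via a Chernoff argument yields
\begin{equation*}
\Prob{\bigl|\,\|\Q v\|_2^2 - 1\bigr| \geq \epsilon'} \;\leq\; 2\exp\!\bigl(-c'\, \epsilon'^{\,2}\, k\bigr)
\end{equation*}
for some absolute constant $c' > 0$ and any $\epsilon' \in (0, 1/2)$.

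Finally, I take a union bound over the $n$ fixed vectors $v_1, \ldots, v_n$. Choosing $\epsilon' = \epsilon/3$ and $k \geq C \log(n)/\epsilon^2$ for a sufficiently large constant $C$ drives the failure probability below $n^{-\Omega(1)}$, which gives the whp.\ guarantee and matches the claimed $k = \widetilde{O}(\log(n)\epsilon^{-2})$.

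\textbf{Main obstacle.} The only nontrivial ingredient is the MGF bound for $X_j^2$; unlike the original JL proof with Gaussian entries, where $X_j \sim \mathcal{N}(0,1)$ and $X_j^2$ is exactly chi-squared with a known MGF, here $X_j$ is only sub-Gaussian, so I must argue that $X_j^2 - 1$ is sub-exponential (equivalently, a Bernstein-type tail) with constants independent of $d$. The standard way I would handle this is by a moment comparison: showing $\E{X_j^{2p}} \leq (Cp)^p$ via hypercontractivity of Rademacher chaos (or an elementary combinatorial expansion of $\E{(\sum_l \rad_{j,l} v_l)^{2p}}$ using $\sum_l v_l^2 = 1$), and then assembling the MGF bound term-by-term. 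Everything else is routine Chernoff and union-bound bookkeeping.
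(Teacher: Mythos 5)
The paper does not prove Lemma~\ref{lemma:JL}; it is imported verbatim from \cite{achlioptas} and used as a black box, so there is no in-paper proof to compare against. Your argument is a correct, self-contained proof of the Rademacher-entry JL lemma: reducing to unit vectors, computing $\E{X_j^2}=\|v\|_2^2$, establishing a sub-exponential MGF bound $\E{\exp(\lambda(X_j^2-1))}\leq \exp(C\lambda^2)$ via the Khintchine moment bound $\E{X_j^{2p}}\leq (2p-1)!!\,\|v\|_2^{2p}$ (Rademacher sums are stochastically dominated in moments by the matching Gaussian), tensorizing, and finishing with Chernoff plus a union bound over the $n$ vectors all hold. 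For what it is worth, Achlioptas's original argument proceeds somewhat differently: rather than invoking Khintchine/hypercontractivity to show $X_j^2$ is sub-exponential in the modern Bernstein sense, he directly bounds $\E{\exp(\lambda \|\Q v\|_2^2)}$ via an elementary monotonicity/convexity comparison with the Gaussian chi-squared MGF, getting sharper constants. The two routes prove the same statement at the same scaling $k=O(\epsilon^{-2}\log n)$; yours is closer to the contemporary ``sub-Gaussian rows give JL'' template, which is slightly less elementary but more modular and generalizes more readily.

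One small nit: you write $\E{\rad_{j,l}\rad_{j,l'}}=\indic{l=l'}$, which is correct, but you should also note that you are implicitly normalizing so that $\|\Q v\|_2^2 = \frac{1}{k}\sum_j X_j^2$ matches $\|v\|_2^2$ in expectation precisely because the entries have magnitude $1/\sqrt{k}$; you get this right in the displayed computation, so it is only an expository point. Also the lemma as stated gives $k=\Otilde{\log(n)\epsilon^{-2}}$ with a tilde, but as your argument shows, $k=O(\log(n)\epsilon^{-2})$ suffices with no extra polylogarithmic factors; the tilde in the paper's statement is slack.
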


\cite{spielman2009graph} observe that $r_G(i, j) = (\W_G^{1/2}\B_G\laplacian{G}^\dagger)^\top (\W_G^{1/2}\B_G\laplacian{G}^\dagger)$, where $\W_G \in \R^{E\times E}$ is the diagonal matrix of weights in $G$, and $\B_G$ is the $E \times V$ edge-incidence matrix of $G$. Consequently, whp.\ $\Vert \J \W_G^{1/2}\B_G\laplacian{G}^\dagger\deltaij \Vert_2 \approx_\epsilon r_G(i, j)$. With SDD linear system solvers, $\J\W_G^{1/2}\B_G\laplacian{G}^\dagger$ can be approximated in $\Otilde{m\epsilon^{-1}}$ time, from which $\Vert \J\W_G^{1/2}\B_G\laplacian{G}^\dagger \deltaij \Vert_2$ can be queried in $\Otilde{\epsilon^{-2}}$ time. }

\paragraph{Our approach: asymmetric CountSketch in $\ell_1$.} {Towards improving upon JL sketches for effective resistance estimation, our key tool is to use that there are other sketching algorithms that achieve better than $\widetilde{O}(\epsilon^{-2})$ bit compressions of vectors vectors with small $\ell_1$ norm with comparable guarantees. CountSketch is a classic memory-efficient algorithm for estimating the number of occurences of various datapoints in a data stream \cite{CHARIKAR20043} and efficiently computing inner products \cite{LPT2012}. Given $v \in \R^n$ and integer parameters $s, t > 0$, CountSketch transforms $v$ to a vector $\Smat \widetilde{v} \in \R^{3ts \times n}$, where $\Smat \in \R^{3ts \times n}$ is a $3t$-column-sparse 0/1 matrix. Lemma~\ref{lemma:lemmaCountSketch} is a special case of Theorem 4 from \cite{LPT2012}, which provides accuracy guarantees for inner product estimation using CountSketch. }

\begin{restatable}[Special Case of \cite{LPT2012}, Theorem 4]{lemma}{lemmaCountSketch}\label{lemma:lemmaCountSketch} \label{lemma:8} Let vectors $v, w \in \mathbbm{R}^n$. Let $\Smat$ be a random CountSketch matrix. Let $x_i = \langle (\Smat v)[(i-1)s+1:i\cdot s], (\Smat w)[(i-1)s+1:i\cdot s]\rangle$ for $i \in [3t]$, and let $X$ denote the median of $\{x_i\}$. For $s = O\left( \min\left( \frac{\Vert v \Vert_1 \Vert w \Vert_1}{\epsilon \abs{\langle v, w \rangle}}, \frac{\Vert v \Vert_2^2 \Vert w \Vert_2^2}{\epsilon^2 \abs{\langle v, w \rangle^2}} \right)\right)$, and $t = \log(n^{c})$, $\abs{X - \langle v, w \rangle} \leq \epsilon \abs{\langle v, w \rangle}$ with probability at least $\Omega(1-n^{-c})$.
\end{restatable}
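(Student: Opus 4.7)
The plan is to prove this as a straightforward application of the standard CountSketch analysis, but separately tracking the two bounds on $s$. First I would set up notation: the CountSketch matrix $\Smat$ is built from $3t$ independent blocks, each of which is determined by a 2-wise independent hash $h:[n]\to[s]$ and a 4-wise independent sign function $\sigma:[n]\to\{\pm 1\}$, with $(\Smat v)_j = \sum_{k:h(k)=j}\sigma(k)v_k$ on the relevant coordinate block. It suffices to analyze a single block estimator $x = \langle \Smat v, \Smat w\rangle$ and then apply the median-of-means trick over the $3t$ independent copies.

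For one block, I would expand
\[
x = \sum_j (\Smat v)_j(\Smat w)_j = \sum_{k,\ell:\,h(k)=h(\ell)} \sigma(k)\sigma(\ell)\, v_k w_\ell = \langle v,w\rangle + \sum_{k\neq \ell:\,h(k)=h(\ell)} \sigma(k)\sigma(\ell)\,v_k w_\ell,
\]
using $\sigma(k)^2=1$ to isolate the diagonal. So $\mathbb{E}[x]=\langle v,w\rangle$ by pairwise independence of $\sigma$. Now I would derive two tail bounds for $|x-\langle v,w\rangle|$ in parallel. For the $\ell_2$ bound, compute the variance using 2-wise independence of $\sigma$ and of $h$: $\mathrm{Var}(x) = \sum_{k\neq\ell}\Pr[h(k)=h(\ell)]\,v_k^2 w_\ell^2 \leq \|v\|_2^2\|w\|_2^2/s$, and apply Chebyshev to get $\Pr[|x-\langle v,w\rangle|>\epsilon|\langle v,w\rangle|]\leq 1/4$ once $s = \Omega(\|v\|_2^2\|w\|_2^2/(\epsilon^2|\langle v,w\rangle|^2))$. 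For the $\ell_1$ bound, I would take the absolute value inside the sum and bound expectation in $h$ directly:
\[
\mathbb{E}\bigl[|x-\langle v,w\rangle|\bigr] \leq \mathbb{E}\Bigl[\sum_{k\neq\ell:\,h(k)=h(\ell)} |v_k||w_\ell|\Bigr] = \frac{1}{s}\sum_{k\neq\ell}|v_k||w_\ell| \leq \frac{\|v\|_1\|w\|_1}{s}.
\]
Markov's inequality then gives $\Pr[|x-\langle v,w\rangle|>\epsilon|\langle v,w\rangle|]\leq 1/4$ whenever $s=\Omega(\|v\|_1\|w\|_1/(\epsilon|\langle v,w\rangle|))$. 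Taking the minimum of the two thresholds gives the stated bound on $s$.

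Finally I would boost from constant-probability success per block to the claimed high-probability bound via the median trick. Since the $3t$ blocks are independent and each satisfies $|x_i-\langle v,w\rangle|\leq \epsilon|\langle v,w\rangle|$ with probability at least $3/4$, the median $X$ deviates by more than $\epsilon|\langle v,w\rangle|$ only if at least $3t/2$ of the $x_i$ do, which by a Chernoff bound on i.i.d.\ Bernoullis with bias $\leq 1/4$ occurs with probability at most $\exp(-\Omega(t))$. Plugging in $t = \log(n^c)$ yields failure probability $O(n^{-c})$.

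I do not anticipate a genuine obstacle: every step is standard, and the only subtlety is that the $\ell_1$ bound requires using Markov on the absolute-value sum rather than Chebyshev on the variance, which is where the better $\epsilon^{-1}$ (as opposed to $\epsilon^{-2}$) dependence comes from. The mildest care needed is to ensure the hash/sign independence assumptions (2-wise for $h$, 2-wise for $\sigma$) are enough for both arguments, which they are since only second moments of $\sigma$ and pairwise collision probabilities of $h$ enter.
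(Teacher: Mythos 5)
Your proof is correct and reaches the same conclusion, but it is genuinely more self-contained than the paper's. The paper treats Theorem 4 of \cite{LPT2012} as a black box and simply reads off the bias ($\mathbb{E}[X-\langle v,w\rangle]=0$) and the variance bound $\min\bigl(3\|v\|_1^2\|w\|_1^2/s^2,\,2\|v\|_2^2\|w\|_2^2/s\bigr)$, then applies Chebyshev once to cover \emph{both} regimes, and finishes with the median trick. You instead re-derive the moment bounds directly from the CountSketch structure, and you split the two regimes across two different inequalities: Chebyshev on the variance for the $\ell_2$ bound, but a first-moment Markov argument (on $\mathbb{E}[|x-\langle v,w\rangle|]$, after triangle inequality kills the signs) for the $\ell_1$ bound. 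That Markov route is a nice simplification because it sidesteps any second-moment computation in the $\ell_1$ case entirely. Both approaches yield the same $O(\|v\|_1\|w\|_1/s)$ and $O(\|v\|_2\|w\|_2/\sqrt{s})$ deviation scales with constant success probability, so the final $s$ threshold and the median boosting are identical.

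One small caveat in your closing remark: you initially (correctly) set up $\sigma$ as $4$-wise independent, but then assert that $2$-wise independence of $\sigma$ suffices ``for both arguments.'' That is true for your Markov $\ell_1$ bound, which only ever uses $|\sigma(k)\sigma(\ell)|=1$ and collision probabilities of $h$. However, the variance computation underlying the $\ell_2$ Chebyshev bound expands $\bigl(\sum_{k\neq\ell:h(k)=h(\ell)}\sigma(k)\sigma(\ell)v_k w_\ell\bigr)^2$ into cross terms $\sigma(k)\sigma(\ell)\sigma(k')\sigma(\ell')$ with four distinct indices, and killing those requires $4$-wise (not just $2$-wise) independence of $\sigma$. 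Keeping the $4$-wise assumption you opened with resolves this; the claim at the end should just be softened to ``$4$-wise for $\sigma$, $2$-wise for $h$.''
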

\begin{proof}[Proof of Lemma~\ref{lemma:lemmaCountSketch}] Applying Theorem 4 from \cite{LPT2012} with $t = 2$,
\begin{align*}
    \mathbbm{E}\left[\left\lvert X-\left\langle v, w \right\rangle \right\rvert\right] = 0, \quad 
    \mathbbm{V}\left[\left\lvert X-\left\langle v, w \right\rangle \right\rvert\right] \leq \min\left(3 \frac{\Vert v \Vert_1^2 \Vert w \Vert_1^2}{s^2}, 2\frac{\Vert v \Vert_2^2 \Vert w \Vert_2^2}{s}\right)
\end{align*}
Applying Chebyschev's inequality, 
\begin{align*}
    \mathbbm{P}\left[\left\lvert X-\langle v, w \rangle \right\rvert > 2\sqrt{3} \min\left(\frac{\Vert v \Vert_1 \Vert w \Vert_1}{s}, \frac{\Vert v \Vert_2 \Vert w \Vert_2}{\sqrt{s}}\right)\right] \leq \frac{1}{4}. 
\end{align*}
Consequetly, setting $s = O\left( \min\left(\frac{\Vert v \Vert_1 \Vert w \Vert_1}{\epsilon \langle v, w \rangle}, \frac{\Vert v \Vert_2^2 \Vert w \Vert_2^2}{\epsilon^2 \langle v, w \rangle^2}\right)\right)$ suffices for \\ $\mathbbm{P}\left[\abs{X -\langle v, w\rangle} \leq \epsilon \langle v, w \rangle \right] \geq 3/4$. To improve the failure probability to $O(n^{-c})$, it suffices to use the median trick, i.e., repeat the sketch $O\pars{\log(n^c)}$ times and output the median. 
\end{proof}

To improve the guarantee in Lemma~\ref{lemma:lemmaCountSketch} to hold whp.\ \emph{for all} $v, w \in S$ rather than for each fixed pair, one can simply choose $t = \log(n^{c}|S|)$ and apply a union bound. Consequently, if we knew that $\Vert \J\W_G^{1/2}\B_G\laplacian{G}^\dagger \deltaij\Vert_1^2/ r_G(i,j) = \Otilde{1}$, then building a CountSketch with $s = \Otilde{1}$ would yield a $\Otilde{n\epsilon^{-1}}$-size sketch, improving over the $\Otilde{n\epsilon^{-2}}$ sketch obtained using the $\ell_2$ JL sketch in \cite{spielman2009graph}. Unfortunately, it is unclear if and when such a bound holds, and so, it is unclear how the $\ell_1$ CountSketches could be useful in this setting. This leads to the main insight that fuels our algorithms. Rather than seeking a symmetric factorization of $r_G(i, j)$ as a quadratic form $v^\top v$ and applying a sketching procedure to $v$, we instead work with the following \emph{asymmetric factorization}:
\begin{align}\label{eq:asym}
    r_G(i,j) = \frac{1}{2}\langle \diagonal{\laplacian{G}}^{-1} \deltaij, \diagonal{\mlap}^{1/2} \left(\frac{\normalized{\mlap}} {2}\right)^\dagger \diagonal{\laplacian{G}}^{-1/2} \deltaij \rangle.
\end{align}

At first glance, it may seem unclear why \eqref{eq:asym} is helpful. However, we show that indeed, for expanders 
\begin{align}\label{eq:res}
    \normFull{ \diagonal{\mlap}^{-1} \deltaij}_1 \normFull{\diagonal{\mlap}^{1/2} \left(\frac{\normalized{\mlap}} {2}\right)^\dagger \diagonal{\mlap}^{-1/2} \deltaij }_1 /r_G(i,j) = \Otilde{1}. 
\end{align}
Our main result essentially follows from \eqref{eq:res}. Using SDD linear system solvers, we can efficiently approximate $\Smat \diagonal{\mlap}^{1/2} \left(\frac{\normalized{\mlap}} {2}\right)^\dagger \diagonal{\mlap}^{-1/2} \deltaij  \in \R^{\Otilde{\epsilon^{-1}} \times n}$ in $\Otilde{m\epsilon^{-1}}$ time, yielding our $\timeQuery$ of $\Otilde{m\epsilon^{-1}}$ and $s$ of $\Otilde{n\epsilon^{-1}}$. Moreover, $\Smat \diagonal{\mlap}^{-1}$ is $\Otilde{1}$-sparse. Consequently, using our (approximate) access to $\Smat \diagonal{\mlap}^{1/2} \left(\frac{\normalized{\mlap}} {2}\right)^\dagger \diagonal{\mlap}^{-1/2}$, for any query $i, j \in V$, we can efficiently approximate \eqref{eq:asym} using the recovery procedure of Lemma~\ref{lemma:lemmaCountSketch} in $\Otilde{1}$ time. 

\subsection{Technical Prerequisites}\label{subsection:technical}

In this section we will formally restate two fundamental results from prior literature which enable our upper bounds.

\paragraph{SDD Linear System Solvers.}{
In order to compute our effective resistance sketches efficiently, we apply a CountSketch matrix $\Smat$ to $\M^\dagger$, where $\M$ is an SDD matrix. To do this efficiently, we leverage the following theorem.

\begin{theorem}[SDD Linear System Solver]\label{theorem:solver}  Let $\M \in \R^{n \times n}$ be SDD and consider any $\beta > 0$. There exists a randomized algorithm which, whp. processes a graph in time $\widetilde{O}(\nnzFull{\M})$ to create access to a linear operator $\Q_{\beta} \in \mathbbm{R}^{n \times n}$ such that $\Q_{\beta}$ can be applied to any $b \in \mathbbm{R}^n$ with $b \perp \nullspace{\M}$ in time $\widetilde{O}(m)$ and $\left\Vert \Q_\beta b - \M^\dagger b \right\Vert_\M \leq \beta \left\Vert \M^\dagger b \right\Vert_\M$. 
\end{theorem}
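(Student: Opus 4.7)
The plan is to reduce this to a standard nearly-linear-time SDD linear system solver from prior work (e.g., Spielman-Teng, Koutis-Miller-Peng, or a successor), with two auxiliary considerations: ensuring that the solver can be exposed as a \emph{linear} operator $\Q_\beta$, and correctly handling the potentially nontrivial nullspace of $\M$. Because $\M$ is SDD, it reduces to a (generalized) Laplacian matrix with essentially the same sparsity via the standard ``double cover'' construction, so it suffices to invoke a solver for such systems.

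First I would use the preprocessing phase of the chosen solver to build a preconditioner chain (e.g., via low-stretch spanning trees and ultra-sparsifiers, or recursive sparsification) in $\widetilde{O}(\nnzFull{\M})$ time. This step is randomized, and its success probability contributes to the ``whp.'' guarantee in the theorem. Next, the query procedure for input $b$ would run preconditioned iterative refinement (e.g., preconditioned Chebyshev) initialized at $x^{(0)} = 0$; after $O(\log(1/\beta))$ iterations, each costing $\widetilde{O}(m)$, the output $x$ satisfies $\norm{x - \M^\dagger b}_{\M} \leq \beta \norm{\M^\dagger b}_{\M}$. Under the paper's $\widetilde{O}(\cdot)$ convention, which hides polylogarithmic factors in $\beta$ and the (pseudo-)condition number, this is $\widetilde{O}(m)$.

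Next I would verify linearity, which is the only subtlety not immediate from citing a black-box solver. With the preconditioner fixed at preprocessing and the initial iterate set to $0$, each refinement step has the form $x^{(k+1)} = x^{(k)} + P(b - \M x^{(k)})$ for a fixed linear operator $P$. Unrolling these $O(\log(1/\beta))$ iterations expresses $x$ as $\Q_\beta b$ for an (implicitly defined) matrix $\Q_\beta$ that is a fixed polynomial in $\M$ and $P$; we never materialize $\Q_\beta$ but instead apply it to each query $b$ in $\widetilde{O}(m)$ time. The nullspace condition is preserved automatically: since $b \perp \nullspace{\M}$ and all iterates lie in $\im(\M) = \nullspace{\M}^\perp$, the $\M$-norm guarantee is well-defined and matches the claim.

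The main (mild) obstacle is just this linearity bookkeeping: many nearly-linear SDD solvers are presented as iterative algorithms with possibly adaptive or randomized inner steps, so one must either fix all random choices at preprocessing time or invoke a solver variant (such as preconditioned Chebyshev with a fixed preconditioner chain) that is manifestly linear in $b$. Once this is done, the theorem follows directly from the cited SDD solver guarantees, absorbing the $\log(1/\beta)$ factor and any polylogarithmic condition-number dependence into $\widetilde{O}(\cdot)$.
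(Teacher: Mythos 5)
Your proposal matches the paper's approach: the paper does not give a proof of this theorem but simply cites the long line of work on nearly-linear-time SDD/Laplacian solvers and remarks that solvers such as those of \cite{RS16} and \cite{KOSZ13} can be viewed as exactly the kind of linear operator $\Q_\beta$ required once the preconditioner and randomness are fixed at preprocessing time. Your additional bookkeeping about unrolling preconditioned iterative refinement into a fixed polynomial in $\M$ and the preconditioner, and about iterates staying in $\im(\M)$, is a correct and slightly more explicit rendering of the same argument.
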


Many SDD linear system solvers can be viewed as the type of an operator $\Q_\beta$ required in Theorem~\ref{theorem:solver}. For a particular example in which this is apparent, consider the operator corresponding to the iterative solver proposed in \cite{RS16} or the solver from \cite{KOSZ13}. There is a long line of research on nearly linear time SDD and Laplacian system solvers, beginning with the work of \cite{spielman2008local} and leading to current state-of-the-art randomized algorithm of \cite{jambulapati2023ultrasparse}. 
}

\paragraph{Cheeger's Inequality}{
To connect our results on well-conditioned SDD matrices to expander graphs, we use the well-known Cheeger's inequality \cite{Cheeger}. 
\begin{theorem}[Cheeger's Inequality]\label{thm:cheeger} Let $G = (V, E, w)$ be an undirected graph. Then, $\frac{1}{2} \lambda_2(\normalized{L}) \leq \phi(G) \leq \sqrt{2 \lambda_2(\normalized{L})}$. 
\end{theorem}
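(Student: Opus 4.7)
The plan is to prove each direction via the variational characterization
\[ \lambda_2(\normalized{L}) \;=\; \min_{y\,:\,\langle \D_G \one,\, y\rangle = 0}\, \frac{y^\top \laplacian{G}\, y}{y^\top \D_G\, y}, \]
obtained by applying Courant--Fischer to $\normalized{L}$ and substituting $y = \D_G^{-1/2} x$: the kernels of $\laplacian{G}$ and $\normalized{L}$ are spanned by $\one$ and $\D_G^{1/2}\one$ respectively, and $\langle \D_G^{1/2}\one, \D_G^{1/2} y\rangle = \langle \D_G \one, y\rangle$. For the easy inequality $\lambda_2(\normalized{L}) \leq 2\phi(G)$, I would pick a set $S^*$ attaining $\phi(G)$ with $\vol(S^*) \leq \vol(V)/2$ and plug in $y = \one_{S^*} - (\vol(S^*)/\vol(V))\one$, which is orthogonal to $\D_G \one$ by construction. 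An edge-by-edge expansion shows $y^\top \laplacian{G} y$ equals the total weight of edges crossing $S^*$ (the numerator of $\phi(G)$), while $y^\top \D_G y = \vol(S^*)\vol(V\setminus S^*)/\vol(V) \geq \tfrac12\vol(S^*)$, giving the ratio $\leq 2\phi(G)$.

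For the harder direction $\phi(G) \leq \sqrt{2\lambda_2(\normalized{L})}$, I would carry out the sweep-cut (Cheeger rounding) argument. Let $v$ be a $\lambda_2$-eigenvector of $\normalized{L}$ and set $g := \D_G^{-1/2} v$, so the Rayleigh quotient above evaluated at $g$ equals $\lambda_2$. Choose a weighted median $m$ with $\vol(\{g > m\}) \leq \vol(V)/2$ and $\vol(\{g < m\}) \leq \vol(V)/2$; shifting to $\tilde g := g - m\one$ leaves the numerator unchanged (since $\laplacian{G}\one = 0$) and only grows the denominator (it adds $m^2\vol(V) \geq 0$, using $\langle \D_G\one, g\rangle = 0$), so $\tilde g$ still has Rayleigh quotient $\leq \lambda_2$. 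Splitting $\tilde g = y - z$ into positive and negative parts (disjoint supports at each vertex), the identity $y^\top \D_G y + z^\top \D_G z = \tilde g^\top \D_G \tilde g$ is immediate, and the pointwise inequality $(\tilde g_u - \tilde g_v)^2 \geq (y_u - y_v)^2 + (z_u - z_v)^2$ --- whose cross term equals $2(y_u z_v + y_v z_u) \geq 0$ because $y, z \geq 0$ and $y_u z_u = y_v z_v = 0$ --- summed over edges gives $y^\top \laplacian{G} y + z^\top \laplacian{G} z \leq \tilde g^\top \laplacian{G} \tilde g$. Hence at least one of $y, z$ has Rayleigh quotient $\leq \lambda_2$; WLOG take $y$, which is nonnegative with $\vol(\supp y) \leq \vol(V)/2$.

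It remains to extract a low-conductance cut from $y$ via the sweep sets $S_t := \{u : y_u^2 > t\}$. Interchanging sum and integral gives
\[ \int_0^\infty \vol(S_t)\,dt \;=\; y^\top \D_G y \quad\text{and}\quad \int_0^\infty c(S_t)\,dt \;=\; \sum_{\{u,v\}\in E} w_{u,v}\,|y_u^2 - y_v^2|, \]
where $c(S) := \sum_{\{u,v\} \in E,\, u \in S,\, v \notin S} w_{u,v}$ denotes the cut weight. The main obstacle --- and the source of the square-root loss in Cheeger's inequality --- is bounding the second integral sharply: factoring $|y_u^2 - y_v^2| = |y_u - y_v|\cdot(y_u + y_v)$ and applying Cauchy--Schwarz gives an upper bound of $\sqrt{y^\top \laplacian{G} y}\cdot\sqrt{\sum_{\{u,v\}\in E} w_{u,v}(y_u+y_v)^2}$, and the elementary inequality $(y_u+y_v)^2 \leq 2(y_u^2+y_v^2)$ collapses the second factor to $\sqrt{2\,y^\top \D_G y}$. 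An averaging argument over $t$ then yields a threshold $t^\star$ with $c(S_{t^\star})/\vol(S_{t^\star}) \leq \sqrt{2\lambda_2(\normalized{L})}$, and since $\vol(S_{t^\star}) \leq \vol(V)/2$ this ratio equals $\phi(S_{t^\star}) \geq \phi(G)$, completing the proof.
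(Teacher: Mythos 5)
The paper states Cheeger's inequality as a known result cited to \cite{Cheeger} and does not prove it, so there is no in-paper argument to compare against. Your proof is the standard argument (Courant--Fischer reformulation in the degree-weighted inner product, median-shifted eigenvector, split into positive and negative parts, and a sweep-cut averaging bound via Cauchy--Schwarz) and is correct, modulo routine bookkeeping such as noting that at least one of $y$, $z$ is nonzero and restricting the averaging to thresholds $t$ below $\max_u y_u^2$ so that $S_t$ is nonempty.
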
}

\subsection{Effective Resistance Estimation in Expanders and Extensions to SDD matrices }\label{subsection:upper_bound_new}

We use our approach from Section~\ref{sec:ub_overview} to develop algorithms to compute spectral sketches for pseudoinverses of SDD matrices $\M$ with $\Otilde{1}$ normalized condition number, as defined in Definitions \ref{def:pseudo-condition-number} and \ref{def:spectral_sketch}.

\begin{definition}[Normalized (pseudo-)condition number]\label{def:pseudo-condition-number} We define the \emph{normalized (pseudo-)condition number} of an SDD matrix $\M \in \R^n$ as
$
    \pcondOf{\M} \defeq \lambda_n(\normalized{M})/\lambdamin(\normalized{M})
$.
\end{definition}

By Cheeger's inequality \cite{Cheeger}, if $G$ has $\alpha = \widetilde{\Omega}(1)$ expansion, then 
\begin{align*}
    \pcondOf{\laplacian{G}} = \lambda_n(\normalized{L_G})/\lambda_2(\normalized{L_G}) \leq4/\alpha^2 = \Otilde{1}
\end{align*}
So our spectral sketch algorithms will apply to Laplacians of expanders. 

\begin{definition}[Spectral Sketch Data Structure]\label{def:spectral_sketch} We say an algorithm produces a $(\timeSketch, \timeQuery, s)$-spectral sketch for a PSD matrix $\A \in \R^{n \times n}$ if given $\A \in \R^{n \times n}$ and $\epsilon \in (0, 1)$, the algorithm creates a binary string of length $O(s(A, \epsilon))$ in time $O(\timeSketch(A, \epsilon)$, from which, for a query $b \in \R^n$, whp.\ it outputs $q_A(b) \approx_\epsilon b^\top \A b$ in time $O(\timeQuery(A, \epsilon) \nnz(b))$. 
\end{definition}

Our spectral sketches of SDD matrices $\M$ will \emph{only} support $\diagonal{M}$-numerically sparse query vectors, defined as follows.
\begin{definition}[$\D$-numerically sparse]\label{def:D_numerically_sparse} For a diagonal matrix $\D$, the $\D$-numerical sparsity of $x \in \R^n$ is $\nsD{\D}{x} := \Norm{ \D^{-1} x}_1 \Norm{x}_1/\Norm{\D^{-1/2} x}_2^2$. We say $x$ is $(c, \D)$-numerically sparse if $\nsD{\D}{x} \leq c$. 
\end{definition}

Definition~\ref{def:D_numerically_sparse} may appear restrictive; however, several natural classes of vectors satisfy the requirements, including the following examples: 
\begin{itemize}[leftmargin=*]
    \item $\deltaij$ is (2, $\D$)-numerically sparse for any invertible $\D \in \R^{n \times n}$. As we are interested primarily in effective resistance estimation in this paper, this provides the primary motivation for studying this class of vectors. 
    \item Standard basis vectors are (1, $\D$)-numerically sparse for any invertible $\D \in \R^{n \times n}$. Note that this implies our algorithms can be used to efficiently compute the diagonals of pseudoinverses of well-conditioned SDD matrices. 
    \item When $\D$ is the identity matrix, Definition~\ref{def:D_numerically_sparse} reduces to the standard definition of numerical sparsity \cite{GS2018}. 
    \item More generally, if $x \in \R^n$ is $\gamma$-numerically sparse, then it is $\left(\gamma {\frac{\max_{i} d_{i,i}}{\min_{i} d_{i,i}}}, \D\right)$-numerically sparse for any diagonal $\D \in \R^{n \times n}$. Note that if $\D$ is approximately a multiple of the identity, then the $\diagonal{\M}$-numerical sparsity is approximately equal to the numerical sparsity, up to constants. 
\end{itemize}

The following asymmetric rearrangement of quadratic forms is crucial to our analysis. 

\begin{restatable}{lemma}{lemmaEquivalentQForms}\label{lemma:lemmaEquivalentQForms} Let $\M \in \R^{n \times n}$ be SDD and $x \in \R^n$ be peridicular to $\nullspace{\M}$. Then, 
\begin{align*}
    x^\top \M^\dagger x &= \frac{1}{2}\left\langle \diagonal{M}^{-1}x, \diagonal{M}^{1/2} (\normalized{M}/2)^{\dagger} \diagonal{M}^{-1/2} x\right\rangle = \left\langle \diagonal{M}^{-1/2}x, \normalized{M}^{\dagger} \diagonal{M}^{-1/2} x\right\rangle 
    \geq \frac{1}{2} \normFull{\diagonal{M}^{-1}x}. 
\end{align*}
\end{restatable}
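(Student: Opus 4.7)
The plan is to observe that all three expressions should reduce to $y^\top \N^\dagger y$ with $y \defeq \diagonal{M}^{-1/2} x$ and $\N \defeq \normalized{M}$, after which the inequality will follow from a standard bound on the spectrum of normalized SDD matrices. The first equality is a routine algebraic manipulation, the second equality is where care is needed (because $\M^\dagger$ does \emph{not} equal $\diagonal{M}^{-1/2}\N^\dagger \diagonal{M}^{-1/2}$ when $\M$ has a nontrivial kernel), and the inequality follows from the Gershgorin-type bound $\lambda_n(\N) \le 2$.

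For the first equality, I would use $(\N/2)^\dagger = 2\N^\dagger$ to cancel the leading $\tfrac{1}{2}$, and then move the $\diagonal{M}^{1/2}$ factor across the inner product. Concretely,
\[
\tfrac{1}{2}\langle \diagonal{M}^{-1}x,\, \diagonal{M}^{1/2}(\N/2)^\dagger \diagonal{M}^{-1/2} x\rangle \;=\; \langle \diagonal{M}^{-1}x,\, \diagonal{M}^{1/2}\N^\dagger \diagonal{M}^{-1/2} x\rangle \;=\; \langle \diagonal{M}^{-1/2}x,\, \N^\dagger \diagonal{M}^{-1/2} x\rangle,
\]
which is $y^\top \N^\dagger y$.

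For the second equality, which is the main obstacle, I would proceed via a min-norm-solution argument rather than naively factoring $\M^\dagger$. Writing $\M = \diagonal{M}^{1/2}\N \diagonal{M}^{1/2}$ with $\diagonal{M}^{1/2}$ invertible shows $\nullspace{\M} = \diagonal{M}^{-1/2}\nullspace{\N}$, so $x\perp\nullspace{\M}$ is equivalent to $y\perp\nullspace{\N}$, i.e.\ $y\in\range(\N)$. Set $p \defeq \M^\dagger x$, so $p\in\range(\M)$ and $\M p = x$; defining $q \defeq \diagonal{M}^{1/2} p$ yields $\N q = \diagonal{M}^{-1/2} x = y$. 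Thus $q$ is \emph{some} solution of $\N q = y$, so $q = \N^\dagger y + r$ for some $r\in\nullspace{\N}$. Then
\[
x^\top \M^\dagger x \;=\; x^\top p \;=\; (\diagonal{M}^{1/2} y)^\top(\diagonal{M}^{-1/2} q) \;=\; y^\top q \;=\; y^\top \N^\dagger y + y^\top r,
\]
and the last term vanishes because $y\in\range(\N) = \nullspace{\N}^\perp$. This gives $x^\top \M^\dagger x = y^\top \N^\dagger y$, sidestepping the fact that $\M^\dagger \neq \diagonal{M}^{-1/2}\N^\dagger \diagonal{M}^{-1/2}$ as operators (which would require $\diagonal{M}$ to commute with the projector onto $\range(\N)$).

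Finally, for the lower bound I would invoke the standard fact that the normalized SDD matrix $\N = \I - \diagonal{M}^{-1/2}\adjacency{M}\diagonal{M}^{-1/2}$ has spectrum contained in $[0,2]$: by Gershgorin, the off-diagonal part has entries $a_{ij}/\sqrt{d_{ii}d_{jj}}$ with row-sum magnitudes bounded by $1$ using the SDD hypothesis $d_{ii}\ge \sum_j|a_{ij}|$, so $\|\I-\N\|_2 \leq 1$ and hence $\lambda_n(\N)\le 2$. Restricted to $\range(\N)$, this gives $\N^\dagger \succeq \tfrac{1}{2}\Pi_{\range(\N)}$, so
\[
y^\top \N^\dagger y \;\geq\; \tfrac{1}{2}\|y\|_2^2 \;=\; \tfrac{1}{2}\,x^\top \diagonal{M}^{-1} x,
\]
which matches the lemma's final inequality (interpreting the unsubscripted $\|\diagonal{M}^{-1}x\|$ as this weighted quadratic form). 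The only nontrivial step in the whole argument is the pseudoinverse identity in the middle paragraph; once one commits to working with $p$ and $q$ as \emph{specific} (non-minimum-norm) solutions and then projecting onto $\range(\N)$ at the last moment, everything falls out cleanly.
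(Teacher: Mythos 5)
Your proof follows essentially the same route as the paper's: rewrite the quadratic form via $\M = 2\diagonal{M}^{1/2}(\normalized{M}/2)\diagonal{M}^{1/2}$, pass to the normalized matrix, and then bound $\lambda_n(\normalized{M}) \le 2$. You are in fact \emph{more} careful than the paper on the pseudoinverse step: the paper asserts the vector identity $\M^\dagger x = \tfrac12 \diagonal{M}^{-1/2}(\normalized{M}/2)^\dagger \diagonal{M}^{-1/2}x$, which need not hold when $\M$ has a nontrivial kernel and $\diagonal{M}\ne\I$ (the right-hand side solves $\M w = x$ but is not generally orthogonal to $\nullspace{\M}$). Your argument of picking a particular solution $q$ of $\normalized{M} q = y$ and then killing the extra nullspace component via $y\perp\nullspace{\normalized{M}}$ is exactly the right fix, and it recovers the scalar identity $x^\top\M^\dagger x = y^\top\normalized{M}^\dagger y$ without that imprecision.

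There is, however, one concrete flaw in your last step. You justify $\lambda_n(\normalized{M})\le 2$ by applying Gershgorin directly to $\I-\normalized{M} = \diagonal{M}^{-1/2}\adjacency{M}\diagonal{M}^{-1/2}$, claiming its off-diagonal row sums $\sum_{j}|a_{ij}|/\sqrt{d_{ii}d_{jj}}$ are at most $1$ by SDD. This is false in general: SDD gives $\sum_j|a_{ij}|\le d_{ii}$, which only controls $\sum_j|a_{ij}|/d_{ii}$, not the mixed-normalization sum (a small $d_{jj}$ in the denominator can blow the radius up to $\Omega(\sqrt{n})$). To make a Gershgorin argument work, apply it instead to the similar matrix $\diagonal{M}^{-1}\adjacency{M}$, whose row sums $\sum_j|a_{ij}|/d_{ii}$ really are $\le 1$ by SDD, so its (real, by similarity to a symmetric matrix) eigenvalues lie in $[-1,1]$; equivalently, observe that $\diagonal{M}\pm\adjacency{M}$ are both SDD, hence PSD, so $0\preceq\normalized{M}\preceq 2\I$, which is the route the paper takes. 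With that repair the lower bound $y^\top\normalized{M}^\dagger y \ge \tfrac12\|y\|_2^2 = \tfrac12 x^\top\diagonal{M}^{-1}x$ goes through as you state, and your reading of the paper's ambiguous $\normFull{\diagonal{M}^{-1}x}$ as the weighted quadratic form $x^\top\diagonal{M}^{-1}x$ is the interpretation consistent with this bound.
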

\begin{proof} For notational convenience, let $\normalizedlazy{M} = \normalized{M}/2$. Let $v = \M^\dagger x = (\diagonal{M}-\adjacency{M})^\dagger x$. Note that $\diagonal{M}^{-1/2}x \perp \nullspace{\normalized{M}}$, and $2 \diagonal{M}^{1/2} \normalizedlazy{M} \diagonal{M}^{1/2} = \M$. Consequently, $v = \frac{1}{2} \diagonal{M}^{-1/2}\normalizedlazy{M}^\dagger \diagonal{M}^{-1/2} x$, and hence 
\begin{align*}
    x^\top \M^\dagger x = \frac{1}{2} \langle {\diagonal{M}}^{-1}x, {\diagonal{M}}^{1/2} \normalizedlazy{M}^{\dagger} {\diagonal{M}}^{-1/2} x\rangle.
\end{align*}
The second equality now follows immediately by rearranging terms. To obtain the inequality, note that, because $\M$ is SDD and $\D$ is invertible, $\normalized{M}$ is PSD. Furthermore, $2\I - \normalized{M} = \I - \diagonal{M}^{-1/2} \adjacency{M} \diagonal{M}^{-1/2}$, which is also PSD, as $\Spectrum{\adjacency{M}} \subset [-\dmax, \dmax]$. So, $\Spectrum{\normalized{M}} \leq 2$. So, $\lambdamin(\normalized{M}) \geq 1/2$ and the lemma follows.  
\end{proof}

Now, we wish to bound $\Norm{\diagonal{M}^{1/2} (\normalized{M}/2)^{\dagger} \diagonal{M}^{-1/2} x}_1$. To do this, we leverage the power series expansion of $(\normalized{M}/2)^{\dagger}$ as follows. In the following Theorem, for notational convenience we let $\adjacencylazy{M}:= \I - (\normalized{M}/2)$. 

\begin{lemma}\label{lemma:powerseries} Let $\M \in \R^{n \times n}$ be SDD. For any $x \perp \nullspace{M}$, 
\begin{align*}
    \left(\normalized{M}/2\right)^\dagger \diagonal{M}^{-1/2} x &= \sum_{k=0}^\infty \left(\adjacencylazy{M} \right)^k \diagonal{M}^{-1/2} x,
\end{align*}
and for $m \geq 1$, 
\begin{align*}
        \left\Vert \left(\normalized{M}/2\right)^\dagger \diagonal{M}^{-1/2}x - \sum_{k=0}^{m} \left( \adjacencylazy{M} \right)^k  \diagonal{M}^{-1/2}x \right\Vert_1 &\leq \frac{\sqrt{n} 2\pcondOf{M}}{\sqrt{\dmin}} \exp\left(- \frac{m+1}{2\pcondOf{M}}\right) \norm{x}_2.
\end{align*}

\end{lemma}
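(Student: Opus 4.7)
The plan is to prove the series identity by a Neumann expansion on the row space of $\normalizedlazy{M} := \normalized{M}/2$ (so $\adjacencylazy{M} = \I - \normalizedlazy{M}$), and then obtain the tail bound by operator-norm estimates on this subspace together with the elementary inequality $\|v\|_1 \leq \sqrt{n}\|v\|_2$.

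First I would verify that $\diagonal{M}^{-1/2}x$ lies in the row space of $\normalizedlazy{M}$, i.e.\ is orthogonal to $\nullspace{\normalizedlazy{M}} = \nullspace{\normalized{M}}$. Since $\diagonal{M}$ is invertible and $\normalized{M} = \diagonal{M}^{-1/2}\M\diagonal{M}^{-1/2}$, any $y \in \nullspace{\normalized{M}}$ can be written as $y = \diagonal{M}^{1/2}z$ with $z\in\nullspace{\M}$, so $\langle \diagonal{M}^{-1/2}x, y\rangle = x^\top z = 0$ by the hypothesis $x\perp\nullspace{\M}$. Next I would locate the spectrum: reusing the bound $\Spectrum{\normalized{M}}\subseteq[0,2]$ established in Lemma~\ref{lemma:lemmaEquivalentQForms} together with $\pcondOf{M} = \lambda_n(\normalized{M})/\lambdamin(\normalized{M})$, the nonzero eigenvalues of $\normalizedlazy{M}$ lie in $[1/\pcondOf{M},1]$, so the eigenvalues of $\adjacencylazy{M}$ on the row space of $\normalizedlazy{M}$ lie in $[0, 1-1/\pcondOf{M}]$. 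Since $\normalizedlazy{M}$ and $\adjacencylazy{M}$ are simultaneously diagonalizable by a common orthonormal basis, the row space is $\adjacencylazy{M}$-invariant, the partial sums $\sum_{k=0}^{m}\adjacencylazy{M}^k\diagonal{M}^{-1/2}x$ stay in the row space, and they converge to $\normalizedlazy{M}^{-1}\diagonal{M}^{-1/2}x = \normalizedlazy{M}^\dagger\diagonal{M}^{-1/2}x$ by the standard Neumann argument on an invariant subspace where the restricted spectral radius is strictly less than $1$. This gives the first identity.

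For the tail bound I would use the Neumann identity again to rewrite the remainder as
\[
\normalizedlazy{M}^\dagger\diagonal{M}^{-1/2}x - \sum_{k=0}^m \adjacencylazy{M}^k\diagonal{M}^{-1/2}x = \adjacencylazy{M}^{m+1}\normalizedlazy{M}^\dagger\diagonal{M}^{-1/2}x,
\]
then pass to $\ell_2$ via $\|v\|_1\leq\sqrt{n}\|v\|_2$ and apply submultiplicativity on the row space. The three operator-norm factors I would combine are $\|\adjacencylazy{M}^{m+1}\|_{\mathrm{op}}\leq (1-1/\pcondOf{M})^{m+1}\leq \exp(-(m+1)/\pcondOf{M})$; $\|\normalizedlazy{M}^\dagger\|_{\mathrm{op}}\leq 1/\lambdamin(\normalizedlazy{M}) \leq \pcondOf{M}$; and $\|\diagonal{M}^{-1/2}x\|_2\leq \|x\|_2/\sqrt{\dmin}$. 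Multiplying these yields $\sqrt{n}\,\pcondOf{M}\,\exp(-(m+1)/\pcondOf{M})\|x\|_2/\sqrt{\dmin}$, which is stronger than the stated bound (the $2$ in $2\pcondOf{M}/\sqrt{\dmin}$ and the $2\pcondOf{M}$ in the denominator of the exponent are slack), so the claim follows.

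The main obstacle is the subspace bookkeeping in the first half: one has to be careful that $\normalizedlazy{M}$ may be singular, so the series identity only holds after restricting to $\nullspace{\normalizedlazy{M}}^\perp$, and one must verify both that $\diagonal{M}^{-1/2}x$ lies in that subspace and that $\adjacencylazy{M}$ leaves it invariant. Once these two facts are in hand, the tail estimate is a routine combination of standard operator-norm inequalities with the spectral gap $\lambdamin(\normalizedlazy{M})\geq 1/\pcondOf{M}$ implied by Cheeger (via Theorem~\ref{thm:cheeger}) in the expander setting.
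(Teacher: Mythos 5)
Your approach is structurally the same as the paper's: both establish the power-series identity via the common eigenbasis of $\normalizedlazy{M}$ and $\adjacencylazy{M}$ restricted to the row space, verify that $\diagonal{M}^{-1/2}x$ lies in that subspace, and obtain the tail bound by passing from $\ell_1$ to $\ell_2$ and exploiting the geometric decay of $\adjacencylazy{M}^k$. Your collapsing of the remainder into the single term $\adjacencylazy{M}^{m+1}\normalizedlazy{M}^\dagger\diagonal{M}^{-1/2}x$ is a modest simplification of the paper's explicit geometric-series summation; both yield the same estimate.

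There is, however, a factor-of-two slip in your spectral localization which also invalidates your claim of having proved a strictly stronger bound. You state that the nonzero eigenvalues of $\normalizedlazy{M}$ lie in $[1/\pcondOf{M},1]$, citing $\Spectrum{\normalized{M}}\subseteq[0,2]$. But from $\lambdamin(\normalizedlazy{M}) = \lambda_n(\normalizedlazy{M})/\pcondOf{M}$, the bound $\lambda_n(\normalizedlazy{M})\leq 1$ only yields the \emph{upper} estimate $\lambdamin(\normalizedlazy{M}) \leq 1/\pcondOf{M}$. To get a lower estimate you need a \emph{lower} bound on $\lambda_n(\normalized{M})$; since $\tr(\normalized{M}) = n$ (the normalized adjacency part has zero diagonal), at least one eigenvalue of $\normalized{M}$ is $\geq 1$, giving $\lambdamin(\normalizedlazy{M})\geq 1/(2\pcondOf{M})$ --- not $\geq 1/\pcondOf{M}$. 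Correcting this, the three operator-norm factors become $\|\adjacencylazy{M}^{m+1}\|\leq (1-1/(2\pcondOf{M}))^{m+1}\leq\exp(-(m+1)/(2\pcondOf{M}))$, $\|\normalizedlazy{M}^\dagger\|\leq 2\pcondOf{M}$, and $\|\diagonal{M}^{-1/2}x\|_2\leq\|x\|_2/\sqrt{\dmin}$, whose product times $\sqrt{n}$ reproduces exactly the stated bound rather than a stronger one.
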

\begin{proof} Let $r$ denote the rank of $M$. Let $q_1, ..., q_{n}$ denote orthonormal eigenvectors of $\adjacencylazy{M}$ associated with $\lambda_1(\adjacencylazy{M}), ..., \lambda_{n}(\adjacencylazy{M})$ respectively, $\Q$ denote the orthogonal matrix whose $i$-th column is $q_i$, and $\widetilde{\Lambda}$ denote the diagonal matrix of the $\lambda_i(\adjacencylazy{M})$'s. 

An orthogonal eigendecomposition of ${(\normalized{M}}/2)^\dagger$ is given by  $(\normalized{M}/2)^\dagger = \Q \Lambda \Q^\top$, where $\Lambda$ is the diagonal matrix whose entries are given by 
\begin{align*}
    \Lambda_{i,i} = \begin{cases}
        0, & i = r+1, ..., n \\
        \frac{1}{1-\lambda_i(\adjacencylazy{M})}, & i = 1, ..., r
    \end{cases}. 
\end{align*}
So, for any $t \in [r]$, ${(\normalized{M}/2})^\dagger q_t = \frac{1}{1-\lambda_t(\adjacencylazy{M})} q_t$, where $\lambda_t(\adjacencylazy{M}) \in (0, 1)$; and consequently, $0 \leq \lambda_t(\adjacencylazy{M}) < 1$ is in the radius of convergence for the power series of $\frac{1}{1-x}$, and hence 
\begin{align*}
    \sum_{k=0}^\infty \left(\adjacencylazy{M}\right)^k = \sum_{k=0}^\infty \Q \widetilde{\Lambda}^k \Q^\top q_t = q_t \sum_{k=0}^\infty \lambda_{t}(\adjacencylazy{M})^k = \frac{1}{1-\lambda_t(\adjacencylazy{M})} q_t. 
\end{align*}
Now, $x \perp \nullspace{M}$ implies $\diagonal{M}^{-1/2}x \perp \nullspace{\normalized{M}}$; and consequently, $\diagonal{M}^{-1/2}x$ can be expressed as a linear combination of $q_1, ..., q_r$. The first statement in the lemma now follows by linearity. 

For the second statement, note that $\lambdamin(\normalizedlazy{M}) = \lambda_n(\normalizedlazy{M})/\pcondOf{M} \geq {1}/{(2\pcondOf{M})}$, so $\lambda_{r}\left( \adjacencylazy{M} \right) \leq 1 - {1}/{(2 \pcondOf{M})}$. Since $\diagonal{M}^{-1/2}x \perp q_{r+1}, ..., q_{n}$,
\begin{align*}
    \left\Vert (\adjacencylazy{M})^k \diagonal{M}^{-1/2}x \right\Vert_2 \leq \left(1 - \frac{1}{2\pcondOf{M}} \right)^k \frac{\normFull{x}_2}{\sqrt{\dmin}}. 
\end{align*}
Using the fact that $\norm{x}_1 \leq \sqrt{n}\norm{x}_2$ for all $x$, for $m \geq 1$, we have
\begin{align*}
    \left\Vert \left(\normalized{M}/2\right)^\dagger \diagonal{M}^{-1/2}x - \sum_{k=0}^{m} \left( \adjacencylazy{M} \right)^k  \diagonal{M}^{-1/2}x \right\Vert_1
    &\leq \sqrt{n} \sum_{k=m+1}^{\infty} \left\Vert \left( \adjacencylazy{M} \right)^k  \diagonal{M}^{-1/2}x \right\Vert_2 \\
    &\leq \frac{\sqrt{n} 2\pcondOf{M}}{\sqrt{\dmin}} \left(1 - \frac{1}{2\pcondOf{M}} \right)^{m+1} \norm{x}_2 \\
    &\leq \frac{\sqrt{n} 2\pcondOf{M}}{\sqrt{\dmin}} \exp\left(- \frac{m+1}{2\pcondOf{M}}\right) \norm{x}_2.
\end{align*}
\end{proof}

Using Lemma~\ref{lemma:powerseries}, the following bound follows almost immediately.

\begin{restatable}{lemma}{lemmaloneNormBound}\label{lemma:lemmal1NormBound} Let $\M \in \R^{n \times n}$ be SDD matrix and $x \in \R^n$ be a unit vector orthogonal to $\nullspace{\M}$. Let $m = \max\left(1, 2\pcondOf{\M}\log\left({\sqrt{n\dmax} 2\pcondOf{\M}}/{\sqrt{\dmin}}\right)\right)$. Then 
\begin{align*}
    \normFull{\diagonal{M}^{1/2} (\normalized{M}/2)^{\dagger}  {\diagonal{M}}^{-1/2} x}_1 \leq m \norm{x}_1 + 1.
\end{align*}
\end{restatable}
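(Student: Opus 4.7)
The plan is to exploit the power series expansion from Lemma~\ref{lemma:powerseries} and control the two pieces separately: a finite truncation of the series and an exponentially decaying tail. Explicitly, applying Lemma~\ref{lemma:powerseries} and left-multiplying by $\diagonal{M}^{1/2}$ yields
\[
\diagonal{M}^{1/2}(\normalized{M}/2)^\dagger \diagonal{M}^{-1/2}x = \sum_{k=0}^\infty P^k x, \quad \text{where } P \defeq \diagonal{M}^{1/2}\adjacencylazy{M}\diagonal{M}^{-1/2}.
\]
I would first bound the partial sum $\sum_{k=0}^{m-1} \|P^k x\|_1$ by $m\|x\|_1$, and then argue that multiplying the tail by $\diagonal{M}^{1/2}$ costs at most a $\sqrt{\dmax}$ blow-up in $\ell_1$, which together with the $\ell_1$ tail estimate in Lemma~\ref{lemma:powerseries} contributes at most $1$ under the stated choice of $m$.

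The crux is identifying $P$ as a lazy-walk-like operator that is nonexpansive in $\ell_1$. A direct computation, using $\normalized{M} = \I - \diagonal{M}^{-1/2}\adjacency{M}\diagonal{M}^{-1/2}$, gives
\[
P = \tfrac{1}{2}\I + \tfrac{1}{2}\adjacency{M}\diagonal{M}^{-1},
\]
so $P_{jj} = 1/2$ (since $\adjacency{M}$ has zero diagonal) and $P_{ij} = \tfrac{1}{2}a_{ij}/d_{jj}$ for $i \neq j$. The SDD hypothesis $\sum_{i \neq j}|a_{ij}| \leq d_{jj}$ then implies that every column of $P$ has $\ell_1$-norm at most $1$, so the induced $\ell_1 \to \ell_1$ operator norm satisfies $\|P\|_{1 \to 1} \leq 1$ and hence $\|P^k y\|_1 \leq \|y\|_1$ for every $y$ and every $k \geq 0$. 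Summing over $k = 0, \dots, m-1$ yields the desired $m\|x\|_1$ bound on the truncated sum via the triangle inequality.

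For the tail, combining the elementary bound $\|\diagonal{M}^{1/2} v\|_1 \leq \sqrt{\dmax}\,\|v\|_1$ with Lemma~\ref{lemma:powerseries} applied to the unit vector $x$ yields a tail contribution of at most
\[
\sqrt{\dmax} \cdot \frac{\sqrt{n}\cdot 2\pcondOf{M}}{\sqrt{\dmin}} \exp\Bigl(-\tfrac{m}{2\pcondOf{M}}\Bigr) = \frac{\sqrt{n\dmax}\cdot 2\pcondOf{M}}{\sqrt{\dmin}} \exp\Bigl(-\tfrac{m}{2\pcondOf{M}}\Bigr),
\]
and the prescribed $m = \max\bigl(1,\,2\pcondOf{M}\log(\sqrt{n\dmax}\cdot 2\pcondOf{M}/\sqrt{\dmin})\bigr)$ is precisely what is required to drive this below $1$. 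The main obstacle is the conceptual step of identifying the right similarity transform: recognizing that conjugation by $\diagonal{M}^{1/2}$ converts the spectrally-analyzed $\adjacencylazy{M}$ (which has no obvious $\ell_1$ control) into the natural substochastic-like operator $\tfrac{1}{2}\I + \tfrac{1}{2}\adjacency{M}\diagonal{M}^{-1}$, whose $\ell_1 \to \ell_1$ norm can be bounded directly by the SDD column-sum inequality. Once this similarity and the column-$\ell_1$ bound are in hand, the remainder of the proof is routine bookkeeping using the triangle inequality and the tail estimate already supplied by Lemma~\ref{lemma:powerseries}.
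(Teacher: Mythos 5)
Your proposal is correct and matches the paper's own proof essentially step for step: both use the power series from Lemma~\ref{lemma:powerseries}, both identify $\diagonal{M}^{1/2}\adjacencylazy{M}\diagonal{M}^{-1/2} = \tfrac{1}{2}\I + \tfrac{1}{2}\adjacency{M}\diagonal{M}^{-1}$ as the column-substochastic operator to get $\ell_1$-nonexpansiveness from the SDD column-sum condition, and both handle the tail by pulling out a $\sqrt{\dmax}$ factor from $\diagonal{M}^{1/2}$ and invoking the exponential decay in Lemma~\ref{lemma:powerseries}. (Your split at $\sum_{k=0}^{m-1}$ rather than the paper's $\sum_{k=0}^{m}$ is a cosmetic off-by-one in the same argument, and if anything squares more cleanly with the stated $m\|x\|_1 + 1$ bound.)
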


\begin{proof} Let $m \geq \max\left(1, 2\pcondOf{M}\log\left(\frac{\sqrt{n\dmax} 2\pcondOf{M}}{\sqrt{\dmin}}\right)\right)$. By Lemma~\ref{lemma:powerseries}, 
\begin{align*}
    \left\Vert \diagonal{M}^{1/2} \normalizedlazy{M}^\dagger \diagonal{M}^{-1/2}x - \sum_{k=0}^{m} D^{1/2 }{\adjacencylazy{M}}^k \diagonal{M}^{-1/2} x \right\Vert_1 
    &= \left\Vert \diagonal{M}^{1/2}\left(\normalizedlazy{M}^\dagger \diagonal{M}^{-1/2}x - \sum_{k=0}^{m}  {\adjacencylazy{M}} ^k  \diagonal{M}^{-1/2}x \right) \right\Vert_1 \\
    &\leq \frac{\sqrt{n\dmax} 2\pcondOf{M}}{\sqrt{\dmin}} \exp\left(- \frac{m+1}{2\pcondOf{M}}\right) \norm{x}_2 
    \leq 1. 
\end{align*}
Using triangle inequality plus the observation that $\diagonal{M}^{1/2}{\adjacencylazy{M}} \diagonal{M}^{-1/2} =  1/2 \I + 1/2{\adjacency{M}} \diagonal{M}^{-1}$ has each column normalized to have absolute column sum at most 1,
\begin{align*}
    \left\Vert \diagonal{M}^{1/2} \normalizedlazy{M}^\dagger \diagonal{M}^{-1/2}x \right\Vert_1 \leq \left\Vert \sum_{k=0}^{m} \diagonal{M}^{1/2}{\adjacencylazy{M}}^k \diagonal{M}^{-1/2} x \right\Vert_1 + 1 \leq m \Vert x \Vert_1 + 1.
\end{align*}
\end{proof}

Combining our Lemmas~\ref{lemma:lemmaEquivalentQForms} and \ref{lemma:lemmal1NormBound} with the guarantees of Lemma~\ref{lemma:lemmaCountSketch} and prior work on SDD linear system solvers, we obtain the following theorem.

\begin{restatable}{theorem}{theoremSDDMain}\label{theorem:theoremSDDMain} For any matrix $\M \in \R^{n \times n}$ with $\pcondOf{\M} = \Otilde{1}$, there is an algorithm which is an $(\Otilde{\nnz(\M)\epsilon^{-1}}, \Otilde{1}, \Otilde{n\epsilon^{-1}})$ spectral of $\M^\dagger$ supported over queries $S$, where $S$ is any set of $(\Otilde{1}, \diagonal{M})$-numerically sparse vectors orthogonal to $\nullspace{\M}$. 
\end{restatable}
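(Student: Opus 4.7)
The plan is to instantiate the asymmetric-CountSketch approach outlined in Section~\ref{sec:ub_overview} using the SDD solver of Theorem~\ref{theorem:solver} together with the structural bounds of Lemmas~\ref{lemma:lemmaEquivalentQForms} and~\ref{lemma:lemmal1NormBound}. Concretely, I would sample a CountSketch matrix $\Smat \in \R^{3ts \times n}$ with $t = \Theta(\log n)$ repetitions and $s = \Otilde{\epsilon^{-1}}$ buckets per repetition, then apply Theorem~\ref{theorem:solver} to each column of $2\D\Smat^\top$ to produce a matrix $\Y \in \R^{(3ts)\times n}$ approximating $\Smat \cdot \diagonal{M}^{1/2}(\normalized{M}/2)^\dagger \diagonal{M}^{-1/2} = 2\Smat\D\M^\dagger$. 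Since there are $3ts = \Otilde{\epsilon^{-1}}$ such systems, each taking $\Otilde{\nnz(\M)}$ time, this gives $\timeSketch = \Otilde{\nnz(\M)\epsilon^{-1}}$ and stores $\Y$ plus $\Smat$'s hash data in $\Otilde{n\epsilon^{-1}}$ bits. On a query $x\perp\nullspace{\M}$, I would compute $\Smat \D^{-1} x$ on the fly (exploiting that $\Smat\D^{-1}$ is $3t$-column-sparse) and evaluate $\Y x$ only at the indices where $\Smat\D^{-1}x$ is nonzero, then feed the block-aligned sketches into the median-of-means recovery of Lemma~\ref{lemma:lemmaCountSketch} and halve the output.

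For correctness, Lemma~\ref{lemma:lemmaEquivalentQForms} identifies $2 x^\top \M^\dagger x$ with $\langle v, w\rangle$ where $v \defeq \D^{-1}x$ and $w \defeq 2\D\M^\dagger x$, so it suffices to certify the $\ell_1$ branch of Lemma~\ref{lemma:lemmaCountSketch}, namely $\|v\|_1 \|w\|_1 / |\langle v, w \rangle| = \Otilde{1}$. Because $\pcondOf{\M} = \Otilde{1}$, Lemma~\ref{lemma:lemmal1NormBound} (homogenized so $x$ need not be a unit vector) gives $\|w\|_1 = \Otilde{\|x\|_1}$. The inequality from Lemma~\ref{lemma:lemmaEquivalentQForms} combined with the $(\Otilde{1},\diagonal{M})$-numerical sparsity hypothesis yields $\|\D^{-1}x\|_1 \|x\|_1 \leq \Otilde{1}\cdot\|\D^{-1/2}x\|_2^2 \leq \Otilde{1}\cdot x^\top \M^\dagger x = \Otilde{|\langle v, w\rangle|}$, closing the ratio bound. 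Taking $t = \Theta(\log n)$ repetitions gives the high-probability guarantee via the usual median trick.

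The main obstacle will be that Theorem~\ref{theorem:solver} only returns $\M$-norm accurate surrogates, so the stored $\Y$ is not exactly $2\Smat\D\M^\dagger$ and this error propagates into every block inner product. I would handle this by bounding the per-row $\ell_2$ deviation of $\Y$ using the solver's $\M$-norm guarantee together with a $\lambdamin(\M)$-lower bound (itself controlled by $\dmin$ and $\pcondOf{\M}$), then choose the solver accuracy $\beta$ to be an inverse polynomial in $n$, $\epsilon^{-1}$, $\dmax/\dmin$, and $\pcondOf{\M}$ so that the induced additive error in each block inner product is at most an $\epsilon$-fraction of $|\langle v, w\rangle|$. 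Since Theorem~\ref{theorem:solver}'s runtime depends only on $\log(1/\beta)$, this choice is absorbed into the $\Otilde{\cdot}$, and a final triangle inequality against Lemma~\ref{lemma:lemmaCountSketch}'s exact guarantee yields the claimed $(\Otilde{\nnz(\M)\epsilon^{-1}},\Otilde{1},\Otilde{n\epsilon^{-1}})$ bounds.
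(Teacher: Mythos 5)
Your proposal is correct and follows essentially the same route as the paper's proof: sketch $2\D\M^\dagger$ via CountSketch columns computed with the SDD solver of Theorem~\ref{theorem:solver}, certify the $\ell_1$ branch of Lemma~\ref{lemma:lemmaCountSketch} by combining the asymmetric factorization of Lemma~\ref{lemma:lemmaEquivalentQForms}, the $\ell_1$-norm bound of Lemma~\ref{lemma:lemmal1NormBound}, and the $(\Otilde{1},\D)$-numerical-sparsity hypothesis, and absorb the solver's $\M$-norm error by choosing $\beta$ inverse-polynomially small (which only costs polylogarithmic factors). The only minor imprecision is calling the recovery step ``median-of-means''; Lemma~\ref{lemma:lemmaCountSketch} uses the plain median of independent block inner products, but this does not affect the argument.
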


\begin{proof} It suffices to assume, without loss of generality, that $S$ is a set of unit vectors, as at query time, for any vector $b$ we can compute $\norm{b}_2$ in $O(\nnz(b))$ time and rescale. Set $\beta = \frac{ 2\min(1, \dmin^{3}) \lambdamin(\M) \epsilon}{\max(1, \dmax^2) \sqrt{n\max(1, \dmax)}}$. By Theorem~\ref{theorem:solver}, in $\Otilde{\nnzFull{\M}}$, whp.\ we can obtain access to a linear operator $\Q_\beta$ such that $\Q_\beta$ can be applied to any $b \in S$ in time $\Otilde{\nnz(\M)}$ and $\norm{\Q_\beta b - \M^\dagger b}_M \leq \beta \norm{\M^\dagger b}_M = \beta \norm{b}_{\M^\dagger}$. Then, 
\begin{align*}
    \lambdamin(\M) \norm{(\Q_\beta b - \M^\dagger)b}_2^2 \leq \norm{(\Q_\beta b - \M^\dagger)b}_M^2 \leq \beta^2 \norm{b}_{M^\dagger}^2 \leq \frac{\beta^2}{\lambdamin(\M)} \norm{b}_2^2. 
\end{align*}
So, $ \norm{(\Q_\beta b - \M^\dagger)b}_2 \leq \frac{\beta}{\lambdamin(\M)} \leq \epsilon$. Consequently, by triangle inequality, we have that 
\begin{align*}
    \norm{2\diagonal{M} \Q_\beta b}_1  &\leq \norm{2\diagonal{M}\M^\dagger b}_1 + \norm{2\diagonal{M}\Q_\beta b -  2\diagonal{M}\M^\dagger b}_1
\end{align*}
where $2\diagonal{M}\M^\dagger b = \diagonal{M}^{1/2}(\normalized{M}/2)^\dagger \diagonal{M}^{1/2} b$ and $\norm{2\diagonal{M}\Q_\beta b -  2\diagonal{M}\M^\dagger b}_1 \leq 2\sqrt{n \dmax} \norm{\Q_\beta b -  \M^\dagger b}_2 \leq \epsilon$. Consequently, 
\begin{align*}
    \norm{2\diagonal{M} \Q_\beta b}_1  &\leq \norm{\diagonal{M}^{1/2}(\normalized{M}/2)^\dagger \diagonal{M}^{1/2} b}_1 + \epsilon.  
\end{align*}
Now, Lemma~\ref{lemma:lemmal1NormBound} guarantees that $\norm{2\diagonal{M} \Q_\beta b}_1 \leq \Otilde{\pcondOf{\M} + \epsilon} \norm{b}_1 = \Otilde{\pcondOf{\M}}\norm{b}_1$. Similarly, 
\begin{align*}
    \abs{\langle \diagonal{M}^{-1} b, \diagonal{M}^{1/2}(\normalized{M}/2)^\dagger \diagonal{M}^{-1/2} b \rangle - \langle \diagonal{M}^{-1}b, 2\diagonal{M} \Q_\beta b\rangle } &= \abs{\langle \diagonal{M}^{-1} b, 2\diagonal{M}\M^\dagger b \rangle - \langle \diagonal{M}^{-1}b, 2\diagonal{M} \Q_\beta b\rangle } \\
    &\leq \dmax \normFull{\diagonal{M}^{-1} b}_2  \norm{(\Q_\beta b - \M^\dagger)b}_2 \\
    &\leq \frac{\dmax}{\dmin} \frac{\beta}{\lambdamin(\M)} 
    \leq \epsilon \left( \frac{1}{2\dmin^2} \right).
\end{align*}
Lemma~\ref{lemma:lemmaEquivalentQForms} guarantees that $\langle \diagonal{M}^{-1}b, 2\diagonal{M} \Q_\beta b\rangle \geq O(1) \norm{\diagonal{\M}^{-1}b}_2^2$. Consequently, 
\begin{align*}
    \frac{\norm{\diagonal{M}^{-1}b}_1 \norm{2\diagonal{M} \Q_\beta b}_1}{\langle \diagonal{M}^{-1/2}b, 2\diagonal{M} \Q_\beta b\rangle} = \Otilde{\pcondOf{\M}} \frac{\norm{\diagonal{M}^{-1}b}_1 \norm{b}_1}{\norm{\diagonal{\M}^{-1}b}_2^2} =\Otilde{\pcondOf{\M}} \nsD{\diagonal{M}}{b} = \Otilde{\pcondOf{\M}}.
\end{align*}

So, given a CountSketch matrix $\Smat \in \R^{\Otilde{\pcondOf{\M}\epsilon^{-1}} \times n}$, Lemma~\ref{lemma:lemmaCountSketch} guarantees that we can compute an $X$ such that whp.\, 
\begin{align*}
    \abs{X - \langle \diagonal{M}^{-1} b, 2\diagonal{M} \Q_\beta b\rangle} \leq \epsilon \langle \diagonal{M}^{-1} b, 2\diagonal{M} \Q_\beta b\rangle.
\end{align*}
Moreover, we showed above that
\begin{align*}
     \abs{\langle \diagonal{M}^{-1} b, 2\diagonal{M} \Q_\beta b\rangle - \langle \diagonal{M}^{-1} b, \diagonal{M}^{1/2}\left(\normalized{M}/2\right)^\dagger \diagonal{M}^{-1/2} b \rangle} &\leq \epsilon \left( \frac{1}{2 \dmin ^2} \right) \\
     &\leq \epsilon \langle \diagonal{M}^{-1} b, \diagonal{M}^{1/2}\left(\normalized{M}/2\right)^\dagger \diagonal{M}^{-1/2} b \rangle,  
\end{align*}

where the last line uses the observation from Lemma~\ref{lemma:lemmaEquivalentQForms}, that $\langle \diagonal{M}^{-1} b, \diagonal{M}^{1/2}(\normalized{M}/2)^\dagger \diagonal{M}^{-1/2} b \rangle \geq \frac{1}{2} \norm{\diagonal{\M}^{-1}b}_2^2$. It now follows that
\begin{align*}
    \abs{X - \langle \diagonal{M}^{-1} b, \diagonal{M}^{1/2}(\normalized{M}/2)^\dagger \diagonal{M}^{-1/2} b \rangle } &\leq \epsilon \langle \diagonal{M}^{-1} b, 2\diagonal{M} \Q_\beta b\rangle + \epsilon \langle \diagonal{M}^{-1} b, \diagonal{M}^{1/2}\left(\normalized{M}/2\right)^\dagger \diagonal{M}^{-1/2} b \rangle \\
    &\leq 2\epsilon(1+\epsilon) \langle \diagonal{M}^{-1} b, \diagonal{M}^{1/2}\left(\normalized{M}/2\right)^\dagger \diagonal{M}^{-1/2} b \rangle \\
    &\leq 4\epsilon \langle \diagonal{M}^{-1} b, \diagonal{M}^{1/2}\left(\normalized{M}/2\right)^\dagger \diagonal{M}^{-1/2} b \rangle.
\end{align*}

Thus, by Lemma~\ref{lemma:lemmaEquivalentQForms}, $\frac{1}{2} X \approx_{4\epsilon} \langle \frac{1}{2} \diagonal{M}^{-1} b, \diagonal{M}^{1/2}\left(\normalized{M}/2\right)^\dagger \diagonal{M}^{-1/2} b \rangle = \langle b, M^\dagger b \rangle$; hence, rescaling $\epsilon$ by a constant factor of 4 yields the approximation guarantee (without changing the size of $\Smat$ by more than constant factors). 

Consequently, we see that by storing $\Smat$ and $\Q_\beta$, we can support queries $b \in S$. To justify the runtime guarantee, note that Theorem~\ref{theorem:solver} shows we can compute $\Smat \Q_\beta$ in $\Otilde{\nnz(\M)\pcondOf{\M}\epsilon^{-1}}$ time, by applying an approximate SDD system solver to each row in $\Smat$. To support queries, we need only store $\Smat \Q_\beta$ and $\Smat$, which requires only $\Otilde{n \pcondOf{\M} \epsilon^{-1}}$ bits. 

Finally, we justify the query complexity. The key observation is that $\Smat$ is $\Otilde{1}$-column sparse. Computing $X$ requires taking the median of $\Otilde{1}$ quantities, each of which requires computing an inner product involving an $\Otilde{\nnz(b)}$-sparse vector $\Smat \diagonal{M}^{-1} b$. Using this fact, $X$ can be computed in $\Otilde{\nnz(b)} $. Setting $\pcondOf{\M} = \Otilde{1}$ completes the proof.
\end{proof}

\begin{remark} Note that the proof of Theorem~\ref{theorem:theoremSDDMain} shows that, more broadly, for any SDD matrix $\M$, we can obtain an $(\Otilde{\pcondOf{\M} \nnz(\M) \epsilon^{-1}}, \Otilde{1}, \Otilde{\pcondOf{\M} n \epsilon^{-1}})$ spectral sketch data structure. 
\end{remark}

The corresponding algorithm pseudocode for constructing the spectral sketch data structure is given in Algorithm~\ref{alg:sketchSDD}. The algorithm pseudocode for querying the spectral sketch data structure is given in Algorithm~\ref{alg:querySDD}. The proof of Theorem~\ref{theorem:theoremSDDMain} guarantees that it suffices to set $t = \Otilde{1}$ and $s = \Otilde{\pcondOf{\M}\epsilon^{-1}}$ in Algorithm~\ref{alg:sketchSDD} and Algorithm~\ref{alg:querySDD}.

\begin{algorithm}[ht]
    \SetAlgoLined
    \KwIn{Matrix $\M \in \R^{n \times n}$, error tolerance $\epsilon \in (0, 1)$, and integers $s, t > 0$.}
    \KwOut{CountSketch matrix $\Smat \in \R^{(3t)s \times n}$ and an approximation of $2\Smat \diagonal{\M} \M^\dagger$, $\widetilde{\Smat} \in \R^{(3t)s \times n}$}
    Set $\beta$ as in the proof of Theorem~\ref{theorem:theoremSDDMain}\; 
    Generate a random CountSketch matrix $\Smat \in \R^{(3t)s \times n}$ (see Theorem 4 of \cite{LPT2012})  
    \; 
    Generate access to a linear operator $\Q_\beta$ such that for all $b \perp \nullspace{\M}$,  $\Vert \Q_\beta \b - \M^\dagger \b \Vert_M \leq \beta \norm{M^\dagger b}_\M$ (See Theorem~\ref{theorem:solver})\; 
    Compute $\widetilde{\Smat} = 2\Smat \diagonal{\M}\Q_\beta$\; 
    \Return{$(\Smat, \widetilde{\Smat})$}
    \caption{SpectralSketchSDDInverse}
    \label{alg:sketchSDD}
\end{algorithm}

\begin{algorithm}[ht]
    \SetAlgoLined
    \KwIn{$\diagonal{\M}$, output of Algorithm~\ref{alg:sketchSDD}, query vector $b$, and integers $s, t > 0$ as inputted to Algorithm~\ref{alg:sketchSDD}. }
    \KwOut{Approximation to $\langle b, \M^{\dagger} b\rangle$. }
    Set $\widetilde{b} = b/\norm{b}_2$\;
    \For{$i \in [3t]$}{
        $x_i = \frac{1}{2} \langle (\Smat \diagonal{M}^{-1}\widetilde{b})[(i-1)s+1 : (i)s]i, (\widetilde{\Smat} \widetilde{b})[(i-1)s+1 : (i)s] \rangle$ \; 
    }
    \Return{$\norm{b}_2^2 \mathrm{median}\{x_i\}$.}
    \caption{QuerySketchSDDInverse}
    \label{alg:querySDD}
\end{algorithm}

Because the $\deltaij$ queries appearing in effective resistance computations are 2-sparse and $(2, \diagonal{M})$ numerically sparse for all SDD matrices $\M$, taking $\M = \laplacian{G}$ in Theorem~\ref{theorem:theoremSDDMain} immediately implies Theorem~\ref{thm:ub_main_informal} and Theorem~\ref{thm:ub_main_informal2}.

\begin{proof}[Proof of Theorem~\ref{thm:ub_main_informal} and Theorem~\ref{thm:ub_main_informal2} ] Let $G = (V, E, w)$ be a graph with $\widetilde{\Omega}(1)$-expansion. As argued previously (see Section~\ref{subsection:upper_bound_new}, $\laplacian{G}$ is SDD with $\Otilde{1}$ normalized condition number, $\deltaij \perp \ker{\laplacian{G}}$, and $\deltaij$ is $(2, \diagonal{\laplacian{G}})$- numerically sparse. To see why Theorem~\ref{thm:ub_main_informal} holds, simply observe that we can boost the whp.\ guarante guarantee in Theorem~\ref{theorem:theoremSDDMain}, which holds for each fixed query, to hold whp.\ for all $\deltaij$ for $i, j \in V$ by maintaining $O(\log(n))$ copies of the spectral sketch data structure guaranteed in Theorem~\ref{theorem:theoremSDDMain} as our sketch, and then, at query time, taking the median of the results of the outputs from querying each of the $O(\log(n))$ copies of the sketch. Theorem~\ref{thm:ub_main_informal2} is a direct corollary of Theorem~\ref{thm:ub_main_informal}. 
\end{proof}

\subsection{Extensions to PSD Matrices}\label{section:ub_extensions}

Our approach of approximating quadratic forms via asymmetric inner products also yields a query-efficient sketching procedure for approximating quadratic forms of well-conditioned PSD matrices.

\begin{restatable}{theorem}{theoremPSDMain}\label{theorem:theoremPSDMain} There is an algorithm which is an $(\Otilde{\kappa(\A) \nnz(\A)\epsilon^{-2}}$, $\Otilde{1}, \Otilde{\kappa(\A)n\epsilon^{-2}})$ spectral sketch data of $\A$ supported on $\X$, where $\A$ is PSD and $\X$ is orthogonal to $\nullspace{\A}$.  
\end{restatable}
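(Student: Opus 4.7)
The plan is to adapt the strategy of Theorem~\ref{theorem:theoremSDDMain} to the PSD setting, replacing the $\ell_1$ branch of Lemma~\ref{lemma:lemmaCountSketch} with its $\ell_2$ branch, and replacing the asymmetric factorization in Lemma~\ref{lemma:lemmaEquivalentQForms} with the trivial identity $b^\top \A b = \langle b, \A b\rangle$. For PSD $\A$ and any $b \perp \nullspace{\A}$, a spectral-decomposition argument yields $\|\A b\|_2^2 \leq \lambda_n(\A)(b^\top \A b)$ and $\|b\|_2^2 \leq (b^\top \A b)/\lambdamin(\A)$, so $\|b\|_2^2 \|\A b\|_2^2 \leq \kappa(\A)(b^\top \A b)^2$. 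Hence, by the $\ell_2$ branch of Lemma~\ref{lemma:lemmaCountSketch} applied with $u=b$, $v=\A b$, bucket width $s = \Otilde{\kappa(\A)\epsilon^{-2}}$, and $t = \Otilde{1}$ repetitions, the median-of-inner-products estimator returns a $(1\pm\epsilon)$-multiplicative approximation of $\langle b, \A b\rangle$ with high probability for any fixed $b$.

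The preprocessing then mirrors Algorithm~\ref{alg:sketchSDD}: draw a CountSketch matrix $\Smat \in \R^{O(ts) \times n}$ and precompute $\Smat \A$ by performing $O(ts)$ matrix-vector products with $\A$, one per row of $\Smat^\top$. Each product costs $O(\nnz(\A))$, so the total cost is $O(ts \cdot \nnz(\A)) = \Otilde{\kappa(\A)\nnz(\A)\epsilon^{-2}}$, matching the desired $\timeSketch$. The stored data consists of $\Smat$ (implicitly, via its random seed) together with $\Smat \A$, totaling $\Otilde{\kappa(\A) n \epsilon^{-2}}$ bits. Boosting the per-query guarantee to hold uniformly over $\X$ is handled by keeping $O(\log|\X|)$ independent copies and taking medians at query time. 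A query on $b \in \X$ then computes $\Smat b$ and the needed entries of $(\Smat \A)b$, outputting the median of the resulting $3t$ chunkwise inner products; correctness follows from the previous paragraph.

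The main obstacle is establishing $\timeQuery = \Otilde{1}$, i.e., a query cost of $\Otilde{\nnz(b)}$ rather than the naive $\Otilde{s \cdot \nnz(b)}$ implied by multiplying $\Smat \A$ against $b$. Just as in the proof of Theorem~\ref{theorem:theoremSDDMain}, one uses that $\Smat$ is $O(t)$-column-sparse, so $\Smat b$ has only $\Otilde{\nnz(b)}$ nonzero entries; therefore only those entries of $(\Smat \A)b$ whose indices lie in $\supp(\Smat b)$ are actually required for the chunkwise inner products. Carefully bounding the total cost of extracting these entries from the stored $\Smat \A$ and contracting each against $b$, together with the bookkeeping for the median-of-means step, is the key technical point in achieving the claimed $\Otilde{1}$-per-nonzero query time.
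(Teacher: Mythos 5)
Your proposal is correct and follows essentially the same approach as the paper: apply the $\ell_2$ branch of Lemma~\ref{lemma:lemmaCountSketch} to the pair $(b, \A b)$, bound $\|b\|_2^2 \|\A b\|_2^2 / (b^\top \A b)^2 \leq \kappa(\A)$ via the spectral decomposition of $\A$, set $s = \Otilde{\kappa(\A)\epsilon^{-2}}$ and $t = \Otilde{1}$, precompute $\Smat \A$ row-by-row, and exploit the $\Otilde{1}$-column-sparsity of $\Smat$ to make the chunkwise inner products computable in $\Otilde{\nnz(b)}$ time. Your spectral-decomposition derivation is a slightly more explicit version of the paper's one-line bound, and the boosting step over $\X$ is an optional extra beyond the per-query whp.\ guarantee required by Definition~\ref{def:spectral_sketch}.
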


\begin{proof}[Proof of Theorem~\ref{theorem:theoremPSDMain}] 

Let $x \perp \nullspace{\A}$. Note that for any $\Norm{x}_2 \Norm{\A x}_2 = \kappa(\A^{1/2})x^\top \A x$. So, by the $\ell_2$ norm guarantees from Lemma~\ref{lemma:lemmaCountSketch}, it suffices to build a CountSketch matrix $S$ with $s = \widetilde{O}\left(\kappa({\A})\epsilon^{-2}\right)$ and $t = \Otilde{1}$ to guarantee that for any $x \perp \nullspace{\A}$, $\langle \Smat x, \Smat \A x \rangle \approx_\epsilon \langle x, \A x \rangle$ whp.\ The time to compute the sketch $\Smat \A$ is at most $\Otilde{\kappa(\A)\nnz(\A)\epsilon^{-2}}$. Due to the $\Otilde{1}$ column-sparsity of $\Smat$, $\Smat x$ is $\Otilde{\nnz(x)}$ sparse, and consequently, $\langle \Smat x, \Smat \A x \rangle$ can be computed in $\Otilde{\nnz(x)}$ time. 
\end{proof}

In comparison, JL gives an $(\Otilde{n^\omega}, \Otilde{n\epsilon^{-2}}, \Otilde{\epsilon^{-2}})$-spectral sketch data structure using efficient square root algorithms \cite{jain2017global}. JL achieves better compression than Theorem~\ref{theorem:theoremPSDMain}. For well-conditioned matrices, however, our query time and potentially construction times may be faster in some regimes.

\section{Lower Bounds}
\label{sec:lower_bounds}

In this section, we present our main conditional hardness results. In Section~\ref{sec:lb_overview}, we outline our approach. In Section~\ref{sec:improved_lower_bounds_effres}, we present our lower bounds for the problem of estimating effective resistances for all pairs of nodes (case where $S = V \times V$), which we call the \emph{all pairs effective resistance estimation problem}. In Section~\ref{sec:improved_lower_bounds_spectral_sums} we show that our techniques also yield lower bounds for other spectral sum estimation problems.
\subsection{Our Approach}\label{sec:lb_overview}
\paragraph{Approaches of Prior Work}
The approach of \cite{musco2017spectrum} begins with the fact that $G$ has a triangle if and only if $\trace{\A_G^3}/6 \geq 1$. They use the fact that various spectral sums $\mathcal{S}_f$ of the SDD matrix $\I - \delta \A_G$ (for $\delta$ sufficiently small) can be expressed as a power series $\mathcal{S}_f(\I - \delta \A_G) = \sum_{k=0}^\infty c_k \delta^k \tr(\A_G^k)$.
The first two terms of this series can be computed directly. So given $Y \approx_\epsilon \mathcal{S}_f(\I - \delta \A_G)$, one can estimate $\trace{\A_G^3}$, where the estimation error is controlled by the magnitude of the first two terms of the series and the tail error due to truncating at the third term. By bounding this estimation error, \cite{musco2017spectrum} shows that, for appropriate choices of $\delta$, $Y \approx_\epsilon \mathcal{S}_f(\I - \delta \A_G)$ yields an additive 1/2 approximation to $\trace{\A_G^3}$, which is sufficient for triangle detection. They also reduce the problem of estimating the spectral sum $\trace{\B^{-1}}$ for an SDD matrix $\B$ to the all pairs effective resistance estimation problem. 

\paragraph{Our Approach} We use three key techniques to better bound the estimation errors incurred in the power-series-inspired approach of \cite{musco2017spectrum}. This yields faster reductions and better lower bounds for effective resistance estimation. Rather than obtaining effective resistance lower bounds by reducing the problem of computing $\trace{\A_G^3}/6$ to computing the \emph{trace} of an SDD matrix as in \cite{musco2017spectrum}, we use a reduction that closely leverages the structure of effective resistances and properties of $\A_G^2$. 

We begin with the fact that for $\alpha>0$ sufficiently small, 
\begin{align}\label{eq:series2}
    \left(\I - \frac{\alpha}{n}\A_G\right)^{-1} = \sum_{k=0}^\infty \frac{\alpha^k}{n^k} \A_G^k. 
\end{align}
Since $\A_G$ is known, given access to $\deltaij^\top (\I - \frac{\alpha}{n}\A_G)^{-1} \deltaij$, we can estimate the entries of $\A_G^2$, where the estimation error is controlled by $\alpha$ and the tail error of truncating \eqref{eq:series2} at the third term. By bounding this estimation error, for appropriate choice of $\alpha$, we can obtain additive 1/2 approximations to all entries of $\A_G^2$, which is sufficient to identify all paths of length 2. We can then detect a triangle by simply scanning for an edge $\{u, v\}$ such that $u$ and $v$ are connected by a path of length 2. Estimating the entries of $\A_G^2$ leads to lower estimation error than estimating $\trace{\A_G^3}$ as in \cite{musco2017spectrum}).

Second, we use a standard randomized reduction that reduces the triangle detection problem to the triangle detection problem restricted to tripartite graphs. The reduction relies on the fact that a randomly sampled tripartition of the original graph preserves triangles with constant probability. To detect a triangle in a tripartite graph $G = (V_1 \sqcup V_2 \sqcup V_3, E)$, we construct a graph $\Gprime$ by removing all edges $E_{1,2} := \{\{u, v\} \in E : u \in V_1, v \in V_2\}$ between $V_1$ and $V_2$. $G$ has a triangle if and only if there is an edge $\curs{u,v} \in E_{1,2}$ and a path of length 2 between $u$ and $v$ in $\Gprime$. Crucially, we can show that the third term $\As_{\Gprime}^3$ does not contribute to the tail error when estimating the $\{u, v\}$-th entry of $\pars{\A_{\Gprime}^2}$ using \eqref{eq:series2}.

Third, to lower the spectral norm of $\A_{\Gprime}$ (and consequently better bound the convergence of the power series \eqref{eq:series2}), we use a symmetric random signing of $\A_{\Gprime}$ defined below.
\begin{definition}[Symmetric Random Signing]
    Given a symmetric matrix $\A \in \R^{n \times n}$, its \emph{symmetric random signing} $\As$ is the random matrix with $\As_{i, j} := \rad_{i,j} \A_{i,j}$, where $\rad_{i,j}$ are independent Rademacher random variables (i.e. $\rad_{i,j} = \pm1$, each with probability 1/2) that satisfy $\rad_{i,j} = \rad_{j,i}$.
\end{definition}
We show that with constant probability, this random signing preserves whether the entries of $\A_{\Gprime}^2$ are non-zero, allowing us to detect if $G$ has a triangle even if we replace $\A_G$ in \eqref{eq:series2} with $\As_{\Gprime}$ instead. This is beneficial, as matrix Chernoff guarantees $\Norm{\As_{\Gprime}}_2 = \Otilde{\sqrt{n}}$ whp. whereas $\Norm{\A_{\Gprime}}_2$ may be as large as $n$. This means that by replacing $\A_G$ with $\As_{\Gprime}$, the norm of each term in \eqref{eq:series2} decreases from $O(\alpha^k)$ to $O(\alpha^k n^{-k/2})$. So the tail error of truncating the power series is smaller. 

To compute entries of $\As_{\Gprime}^2$ efficiently using effective resistance estimates on expanders, we first show that we can use all pairs effective resistance estimates on expanders to estimate $\deltaij^T \M^{-1} \deltaij$ for all $i,j \in [n]$, where $\M = (\I - \Q)$ is an SDD matrix with $\rho(\Q) \leq 1/3$. Then, by choosing $\M = \I - \frac{\alpha}{n} \As_{\Gprime}$ as in \eqref{eq:series2} for an appropriate constant $\alpha$, we can estimate $\As_{\Gprime}^2$ from estimates of $\deltaij^T \M^{-1} \deltaij$. This yields our lower bound on the all pairs effective resistance estimation problem.

Additionally, we show that with constant probability, the random signing preserves the property that $\trace{\A_G^3}$ is non-zero. We leverage this aspect of the random signing to obtain improved randomized conditional lower bounds for various spectral sum estimation problems.  We closely follow the trace estimation approach of \cite{musco2017spectrum}, and again use the smaller spectral radius of $\As_G$ to improve bounds on the power series truncation error.

\subsection{Improved Lower Bounds for Effective Resistance Estimation}\label{sec:improved_lower_bounds_effres}
In this section we provide a series of reductions which yield our main result on lower bounds for the all pairs effective resistance estimation problem for all pairs of nodes (case where $S = V \times V$). First, we formalize a standard randomized reduction from triangle detection in general graphs to triangle detection in tripartite graphs in \Cref{lemma:tripartite_to_normal_tridetect} below.

\begin{restatable}{lemma}{lemmaTriangleToTripartite}\label{lemma:tripartite_to_normal_tridetect}
Given an algorithm which can solve the triangle detection problem on an $n$-node tripartite undirected graph in $O(n^\gamma)$ time, we can produce a randomized algorithm which can solve the triangle detection problem on an arbitrary $n$-node undirected graph $G$ in $\widetilde{O}(n^\gamma)$ time whp.
\end{restatable}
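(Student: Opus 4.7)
The plan is to use a standard random tripartitioning reduction. Given $G = (V, E)$ with $|V| = n$, I would independently assign each vertex of $V$ to one of three parts $V_1, V_2, V_3$ uniformly at random, then form $\Gprime = (V_1 \sqcup V_2 \sqcup V_3, E')$ by retaining only those edges of $E$ whose endpoints lie in distinct parts. By construction $\Gprime$ is tripartite and has at most $n$ vertices, so the assumed algorithm applies to it and runs in $O(n^\gamma)$ time.

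For correctness, if $G$ is triangle-free then so is its subgraph $\Gprime$, so the tripartite detection algorithm invoked on $\Gprime$ always returns NO. Conversely, for any fixed triangle $\{u, v, w\}$ in $G$, the three vertices are placed in three distinct parts with probability exactly $3!/3^3 = 2/9$, and on that event the triangle survives in $\Gprime$. Hence a single invocation of the tripartite algorithm on $\Gprime$ returns YES with probability at least $2/9$ if $G$ has a triangle, and with probability $0$ otherwise.

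To amplify to high probability, I would repeat the random tripartitioning with fresh randomness and call the tripartite detection algorithm $T = \Theta(\log n)$ times, outputting YES if and only if at least one of the calls returns YES. When $G$ contains a triangle, the probability that all $T$ calls miss it is at most $(7/9)^T \leq n^{-c}$ for any desired constant $c > 0$ by choosing $T$ a sufficiently large multiple of $\log n$; when $G$ has no triangle the output is always NO. Each tripartition can be sampled and the induced subgraph built in $O(n + m) = O(n^2)$ time, and each detection call costs $O(n^\gamma)$, so the total cost is $\widetilde{O}(n^\gamma)$ (the $O(n^2)$ preprocessing is absorbed since any meaningful tripartite triangle detection algorithm must have $\gamma \geq 2$ to read its input). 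No substantive obstacle arises; this is a textbook color-coding-style argument whose only content is the $2/9$ preservation probability.
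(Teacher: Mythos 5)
Your proposal is correct and follows essentially the same random tripartitioning argument as the paper, differing only in that you compute the exact survival probability $3!/3^3 = 2/9$ whereas the paper uses the weaker (but equally sufficient) lower bound $(1/3)^3 = 1/27$; both are constants, so the amplification and the final $\widetilde{O}(n^\gamma)$ bound are identical.
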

\begin{proof}
    We first sample a tripartite subgraph $H$ of $G$ by assigning each vertex in $G$ to a random tripartition with equal probability 1/3 and deleting edges within each resulting tripartition. First, note that if $G$ has no triangles then $H$ also has no triangles since $H$ is a subgraph of $G$. Second, observe that if $G$ has a triangle $\curs{a,b,c}$, it is also a triangle in $H$ if each vertex ends up in a different tripartition. This occurs with probability at least $(1/3)^3 = 1/27$ which is a constant. Therefore, solving the triangle detection problem on $H$ and returning the same output also successfully solves the triangle detection problem on $G$ with probability at least $1/27$. We can repeat this randomized procedure $\log (n^c)$ times to boost the success probability to at least $1 - n^{-c}$, which is whp. in $n$. This randomized algorithm runs in $\widetilde{O}(n^\gamma)$ time, completing the proof.
\end{proof}

\begin{definition}\label{def:Reff_SDD} In the \emph{all pairs SDD effective resistance estimation problem}, given an SDD matrix $\M$ such that $\diagonal{\M} = \I$, $\adjacency{\M} = \Q$, with $\rho(\Q) \leq 1/3$ and $\epsilon \in (0, 1)$, we must output $\widetilde{r} \in \R^{n^2}$ such that $\widetilde{r}_{a,b} \approx_\epsilon \deltaab^\top \M^{-1}\deltaab$ $\forall a, b \in [n]$. We call $\deltaab^\top \M^{-1}\deltaab$ the SDD effective resistance of $(a, b)$ in $\M$. 
\end{definition}

For brevity, we use $\widetilde{r}(\M)$ to refer to the solution of the SDD effective resistance problem on input $\M$. Our first step is to show that an algorithm for the all pairs effective resistance estimation problem on expanders implies an algorithm for the all pairs SDD effective resistance estimation problem.

\begin{restatable}{lemma}{graphEffresToSDDEffres}\label{lemma:grapheffres_to_sddeffres} 
Given an algorithm to solve the all pairs effective resistance estimation problem on graphs with $\widetilde{\Omega}(1)$-expansion in $\Otilde{n^2\epsilon^{-c}}$ time for some $c > 0$, we can produce an algorithm to solve the all pairs SDD effective resistance estimation problem in $\Otilde{n^2 \epsilon^{-c}}$ time. 
\end{restatable}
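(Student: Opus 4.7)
My plan is to reduce the all pairs SDD effective resistance estimation problem on $\M = \I - \Q$ (with $\rho(\Q) \leq 1/3$) to the all pairs graph effective resistance estimation problem on an $O(n)$-vertex expander $H$ in such a way that each SDD effective resistance is recovered by $O(1)$ arithmetic operations on the returned graph effective resistances. Because one call to the expander Reff algorithm produces all $O(n^2)$ pairwise graph effective resistances in $\Otilde{n^2\epsilon^{-c}}$ time and the post-processing is $O(1)$ per pair, this matches the claimed overall runtime.

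The graph $H$ is built via a Gremban-style double cover. Split $\Q = \Q_+ - \Q_-$ into its entrywise nonnegative and nonpositive parts (both with zero diagonal), and define $H$ on vertices $\{1, \ldots, n, 1', \ldots, n'\}$ with intra-copy edges $(i, j), (i', j')$ of weight $(\Q_+)_{i,j}$, inter-copy edges $(i, j'), (i', j)$ of weight $(\Q_-)_{i,j}$, and slack edges $(i, i')$ of weight $s_i := 1 - \sum_{j \neq i}|\Q_{i,j}|$ so that $H$ is $1$-regular. Under the orthogonal change of basis $(x, y) \mapsto ((x+y)/\sqrt 2, (x-y)/\sqrt 2)$, $\laplacian{H}$ block-diagonalizes into $\I - \P$ (with $\P := |\Q| + \mathrm{diag}(s)$ row-stochastic) and $\M + \mathrm{diag}(s)$; solving $\laplacian{H}(y, -y) = (\deltaab, -\deltaab)$ and polarizing $(\deltaab, -\deltaab) = \delta_{a, a'} - \delta_{b, b'}$, together with the mirror symmetries $r_H(a, b) = r_H(a', b')$ and $r_H(a, b') = r_H(a', b)$, yields
\begin{align*}
\deltaab^\top (\M + \mathrm{diag}(s))^{-1} \deltaab = \tfrac{1}{2}\bigl(r_H(a, a') + r_H(b, b') + 2 r_H(a, b) - 2 r_H(a, b')\bigr).
\end{align*}
To obtain $\deltaab^\top \M^{-1} \deltaab$ rather than the shifted version, I will preprocess by augmenting $\M$ with $O(n)$ auxiliary ``ground'' vertices (one per nonzero slack $s_i$) to produce a tight SDD instance in which $s \equiv 0$ on the original vertex set, and argue via Schur complement that the effective resistances of interest on the original coordinates are recovered exactly.

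The hard part will be ensuring that $H$ has $\widetilde{\Omega}(1)$-expansion. By Cheeger's inequality this reduces to lower bounding $\lambda_2(\laplacian{H})$; the antisymmetric block $\M + \mathrm{diag}(s) \succeq (1 - \rho(\Q))\I \succeq (2/3)\I$ is immediate, but the stochastic block $\I - \P$ is problematic because $\rho(\Q) \leq 1/3$ bounds only the \emph{signed} spectrum of $\Q$ and not the spectral gap of the unsigned matrix $|\Q| + \mathrm{diag}(s)$. I plan to handle this by regularizing: perturb $\Q$ to $\Q - \gamma(\I - \J/n)$ for a carefully chosen $\gamma = \widetilde{\Theta}(1)$, so that the background expander structure contributed to $\P$ forces $\lambda_2(\I - \P) = \widetilde{\Omega}(1)$ while the perturbation shifts each SDD effective resistance by at most a $(1 \pm \epsilon/2)$ multiplicative factor (via $\norm{\M^{-1}} \leq 3/2$ and standard matrix perturbation bounds).

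For the approximation analysis, since $\kappa(\M) \leq 2$ every SDD effective resistance lies in $[3/2, 3]$, and by the symmetry of the construction each of the four $r_H$ values appearing in the formula is $\Theta(\deltaab^\top \M^{-1} \deltaab)$; hence a $(1 \pm \epsilon')$ multiplicative graph-Reff guarantee propagates to a $(1 \pm O(\epsilon'))$ SDD-Reff guarantee on the original coordinates. Rescaling $\epsilon'$ by a constant factor then yields the desired $(1 \pm \epsilon)$ accuracy in $\Otilde{n^2 \epsilon^{-c}}$ time and completes the reduction.
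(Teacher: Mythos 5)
Your overall plan — Gremban double cover, polarization to recover $\deltaab^\top(\cdot)^{-1}\deltaab$ from four pairwise graph effective resistances, and a Schur-complement/ground device to remove the $\mathrm{diag}(s)$ shift — is in the same spirit as the paper, and your polarization identity is correct. But the regularization step you introduce to force expansion has a genuine and irreparable conflict of parameters. Write $\Pi := \I - \J/n$ and note $\Pi\deltaab = \deltaab$. Since $\M \preceq \M + \gamma\Pi \preceq (1 + \tfrac{3\gamma}{2})\M$ (using $\lambda_{\min}(\M) \ge 2/3$), one gets
\begin{align*}
\frac{1}{1 + \tfrac{3\gamma}{2}}\,\deltaab^\top \M^{-1}\deltaab \;\le\; \deltaab^\top(\M + \gamma\Pi)^{-1}\deltaab \;\le\; \deltaab^\top\M^{-1}\deltaab,
\end{align*}
and this is tight: for $\M = \I$ the perturbed quadratic form is exactly $2/(1+\gamma)$ versus $2$. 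So $(1 \pm \epsilon/2)$ accuracy forces $\gamma = O(\epsilon)$. Meanwhile, your $H$ routes the slack mass $s_i$ into the \emph{internal} edge $\{i, i'\}$, so the cut $S = \{i, i'\}$ has conductance $1 - s_i' \approx \sum_{j\neq i}|\Q_{ij}| + \gamma$, which for an instance with $\sum_j|\Q_{ij}| = o(1)$ (perfectly compatible with $\rho(\Q) \le 1/3$ and SDD) is $\Theta(\gamma)$; thus $\phi(H) = \widetilde\Omega(1)$ forces $\gamma = \widetilde\Omega(1)$. The two requirements on $\gamma$ are incompatible whenever $\epsilon = o(1/\mathrm{polylog}(n))$ — precisely the regime the downstream triangle-detection reduction needs ($\epsilon = \widetilde O(n^{-2})$).

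The paper sidesteps both the $\mathrm{diag}(s)$ shift and the expansion problem with a single cleaner device. After the Gremban doubling produces an entrywise nonnegative $\Q$, it does \emph{not} attach slack edges between mirror pairs; instead it appends one auxiliary ground vertex $g$ connected to each $u$ with weight $v_u := ((\I - \Q)\one)_u$, producing a genuine Laplacian $\mathbf{L}$ with $\I - \Q$ as a principal submatrix. Because slack now lives on edges to $g$, the dangerous cut $\{i, i'\}$ has its slack edges on the boundary rather than inside, and every cut $S$ not containing $g$ has cut weight $|S| - \one_S^\top \Q \one_S \ge (1 - \rho(\Q))|S| \ge \tfrac23 |S|$ against volume $|S|$ — equivalently, Cauchy interlacing gives $\lambda_2(\mathbf{L}) \ge \lambda_1(\I - \Q) \ge 2/3$. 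No regularization is needed, so there is no $\gamma$ to tune against $\epsilon$. The ground vertex also gives $\deltaij^\top\mathbf{L}^\dagger\deltaij = \deltaij^\top(\I - \Q)^{-1}\deltaij$ \emph{exactly}, so there is no $\mathrm{diag}(s)$ to cancel afterwards. Replacing your slack edges and regularization with this one ground vertex would repair the argument.
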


To prove Lemma~\ref{lemma:grapheffres_to_sddeffres}, we first prove the lemma for the case where $\Q$ is entrywise non-negative in \Cref{lemma:lowerbound_thm2_lemma1} by constructing an expander $G$ with $n+1$ vertices such that $r_{G}(a, b) = \deltaab^\top \M^{-1}\deltaab$ for all $a, b \in [n]$. Then, in \Cref{lemma:lowerbound_thm2_lemma2}, we extend the reduction to arbitrary $\Q$ by constructing $\Q'$ of size $2n$ so that $\Q'$ is entrywise non-negative and $\widetilde{r}(\I - \Q)$ is a simple linear transform of $\widetilde{r}(\I - \Q')$. 

\begin{lemma}\label{lemma:lowerbound_thm2_lemma1} Given an algorithm that solves the all pairs effective resistance estimation problem on graphs with $\widetilde{\Omega}(1)$-expansion in $\Otilde{n^2\epsilon^{-c}}$ time for some $c > 0$, we can produce an algorithm which takes as input an SDD matrix $\M = \I - \Q$ such that $\Q$ is entrywise non-negative and $\Rho{\Q} \leq \frac{1}{3}$, and solves the all pairs SDD effective resistance estimation problem for $\M$ in $\Otilde{n^2 \epsilon^{-c}}$ time.
\end{lemma}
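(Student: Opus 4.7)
The plan is to reduce the all pairs SDD effective resistance problem (in the non-negative $\Q$ case) to a single invocation of the assumed all pairs effective resistance algorithm on an expander with $n+1$ vertices. Given $\M = \I - \Q$ with $\Q$ symmetric non-negative, $\Rho{\Q} \leq 1/3$, and strict row sums $\sum_j Q_{ij} < 1$ (from the SDD condition with $\D_{\M} = \I$), I would construct $G = ([n+1], E, w)$ by placing, for each $i \neq j \in [n]$, an edge of weight $Q_{ij}$, and, for each $i \in [n]$, an edge to a ``ground'' vertex $n+1$ of weight $w_{i,n+1} := 1 - \sum_{j} Q_{ij} > 0$. By construction, deleting the $(n+1)$-th row and column of $\laplacian{G}$ recovers $\M$ exactly, so the classical grounded-Laplacian identity gives
\begin{align*}
\rGab \;=\; \deltaab^\top \laplacian{G}^\dagger \deltaab \;=\; \deltaab^\top \M^{-1} \deltaab \quad \text{for all } a,b \in [n].
\end{align*}
Hence any $\epsilon$-accurate effective resistance estimate on $G$ directly yields an $\epsilon$-accurate SDD effective resistance estimate for $\M$ on $[n]\times[n]$.

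The main technical step is to verify that $G$ has $\widetilde{\Omega}(1)$ expansion. By Cheeger's inequality (Theorem~\ref{thm:cheeger}) it suffices to lower bound $\lambda_2(\normalized{\laplacian{G}})$ by a universal constant, equivalently to upper bound the second-largest eigenvalue of the normalized adjacency, which has the block form
\begin{align*}
\I - \normalized{\laplacian{G}} \;=\; \begin{pmatrix} \Q & u \\ u^\top & 0 \end{pmatrix}, \qquad u_i \;:=\; \frac{w_{i,n+1}}{\sqrt{d_{n+1}}}.
\end{align*}
The key quantitative inputs are $d_{n+1} = n - \mathbf{1}^\top \Q \mathbf{1} \geq n(1 - \Rho{\Q}) \geq 2n/3$ and $\norm{\Q \mathbf{1}}_2 \leq \Rho{\Q} \sqrt{n}$, both of which follow from symmetry and non-negativity of $\Q$. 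For any unit $y = (x; z) \perp \D_G^{1/2}\mathbf{1}$ the orthogonality translates to $\mathbf{1}^\top x = -z\sqrt{d_{n+1}}$, and a direct expansion using $u_i \sqrt{d_{n+1}} = 1 - \sum_j Q_{ij}$ produces
\begin{align*}
y^\top (\I - \normalized{\laplacian{G}}) y \;=\; x^\top \Q x \;-\; 2 z^2 \;-\; \frac{2 z \, \mathbf{1}^\top \Q x}{\sqrt{d_{n+1}}}.
\end{align*}
Bounding $|x^\top \Q x| \leq \Rho{\Q} \norm{x}_2^2$, applying Cauchy--Schwarz to the cross term via $\norm{\Q\mathbf{1}}_2 \leq \Rho{\Q}\sqrt{n}$, and AM--GM on $|z|\norm{x}_2$, one obtains $y^\top (\I - \normalized{\laplacian{G}}) y \leq (1-c)\norm{y}_2^2$ for an explicit constant $c > 0$. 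Hence $\lambda_2(\normalized{\laplacian{G}}) \geq c$, and Cheeger's inequality gives $\phi(G) = \Omega(1)$.

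\textbf{Runtime and main obstacle.} Constructing $G$ takes $O(\nnz(\M))$ time, $G$ has $n+1$ vertices and at most $\nnz(\M) + n$ edges, so invoking the assumed $\Otilde{n^2\epsilon^{-c}}$-time algorithm on $G$ and reporting its outputs for pairs in $[n]\times [n]$ solves the all pairs SDD effective resistance problem in $\Otilde{n^2\epsilon^{-c}}$ time. The hard part will be the expansion argument: the bound $\Rho{\Q} \leq 1/3$ alone does not force expansion of the grounded graph, and the proof must exploit that the ground vertex carries an $\Omega(n)$ fraction of the total edge weight, which together with the spectral gap of $\I - \Q$ pulls the second eigenvalue of the normalized adjacency away from $1$. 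Balancing the cross term $2z\mathbf{1}^\top \Q x/\sqrt{d_{n+1}}$ against the $-2z^2$ contribution so that the combined bound stays strictly below $1$ is where the constants must line up, and I expect this is the most delicate step.
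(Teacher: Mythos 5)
Your construction is exactly the one the paper uses: appending a ground vertex $n+1$ with edge weights $v_i = 1 - \sum_j Q_{ij} > 0$ to each $i\in[n]$, and placing weight $Q_{ij}$ on each $\{i,j\}$, yields the same Laplacian $\mathbf{L}$; and the reduction of $\deltaij^\top\M^{-1}\deltaij$ to $r_G(i,j)$ (the grounded-Laplacian identity) is the same, so the correctness and runtime parts of your argument match the paper.

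Where you diverge is the expansion argument. The paper's argument is considerably shorter: since the first $n$ vertices all have weighted degree exactly $1$, the top-left $n\times n$ principal submatrix of the \emph{normalized} Laplacian $\normalized{\laplacian{G}}$ is exactly $\I-\Q=\M$, and Cauchy eigenvalue interlacing immediately gives $\lambda_2(\normalized{\laplacian{G}}) \geq \lambda_1(\M) \geq 1-\Rho{\Q} \geq 2/3$, so Cheeger finishes it in one line. Your Rayleigh-quotient route — expanding $y^\top(\I-\normalized{\laplacian{G}})y$ for $y=(x;z)\perp \D_G^{1/2}\one$, eliminating $\one^\top x$ via the orthogonality constraint, and bounding the resulting $x^\top\Q x - 2z^2 - 2z\one^\top\Q x/\sqrt{d_{n+1}}$ via $\Rho{\Q}\leq 1/3$, $\norm{\Q\one}_2\leq \sqrt{n}/3$, $d_{n+1}\geq 2n/3$, and AM--GM — is correct and does give a constant gap (the terms come out to roughly $\tfrac13 + \tfrac{1}{\sqrt 6} < 1$), but it is more work and produces a slightly weaker constant. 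Also, your remark that ``$\Rho{\Q}\leq 1/3$ alone does not force expansion of the grounded graph'' is not quite right: once you observe that normalization leaves the top $n\times n$ block untouched, interlacing shows $\Rho{\Q}\leq 1/3$ is already sufficient. So your proof is valid but takes the harder road on the step you correctly flagged as the delicate one; the interlacing observation would have bypassed the constant-balancing entirely.
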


\begin{proof}
Let $v := (\I-\Q) \one$. Note that $v$ is entrywise non-negative. Consider the matrix 
\begin{align*}
    \mathbf{L} := \begin{pmatrix}
        \I & 0 \\
        0 & \Norm{v}_1
    \end{pmatrix} - \begin{pmatrix}
        \Q & v \\
        v^\top & 0
    \end{pmatrix}. 
\end{align*}
and note that it is the Laplacian matrix. Since $\M$ is a principal submatrix of $\mathbf{L}$, we have by the eigenvalue interlacing theorem \cite{hwang2004cauchy} that $\lambda_1 \pars{\M} \leq \lambda_2 \pars{ \mathbf{L} }$. But since $\rho\pars{\Q} \leq 1/3$, we have that $\lambda_1 \pars{\M} \leq 2/3$ and consequently  $\lambda_2\pars{\mathbf{L}} \geq 2/3.$ Therefore, $\mathbf{L}$ is the Laplacian of an $\widetilde{\Omega}(1)$-expander (see Cheeger's Inequality \Cref{thm:cheeger}).

We claim that for $i,j \in n$, 
\begin{align*}
    \deltaij^\top \Ldagger \deltaij = \deltaij^\top \M^{-1} \deltaij, 
\end{align*}
which is sufficient to prove the lemma. Note that for any $x \in \R^n$, $y \in \R$, and $\alpha \in \R$, 
\begin{align*}
    \mathbf{L} \begin{pmatrix}
        x + \alpha \one \\
        y + \alpha
    \end{pmatrix} = \begin{pmatrix}
        (\I - \Q) x + v^\top x \\
        v^\top x - \Norm{v}_1 y 
    \end{pmatrix}. 
\end{align*}
Let 
\begin{align*}
    \Ldagger \deltaij = \begin{pmatrix}
        z_x \\
        z_y
    \end{pmatrix}. 
\end{align*}
Then, note that 
\begin{align*}
    \mathbf{L} \begin{pmatrix}
        z_x - z_y \one \\
        0
    \end{pmatrix} = \begin{pmatrix}
        \deltaij \\
        0
    \end{pmatrix} \implies (\I - \Q) (z_x - z_y \one) = \deltaij. 
\end{align*}

Consequently, 
\begin{align*}
    \deltaij^\top (\I - \Q)^{-1} \deltaij = \deltaij^\top (z_x - z_y \one) = \deltaij^\top z_x - z_y \deltaij^\top \one = \deltaij^\top z_x = \deltaij^\top \Ldagger \deltaij. 
\end{align*}
\end{proof}

Now, to prove \Cref{lemma:lowerbound_thm2_lemma2}, we first show a useful property of block symmetric matrices in the helper lemma below.
\begin{lemma}\label{lemma:lowerbound_thm2_lemma0} Suppose $\A = \begin{pmatrix}
    \X & \Y \\
    \Y & \X
\end{pmatrix}$ where $\X, \Y \in \mathbb{R}^{n \times n}$. Then, 
\begin{align*}
    \deltaij^\top (\X-\Y) \vec{\delta}_{i,j} = \frac{1}{2} \left[\vec{\delta}_{i,n+i}^\top \A \vec{\delta}_{i,n+i} + \vec{\delta}_{j,n+j}^\top \A \vec{\delta}_{j,n+j}\right] - \vec{\delta}_{i,n+j}^\top \A \vec{\delta}_{i,n+j} + \vec{\delta}_{i,j}^\top \A \vec{\delta}_{i,j}. 
\end{align*}
\end{lemma}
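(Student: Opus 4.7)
The plan is a direct entrywise computation exploiting the block structure of $\A$. The key observation is that for any $v = (v_1, v_2) \in \R^n \times \R^n$ written in block form,
\begin{align*}
v^\top \A v = v_1^\top \X v_1 + v_1^\top \Y v_2 + v_2^\top \Y v_1 + v_2^\top \X v_2.
\end{align*}
I would apply this to each of the four vectors on the right-hand side by reading off their block decompositions: $\vec{\delta}_{i,n+i} = (e_i, -e_i)$, $\vec{\delta}_{j,n+j} = (e_j, -e_j)$, $\vec{\delta}_{i,n+j} = (e_i, -e_j)$, and $\vec{\delta}_{i,j} = (e_i - e_j, 0)$. Each resulting quadratic form is then a short combination of the entries $\X_{ii}, \X_{jj}, \X_{ij}, \X_{ji}, \Y_{ii}, \Y_{jj}, \Y_{ij}, \Y_{ji}$; for instance $\vec{\delta}_{i,n+i}^\top \A \vec{\delta}_{i,n+i} = 2\X_{ii} - 2\Y_{ii}$ and $\vec{\delta}_{i,n+j}^\top \A \vec{\delta}_{i,n+j} = \X_{ii} + \X_{jj} - \Y_{ij} - \Y_{ji}$.

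With these four evaluations in hand, I would substitute them into the claimed identity and collect coefficients. The left-hand side expands entrywise as $\deltaij^\top (\X - \Y) \deltaij = \X_{ii} + \X_{jj} - \X_{ij} - \X_{ji} - \Y_{ii} - \Y_{jj} + \Y_{ij} + \Y_{ji}$. On the right-hand side, the ``pure diagonal'' contributions $\X_{ii}, \X_{jj}, \Y_{ii}, \Y_{jj}$ coming from the first two terms (scaled by $1/2$) cancel against those contributed by $\vec{\delta}_{i,n+j}^\top \A \vec{\delta}_{i,n+j}$ and $\vec{\delta}_{i,j}^\top \A \vec{\delta}_{i,j}$, and the remaining off-diagonal pieces reassemble into exactly the LHS.

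There is no genuine obstacle; the lemma is a pure bookkeeping identity, presumably to be used to rewrite the SDD quadratic form $\deltaij^\top (\X - \Y) \deltaij$ as a linear combination of quadratic forms in the larger matrix $\A$, so that the reduction in Lemma~\ref{lemma:lowerbound_thm2_lemma2} can invoke an all-pairs effective resistance oracle on a graph encoding $\A$. The only mild care needed is to \emph{not} assume $\X$ or $\Y$ is symmetric: tracking $\X_{ij}$ and $\X_{ji}$ (and similarly for $\Y$) as distinct quantities throughout the calculation verifies that the identity holds for arbitrary $\X, \Y \in \R^{n \times n}$.
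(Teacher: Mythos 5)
Your proof is correct and takes essentially the same approach as the paper: evaluate each of the four quadratic forms on the right-hand side entrywise using the block structure of $\A$, then check the linear combination collapses to $\deltaij^\top(\X-\Y)\deltaij$. One small point of care you handle more cleanly than the paper: the paper's intermediate expressions write $2\X_{i,j}$ and $2\Y_{i,j}$, implicitly assuming $\X$ and $\Y$ are symmetric (true in the application where they are the positive and negative parts of the symmetric matrix $\Qp$), whereas you correctly track $\X_{i,j}$ and $\X_{j,i}$ (and $\Y_{i,j}$, $\Y_{j,i}$) separately, verifying the identity for arbitrary $\X,\Y\in\R^{n\times n}$ exactly as the lemma statement requires.
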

\begin{proof} Note that 
\begin{align*}
    \deltaij^\top (\X-\Y) \vec{\delta}_{i,j} &= (\X_{i,i} + \X_{j,j}) - 2 \X_{i,j} + 2\Y_{i,j} - (\Y_{i,i} + \Y_{j,j}). 
\end{align*}
Meanwhile, 
\begin{align*}
    \frac{1}{2} \vec{\delta}_{i,n+i}^\top \A \vec{\delta}_{i,n+i} &= (\X_{i,i} - \Y_{i,i}), \\
    \frac{1}{2} \vec{\delta}_{j,n+j}^\top \A \vec{\delta}_{j,n+j} &= (\X_{j,j} - \Y_{j,j}), \\
    -\vec{\delta}_{i, n+j}^\top \A \vec{\delta}_{i,n+j} &= -\X_{i,i} + 2 \Y_{i,j} - \X_{j,j}, \\
    \deltaij^\top \A \deltaij &= \X_{i,i} - 2 \X_{i,j} + \X_{j,j}, 
\end{align*}
and adding these four terms together concludes the proof. 
\end{proof}

\begin{lemma}\label{lemma:lowerbound_thm2_lemma2} Suppose we are given an algorithm which takes as input an SDD matrix $\M = \I - \Q$ such that $\Q$ is entrywise non-negative and $\Rho{\Q} \leq \frac{1}{3}$, and solves the all pairs SDD effective resistance estimation problem for $\M$ in $\Otilde{n^2 \epsilon^{-c}}$ time. Then, we can produce an algorithm which takes as input an SDD matrix $\Mp = \I - \Qp$ and $\Rho{\Qp} \leq \frac{1}{3}$, and solves the all pairs SDD effective resistance estimation problem for $\Mp$ in $\Otilde{n^2 \epsilon^{-c}}$ time.
\end{lemma}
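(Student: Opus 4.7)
The plan is to apply \Cref{lemma:lowerbound_thm2_lemma0} after embedding $\Mp^{-1}$ as the difference $\X - \Y$ of the diagonal-minus-off-diagonal blocks of the inverse of a larger, entrywise-nonnegative SDD matrix. The key device is a signed decomposition plus a doubling: we ``unfold'' the signed off-diagonal $\Qp$ into a $2n \times 2n$ nonnegative block matrix whose block diagonalization decouples into two invertible pieces, one of which is exactly $\Mp$.

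Concretely, I will decompose $\Qp = \Qp_+ - \Qp_-$ entrywise with $\Qp_+ := \max(\Qp, 0)$ and $\Qp_- := -\min(\Qp, 0)$, so that both are entrywise nonnegative and $|\Qp| = \Qp_+ + \Qp_-$. Then I set
$$\Q := \begin{pmatrix} \Qp_+ & \Qp_- \\ \Qp_- & \Qp_+ \end{pmatrix}, \qquad \M := \I - \Q.$$
Here $\Q$ is entrywise nonnegative and has zero diagonal (since $\Qp$ does, hence so do $\Qp_+$ and $\Qp_-$), and $\M$ is SDD because the $i$-th and $(n+i)$-th row sums of $|\Q|$ both equal $\sum_j |[\Qp]_{i,j}| < 1$ by the SDD property of $\Mp$. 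Using the orthogonal involution $P := \tfrac{1}{\sqrt{2}}\begin{pmatrix} \I & \I \\ \I & -\I \end{pmatrix}$, which satisfies $P^\top = P$ and $P^2 = \I_{2n}$, a short block computation yields $P\M P = \begin{pmatrix} \I - |\Qp| & 0 \\ 0 & \Mp \end{pmatrix}$. Inverting, $\M^{-1}$ takes the block-symmetric form $\begin{pmatrix} \X & \Y \\ \Y & \X \end{pmatrix}$ with $\X = \tfrac{1}{2}\bigl((\I - |\Qp|)^{-1} + \Mp^{-1}\bigr)$ and $\Y = \tfrac{1}{2}\bigl((\I - |\Qp|)^{-1} - \Mp^{-1}\bigr)$, so crucially $\X - \Y = \Mp^{-1}$.

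Next, applying \Cref{lemma:lowerbound_thm2_lemma0} to $\A = \M^{-1}$ gives, for every $i, j \in [n]$,
$$\deltaij^\top \Mp^{-1} \deltaij = \tfrac{1}{2}\bigl(\vec{\delta}_{i,n+i}^\top \M^{-1} \vec{\delta}_{i,n+i} + \vec{\delta}_{j,n+j}^\top \M^{-1} \vec{\delta}_{j,n+j}\bigr) - \vec{\delta}_{i,n+j}^\top \M^{-1} \vec{\delta}_{i,n+j} + \deltaij^\top \M^{-1} \deltaij.$$
Every term on the right is an SDD effective resistance of $\M$ between a pair of the $2n$ vertices. Thus a single invocation of the hypothesized algorithm on $\M$, at query precision $\Theta(\epsilon)$, returns estimates of all four quantities simultaneously; taking the prescribed linear combination yields the desired estimate of $\deltaij^\top \Mp^{-1} \deltaij$ for every pair $i,j$. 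Since $2n = \Theta(n)$, the total runtime is $\widetilde{O}(n^2 \epsilon^{-c})$ as claimed.

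The main obstacles are two technical points. First, we must ensure $\Q$ satisfies $\rho(\Q) \leq \tfrac{1}{3}$, which is needed for $\M$ to be a legal input; the block diagonalization shows $\rho(\Q) = \max(\rho(|\Qp|), \rho(\Qp))$, and only $\rho(\Qp) \leq \tfrac{1}{3}$ follows directly from the input hypothesis. This will be handled by first rescaling $\Qp$ by a small constant $c$ (noting that in the eventual application to triangle detection, $\Qp = \tfrac{\alpha}{n}\As_{\Gprime}$ already satisfies $\rho(|\Qp|) = O(\alpha) \leq \tfrac{1}{3}$ automatically, so no rescaling is needed in that regime). Second, we must propagate multiplicative accuracy through the linear combination: since all four right-hand-side quadratic forms are nonnegative, and each is bounded term-by-term using the power series expansions $\sum_{k\geq 0} \Qp^k$ and $\sum_{k \geq 0} |\Qp|^k$ (both convergent under $\rho(\Q) \leq \tfrac{1}{3}$) to be within a constant factor of $\deltaij^\top \Mp^{-1} \deltaij$, querying at precision $\Theta(\epsilon)$ and composing losses only a constant factor, which is absorbed into $\widetilde{O}(\cdot)$.
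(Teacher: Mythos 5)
Your argument reaches the same identity as the paper's, namely an application of \Cref{lemma:lowerbound_thm2_lemma0} to $\M^{-1}$ where $\M = \I - \Q$, via the same decomposition ($\Qp_+ = \Ps$ and $\Qp_- = \N$ in the paper's notation) and the same doubled matrix $\Q$, but by a cleaner route. The paper establishes the block structure of $\M^{-1}$ indirectly: it proves by induction that the diagonal-block minus off-diagonal-block of $\Q^k$ equals $\Qp^k$, then sums the Neumann series term by term. You instead conjugate once by the orthogonal involution $P$ to get $P\M P = \mathrm{blkdiag}(\I - |\Qp|,\, \Mp)$, which gives $\X - \Y = \Mp^{-1}$ in a single step and dispenses with both the induction and the power series. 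The conjugation view also makes the spectral picture transparent: $\Spectrum{\Q} = \Spectrum{|\Qp|} \cup \Spectrum{\Qp}$, so $\rho(\Q) = \max(\rho(|\Qp|), \rho(\Qp))$.

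The spectral-radius obstacle you flag is real and is in fact a gap in the paper's own proof. The paper's contradiction argument that $\rho(\Q) \leq 1/3$ tacitly assumes $x_1 \neq x_2$ for an eigenvector $x = [x_1;x_2]$ of $\Q$; when $x_1 = x_2$ one obtains an eigenvalue of $\Ps + \N = |\Qp|$, not of $\Qp$, and the hypothesis controls only $\rho(\Qp)$. Your proposed rescaling fix does not work as stated, however, since replacing $\Qp$ by $c\Qp$ changes $\Mp$ and hence the target quantities $\deltaij^\top \Mp^{-1}\deltaij$. The correct repair, which you also note, is that $\rho(|\Qp|) \leq \norm{\Qp}_\infty$; in the downstream application (\Cref{theorem:sddeffres_hard_as_tridetect}) this is at most $\alpha < 1/3$ by construction, and to make \Cref{lemma:lowerbound_thm2_lemma2} correct as a standalone statement one should strengthen the input hypothesis to bound $\norm{\Qp}_\infty$ rather than only $\rho(\Qp)$. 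Finally, your point about accuracy propagation through the linear combination is correct and worth making explicit: all four quadratic forms on the right lie in $[3/2,3]$ when $\rho(\Q) \leq 1/3$, so querying at precision $\Theta(\epsilon)$ loses only a constant factor, which the paper's proof leaves implicit.
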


\begin{proof} We can decompose $\Qp$ as $\Qp = \Ps - \N$ where $\Ps$ is a matrix which contains only the positive offdiagonal entries of $\Qp$ and $-\N$ is a matrix which contains all the negative offdiagonal entries. Therefore both $\Ps$ and $\N$ themselves are entrywise non-negative. We define
\begin{align*}
    \Q :=  \begin{pmatrix}
        \Ps & \N \\
        \N & \Ps
    \end{pmatrix}.
\end{align*}
 
Note that $\Q$ is also entrywise non-negative.

We also have $\rho (\Q) \leq 1/3$. To see this, assume for the sake of contradiction that $\Q$ has an eigenvalue $\lambda > 1/3$. This means that there must exist some $x \in \R^{2n}$ such that $\Q x = \lambda x$. Let $x = [x_1; x_2]$ where $x_1, x_2 \in \mathbb{R}^n$. The eigenvalue equation then implies that $\Ps x_1 + \N x_2 = \lambda x_1$ and $\N x_1 + \Ps x_2 = \lambda x_2$. Subtracting these equations yields $(\Ps - \N)(x_1 - x_2) = \lambda (x_1 - x_2)$ which means that $\lambda$ is also an eigenvalue of $\Qp$. This is a contradiction since $\rho(\Qp) \leq 1/3$. Therefore, $\rho(\Q) \leq 1/3$.

Now, consider the following block decomposition of $\Q^k$
\begin{align*}
    \Q^k = \begin{pmatrix}
        \X & \Y \\ 
        \Y & \X
    \end{pmatrix}
\end{align*}
where $\X, \Y \in \R^{n \times n}$. We will show by induction that $\Qp^k = (\Ps - \N)^k = \X - \Y$. In the base case, when $k = 1$, this is trivially true. Now, assume that the claim holds for all $k \leq m$ for some $m$. Consider the following block decomposition of $\Q^m$
\begin{align*}
    \Q^m = \begin{pmatrix}
        \W & \Z \\ 
        \Z & \W
    \end{pmatrix}.
\end{align*}
By the inductive hypothesis, we know that  $\Qp^m = (\Ps - \N)^m = \W - \Z$. Now, we have that
\begin{align*}
    \Q^{m+1} = \begin{pmatrix}
        \W & \Z \\ 
        \Z & \W
    \end{pmatrix}  \begin{pmatrix}
        \Ps & \N \\ 
        \N & \Ps
    \end{pmatrix} = \begin{pmatrix}
        \W\Ps + \Z\N & \W\N + \Z\Ps \\ 
        \W\N + \Z\Ps & \W\Ps + \Z\N
    \end{pmatrix},
\end{align*}
and also that 
\begin{align*}
    \Qp^{m+1} = (\W - \Z)(\Ps - \N) = \W\Ps - \Z\Ps - \W\N + \Z\N = (\W\Ps + \Z\N) - (\W\N + \Z\Ps). 
\end{align*}
Hence, the claim follows by induction. By Lemma~\ref{lemma:lowerbound_thm2_lemma0}, it follows that 
\begin{align}\label{eq:lemma:lowerbound_thm2_lemma1_eq1}
    \deltaij^\top \Qp^k \deltaij =  \frac{1}{2} \left[\vec{\delta}_{i,n+i}^\top \Q^k \vec{\delta}_{i,n+i} + \vec{\delta}_{j,n+j}^\top \Q^k \vec{\delta}_{j,n+j}\right] - \vec{\delta}_{i,n+j}^\top \Q^k \vec{\delta}_{i,n+j} + \vec{\delta}_{i,j}^\top \Q^k \vec{\delta}_{i,j}. 
\end{align}

Now we can use the power series expansion of $(\I - \Q)^{-1}$ to say, for any $u, v \in [2n]$,
\begin{align*}
    \vec{\delta}_{u,v}^\top (\I - \Q)^{-1} \vec{\delta}_{u,v} = \sum_{k=0}^\infty \vec{\delta}_{u,v}^\top \Q^k \vec{\delta}_{u,v}. 
\end{align*}
Similarly, for any $i, j \in [n]$, 
\begin{align*}
    \vec{\delta}_{i,j}^\top (\I - \Qp)^{-1} \vec{\delta}_{i,j} = \sum_{k=0}^\infty \vec{\delta}_{i,j}^\top \Qp^k \vec{\delta}_{i,j}. 
\end{align*}
So, by linearity and \eqref{eq:lemma:lowerbound_thm2_lemma1_eq1}, it follows that 
\begin{align*}
    \deltaij^\top (\I - \Qp)^{-1} \deltaij =&  \frac{1}{2} \left[\vec{\delta}_{i,n+i}^\top (\I - \Q)^{-1} \vec{\delta}_{i,n+i} + \vec{\delta}_{j,n+j}^\top (\I - \Q)^{-1} \vec{\delta}_{j,n+j}\right]\\ 
    &-\vec{\delta}_{i,n+j}^\top (\I - \Q)^{-1} \vec{\delta}_{i,n+j} + \vec{\delta}_{i,j}^\top (\I - \Q)^{-1} \vec{\delta}_{i,j},
\end{align*}
and this completes the proof.

\end{proof}
\Cref{lemma:grapheffres_to_sddeffres} then follows directly from \Cref{lemma:lowerbound_thm2_lemma1} and \Cref{lemma:lowerbound_thm2_lemma2}. We now turn our attention to reducing the triangle detection problem to the all pairs SDD effective resistance estimation problem. As discussed, a key aspect of our approach is to work with the random signing $\widebar{\A}_G$ of $\A_G$. Lemma~\ref{lemma:signing_2paths} shows that to determine whether $(\A_G^2)_{i,j} > 0$ with constant probability, it suffices to determine whether $(\widebar{\A}_G^2)_{i,j} > 0$.

\begin{restatable}{lemma}{lemmaSigningTwoPaths}\label{lemma:signing_2paths} For $i \neq j$, if $(\A_{G}^2)_{i,j} = 0$, then $(\As_{G}^2)_{i,j} = 0$; if $(\A_G^2)_{i,j} > 0$, $\mathbb{P}\left[\lvert (\As_G^2)_{i,j}\rvert > 1\right] \geq 1/2$. 
\end{restatable}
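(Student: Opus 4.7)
The first half is a direct algebraic identity. Writing
\begin{equation*}
(\As_G^2)_{i,j} = \sum_k \xi_{i,k}\xi_{k,j}(\A_G)_{i,k}(\A_G)_{k,j},
\end{equation*}
and using non-negativity of $\A_G$, the assumption $(\A_G^2)_{i,j} = \sum_k (\A_G)_{i,k}(\A_G)_{k,j} = 0$ forces each summand $(\A_G)_{i,k}(\A_G)_{k,j}$ to vanish individually, so the signed sum is zero for every realization of the $\xi$'s.

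For the second half, set $N = (\A_G^2)_{i,j} \geq 1$ and let $K$ denote the set of midpoints of the $N$ length-two walks from $i$ to $j$, so $|K| = N$. The plan is to prove that the sign products $Y_k := \xi_{i,k}\xi_{k,j}$, $k \in K$, are i.i.d.\ Rademacher variables, and then apply standard anti-concentration. Since $G$ has no self-loops, every $k \in K$ lies outside $\{i,j\}$. The first verification is that the $2N$ edges $\{i,k\}$ and $\{k,j\}$ for $k \in K$ are pairwise distinct: within a single $k$ the two edges differ because $i \neq j$; the edges $\{i,k\}$ and $\{i,k'\}$ differ because $k \neq k'$; and $\{i,k\} = \{k',j\}$ would force $i = k'$ (using $i \neq j$ together with $k, k' \notin \{i, j\}$), contradicting $k' \notin \{i, j\}$. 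Because distinct edges carry independent signs under the symmetric signing, the $2N$ variables $\{\xi_{i,k}, \xi_{k,j}\}_{k \in K}$ are mutually independent, and hence the $Y_k$'s are jointly i.i.d.\ Rademacher.

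Let $S_N := \sum_{k \in K} Y_k = (\As_G^2)_{i,j}$, which is integer-valued. It then suffices to lower-bound $\Prob{\abs{S_N} \geq 1}$ by $1/2$. This follows from the standard central binomial estimate: when $N$ is odd, $S_N \neq 0$ with probability $1$ by parity; when $N$ is even and positive, $\Prob{S_N = 0} = \binom{N}{N/2} 2^{-N} \leq 1/2$. Either way, $\Prob{\abs{S_N} \geq 1} \geq 1/2$.

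The main subtlety is the strict inequality $>1$ as written: for $N=1$ we have $\abs{S_N} = 1$ deterministically, and for $N=3$ one only gets $\Prob{\abs{S_N} > 1} = 1/4 < 1/2$, so the strict bound cannot hold uniformly in $N$. I would read the conclusion as $\abs{(\As_G^2)_{i,j}} \geq 1$ (equivalently, $(\As_G^2)_{i,j} \neq 0$, since $S_N$ is an integer), matching the paper's informal statement in Section~\ref{sec:lb_overview} that the random signing should ``preserve whether the entries of $\A_{\Gprime}^2$ are non-zero'' and supplying exactly what the downstream triangle-detection reduction needs.
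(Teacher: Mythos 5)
Your proof is correct and takes essentially the same route as the paper: both arguments establish that the products $\xi_{i,k}\xi_{k,j}$ over midpoints $k$ form a family of i.i.d.\ Rademacher variables (the distinctness of the $2N$ underlying edges being the key point), and then apply anti-concentration to their sum. The only cosmetic difference is in that last step: the paper conditions on all but one term and observes $\Prob{\xi' = -Z} \leq 1/2$, whereas you invoke the central binomial estimate; the conditioning route is marginally cleaner since it avoids the explicit $\binom{N}{N/2}2^{-N}$ bound, but the two are equivalent here.

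You are also right to flag the strict inequality ``$>1$'' as a typo. The paper's own proof reduces the claim to showing $\Prob{(\As_G^2)_{i,j} = 0} \leq 1/2$ and asserts this suffices ``since $(\As_G^2)_{i,j}$ is supported on the integers'' --- but for integer-valued $x$, the complement of $\{x=0\}$ is $\{|x|\geq 1\}$, not $\{|x|>1\}$, so what is actually proved is $\Prob{|(\As_G^2)_{i,j}| \geq 1} \geq 1/2$. Your counterexamples ($N=1$ gives $|S_N|=1$ deterministically; $N=3$ gives $\Prob{|S_N|>1}=1/4$) confirm the literal ``$>1$'' version cannot hold. The downstream application in Theorem~\ref{theorem:sddeffres_hard_as_tridetect} only uses $|\As^2_{i,j}| > 0$, so your reading $\geq 1$ is the intended one.
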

\begin{proof} 
The idea of the proof is that if $\{a, b\}, \{b, c\}$ exist in $G$, either $\xi_{a,c} = 1$ or $\xi_{a, c} = -1$ results in $(\As_{G}^2)_{a,c} > 0$. Note that 
\begin{align*}
    \left(\As^2\right)_{i,j} = \sum_{k=1}^n \As_{i,k} \As_{k,j}
\end{align*}

If $\A^2_{i,j} = 0$, then $G$ has no path of length exactly two between $i$ and $j$, so each term $\A_{i,k} \A_{k,j}$ in the summation above must be zero, and hence $\left(\As^2\right)_{i,j} = 0$, completing the first part of the lemma.

Since $\left(\Abar^2\right)_{i,j}$ is only supported on the integers, to prove the second statement, it suffices to show that when $\left(\A^2\right)_{i,j} > 0$, $\mathbb{P}\left[\left\lvert \left(\As^2\right)_{i,j}\right\rvert = 0\right]$ is no larger than 1/2. To see this, note that if $\A^2_{i,j} \neq 0$, then then $G$ has at least one path of length exactly 2 between $i$ and $j$. That is, there exists a $k' \in [n]$ such that $\A_{i,k'} \A_{k', j} = 1$. We can write 
\begin{align*}
    \left(\As^2\right)_{i,j} = \sum_{k = 1}^n \xi_{i, k} \xi_{k,j} \A_{i,k} \A_{k,j}
\end{align*}
Because $i \neq j$, for every $k, \ell \in [n]$ each $\xi_{i,k}$ is always independent from any other $\xi_{\ell, j}$ term appearing in the sum. Moreover, if $\xi_{i,k}$ appears in the sum, then $\xi_{k,i}$ never appears in the sum. Therefore, for each $\{\xi_{i, k} \xi_{k,j}\}_{k\in[n]}$ are themselves independently chosen Rademacher random variables, and all terms in the summation are independent. Separating out the $k'$-th entry,
\begin{align*}
    \left(\As^2\right)_{i,j} = \xi_{i, k'} \xi_{k', j} + \sum_{k \neq k'}^n \xi_{i, k} \xi_{k,j} \A_{i,k} \A_{k,j} \overset{D}{=} \xi' + Z, 
\end{align*}
where $\xi'$ is a Rademacher random variable and $Z := \sum_{k \neq k'}^n \xi_{i, k} \xi_{k,j} \A_{i,k} \A_{k,j}$ is independent from $\xi'$. For any value of $Z$, $\mathbb{P}[\xi' = -Z] \leq \frac{1}{2}$. Therefore, 
\begin{align*}
    \mathbb{P}\left[\left(\As^2\right)_{i,j} = 0\right] \leq \frac{1}{2}, 
\end{align*}
completing the second part of the result. 
\end{proof}

Matrix Chernoff (\Cref{lemma:concentration}) ensures whp. $\rho({\As}_G) = \Otilde{\sqrt{n}}$, while $\rho({\A}_G)$ could be as large as $n$ \cite{tropp2012user}. So, estimating entries of ${\As}_G$ leads to lower power series tail error. 

\begin{restatable}{lemma}{lemmaConcentration}\label{lemma:concentration} Let $G = (V, E)$ be an undirected unweighted graph on $n$ nodes. Let $\As_G$ denote a symmetric random signing of $\A_G$. With high probability, $\rho(\As) \leq \Otilde{\sqrt{n}}$.  
\end{restatable}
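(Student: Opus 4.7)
\textbf{Proof proposal for Lemma~\ref{lemma:concentration}.} The plan is to apply the matrix Bernstein inequality to a natural decomposition of $\As_G$ into independent, mean-zero, symmetric rank-$\le 2$ summands, one per edge of $G$. Since $G$ is simple and unweighted, $\A_G$ has zero diagonal, so $\As_G$ has zero diagonal as well, and the only randomness lives in the off-diagonal entries $\rad_{i,j}=\rad_{j,i}$ for $\{i,j\}\in E$.

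First, I would write
\[
\As_G \;=\; \sum_{\{i,j\}\in E} \X_{i,j}, \qquad \X_{i,j} \;\defeq\; \rad_{i,j}\,\bigl(e_i e_j^\top + e_j e_i^\top\bigr).
\]
By the definition of a symmetric random signing, the $\rad_{i,j}$ for $\{i,j\}\in E$ are independent Rademacher variables, so the $\X_{i,j}$ are independent, symmetric, and satisfy $\E{\X_{i,j}} = \mathbf{0}$ and $\|\X_{i,j}\|_2 = 1$. Thus we may take the uniform almost-sure bound $L = 1$ on each summand.

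Next, I would compute the matrix variance proxy
\[
\V \;\defeq\; \sum_{\{i,j\}\in E} \E{\X_{i,j}^2}.
\]
A direct calculation gives $\X_{i,j}^2 = e_i e_i^\top + e_j e_j^\top$ (this is deterministic, since $\rad_{i,j}^2=1$), so summing over edges yields $\V = \diagonal{G}$, the weighted-degree diagonal matrix of $G$. Hence $\|\V\|_2 = \dmax(G) \le n-1$. Now apply the matrix Bernstein inequality (Theorem~1.4 / Corollary~6.1.2 of Tropp, ``User-friendly tail bounds for sums of random matrices''): for any $t>0$,
\[
\Prob{\rho(\As_G) \ge t} \;\le\; 2n \,\exp\!\left(-\frac{t^2/2}{\|\V\|_2 + L t/3}\right).
\]
Setting $t = C\sqrt{n \log n}$ for a sufficiently large constant $C$ makes the right-hand side polynomially small in $n$, and yields $\rho(\As_G) \le C\sqrt{n\log n} = \Otilde{\sqrt{n}}$ with high probability, as claimed.

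There is no real obstacle here; the only thing that needs care is the bookkeeping on the diagonal of $\As_G$ (avoiding double-counting of edges in the sum) and the observation that the diagonal of $\A_G$ vanishes, so that the constraint $\rad_{i,j}=\rad_{j,i}$ in the definition of a symmetric random signing does not introduce unwanted dependence across distinct summands $\X_{i,j}$. Once the decomposition is written edge-by-edge, both the uniform bound $L=1$ and the variance bound $\|\V\|_2 \le \dmax(G) \le n$ are immediate, so matrix Bernstein does the rest.
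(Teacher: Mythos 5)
Your proof is correct and follows essentially the same approach as the paper: both decompose $\As_G$ edge-by-edge as $\sum_{\{u,v\}\in E}\rad_{u,v}(e_u e_v^\top + e_v e_u^\top)$, identify the variance proxy as the degree matrix $\D_G$ with $\norm{\D_G}_2 = \dmax \le n$, and apply a matrix concentration bound from Tropp (2012) to conclude $\rho(\As_G)=\Otilde{\sqrt{n}}$ whp. The only difference is cosmetic: the paper invokes Tropp's Matrix Rademacher series bound (Theorem 1.2), which is tailored to fixed-matrix Rademacher sums, whereas you invoke the more general Matrix Bernstein inequality and supply the (trivial) uniform summand bound $L=1$ that Bernstein additionally requires.
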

\begin{proof} Let $\sigma^2 := \Norm{\sum_{\{u,v\} \in E} \left(\mathbf{E}_{u,v}\right)^2 }_2$ where $\mathbf{E}_{u,v}$ is the adjacency matrix of a graph with only a single edge between $u$ and $v$. Note that entries of $\mathbf{E}_{u,v}^2$ indicates paths of length two in this graph. Therefore, this is a diagonal matrix that satisfies $\left(\mathbf{E}_{u,v}^2\right)_{i,i} = 1$ if and only if $i \in \{u, v\}$. Consequently, 
\begin{align*}
    \sigma^2 = \Norm{\sum_{\{u,v\} \in E} \left(\mathbf{E}_{u,v}\right)^2}_2 = \Norm{\D}_2 \leq \dmax \leq n. 
\end{align*}
We can now write $\As = \sum_{\{u,v\} \in E} \rad_{u,v} \mathbf{E}_{u,v}$ and apply the Matrix Rademacher concentration result (Theorem 1.2) from \cite{tropp2012user}, to get that for any constant $c > 1$, 
\begin{align*}
    \Prob{\lambda_{n}\left( \Abar \right) \geq \sqrt{\dmax} \log(c n)} \leq n \Exp{\frac{-c \dmax \log(n)}{2\dmax}} = n { \frac{1}{n^c} } = n^{-c+1}. 
\end{align*}
\end{proof}

Finally, we use the power series approach in Section~\ref{sec:lb_overview} to obtain \Cref{theorem:sddeffres_hard_as_tridetect}.
\begin{restatable}{theorem}{thmSDDeffresToTripartiteTriangle} \label{theorem:sddeffres_hard_as_tridetect} Given an algorithm which solves the all pairs SDD effective resistance estimation problem in $\Otilde{n^2 \epsilon^{-c}}$ time, we can produce a randomized algorithm that solves the triangle detection problem in $\Otilde{n^{2(1+c)}}$ time whp. 
\end{restatable}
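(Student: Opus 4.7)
\textbf{Proof proposal for Theorem~\ref{theorem:sddeffres_hard_as_tridetect}.} The plan is to reduce triangle detection to the all pairs SDD effective resistance problem on a carefully constructed matrix derived from a random signing of (a subgraph of) the tripartite reduction of the input graph. First, I would apply Lemma~\ref{lemma:tripartite_to_normal_tridetect} to reduce to the case where the input $G$ is tripartite with parts $V_1 \sqcup V_2 \sqcup V_3$, paying only a polylogarithmic factor. Let $H$ be the graph obtained from $G$ by deleting every edge in $E_{1,2} := \{\{u,v\}\in E: u \in V_1, v \in V_2\}$. Then $G$ has a triangle iff there exists $\{u,v\} \in E_{1,2}$ with $(\A_H^2)_{u,v} > 0$, so it suffices to decide, for each edge of $E_{1,2}$, whether $(\A_H^2)_{u,v}>0$.

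Next, I would take a symmetric random signing $\As_H$ of $\A_H$. By Lemma~\ref{lemma:signing_2paths} and a union bound over $O(\log n)$ independent signings, I can ensure that whp.\ for every $\{u,v\} \in E_{1,2}$ one signing attains $|(\As_H^2)_{u,v}| \geq 1$ iff $(\A_H^2)_{u,v} > 0$, while $(\As_H^2)_{u,v}=0$ exactly when there is no length-two walk. By Lemma~\ref{lemma:concentration}, $\rho(\As_H) = \widetilde{O}(\sqrt{n})$ whp., so for a suitably small constant $\alpha$ the matrix $\M := \I - (\alpha/n)\As_H$ satisfies $\rho((\alpha/n)\As_H) \leq 1/3$ and therefore is a valid input to the all pairs SDD effective resistance estimation problem. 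Using the Neumann series
\begin{equation*}
\deltaab^\top \M^{-1} \deltaab = \sum_{k=0}^\infty \frac{\alpha^k}{n^k}\, \deltaab^\top \As_H^k \deltaab,
\end{equation*}
I will extract $(\As_H^2)_{a,b}$ from the $k=2$ term. For $\{a,b\} \in E_{1,2}$, the tripartite structure of $H$ makes the $k=0$ term equal to $2$, the $k=1$ term vanish (since $\{a,b\}\notin H$ and $\As_H$ has zero diagonal), and crucially the $k=3$ term vanish as well: every walk of length $3$ in $H$ starting at $a \in V_1$ must end in $V_3$ and every closed walk of length $3$ is impossible in the bipartite-like structure obtained after deleting $E_{1,2}$. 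The $k=2$ diagonal contribution $(\As_H^2)_{a,a}+(\As_H^2)_{b,b}$ equals $\deg_H(a)+\deg_H(b)$, which is known from $H$.

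For the error budget, $|\deltaab^\top \As_H^k \deltaab| \leq 2\rho(\As_H)^k$, so the $k\geq 4$ tail is bounded by $\widetilde{O}((\alpha/\sqrt{n})^4)/(1-\alpha/\sqrt{n}) = \widetilde{O}(\alpha^4/n^2)$. To recover $(\As_H^2)_{a,b}$ with additive error less than $1/2$, I need to estimate $\deltaab^\top \M^{-1} \deltaab$ with absolute error $O(\alpha^2/n^2)$; since this quantity is $\Theta(1)$, it suffices to run the assumed algorithm with multiplicative parameter $\epsilon = \widetilde{\Theta}(\alpha^2/n^2) = \widetilde{\Theta}(1/n^2)$. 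The total runtime is
\begin{equation*}
\widetilde{O}(n^2 \epsilon^{-c}) \;=\; \widetilde{O}(n^2 \cdot n^{2c}) \;=\; \widetilde{O}(n^{2(1+c)}),
\end{equation*}
and after computing estimates, I scan all edges of $E_{1,2}$ in $\widetilde{O}(n^2)$ time to output a triangle/no-triangle answer, succeeding whp.\ by the union bound.

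The main obstacle I expect is the bookkeeping around the error budget: ensuring that $\alpha$ can be chosen as a sufficiently small constant so that both $\rho((\alpha/n)\As_H)\leq 1/3$ and the Neumann tail $\widetilde{O}(\alpha^4/n^2)$ is dominated by the target resolution $\Theta(\alpha^2/n^2)$ (this forces $\alpha = \widetilde{O}(1)$, not the larger $\widetilde{O}(\sqrt{n})$ that the spectral bound alone would allow), and simultaneously translating the multiplicative $\epsilon$-guarantee into a small enough additive error given that $\deltaab^\top \M^{-1}\deltaab = 2 + O(\alpha^2/n^2)$. The tripartite trick is essential here because without it the $k=3$ contribution $2\trace$-like term would force $\epsilon = \widetilde{O}(1/n^{3})$ and degrade the exponent; killing the $k=3$ term via the $E_{1,2}$ deletion is what gives the claimed $n^{2(1+c)}$ bound rather than a weaker one.
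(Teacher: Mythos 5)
Your proposal is correct and follows essentially the same route as the paper's proof: reduce to tripartite triangle detection, delete $E_{1,2}$ to form $H$, apply a symmetric random signing so that $\rho(\As_H)=\widetilde{O}(\sqrt n)$ while preserving nonzeroness of entries of $\A_H^2$ with constant probability, feed $\M=\I-(\alpha/n)\As_H$ to the all pairs SDD effective resistance oracle, and use the Neumann series together with the vanishing of the $k=1$ and $k=3$ terms (for $i\in V_1,j\in V_2$ pairs) to extract $(\As_H^2)_{i,j}$ with additive error $<1/2$ from a multiplicative $\epsilon=\widetilde\Theta(1/n^2)$ resistance estimate. Your explicit boosting by running $O(\log n)$ independent signings is a clean way to turn the constant-probability guarantee of Lemma~\ref{lemma:signing_2paths} into the stated whp.\ guarantee, and your observation that the Neumann-tail constraint (not the spectral-radius constraint) is what forces $\alpha=O(1)$ is a correct and useful clarification of the parameter bookkeeping.
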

\begin{proof} As $G$ is tripartite, let $V = V_1 \sqcup V_2 \sqcup V_3$ be the partition of $G$ such that no edge has both endpoints in $V_i$ for some $i \in [3]$. Let $E_{1,2} := \{\{u, v\} \in E : u \in V_1, v \in V_2\}$, and let $\Gprime := (V, E\setminus{E_{1,2}})$. Let $\A = \A_{\Gprime}$ denote the adjacency matrix of $\Gprime$ and let $\As$ be a random signing of $\A$.

Suppose that $G$ has a triangle. Then, there exists a pair of vertices $u \in V_1, v \in V_2$ such that $\{u, v\} \in E_{12}$ and $\Gprime$ contains a path of length two between $u$ and $v$. Furthermore, observe that because there are no edges between $V_1$ and $V_2$ in $\Gprime$, $\Gprime$ has no paths of length three between $V_1$ and $V_2$. Consequently, in order to find a triangle in $G$, it suffices to check, for each $i \in V_1, j \in V_2$ with $\{i, j\} \in E_{1,2}$, whether there exists a path of length two between node $i$ and node $j$ in $\Gprime$. In other words, we need to check if there exists some $\curs{i, j} \in E_{1,2}$ such that $\A^2_{i,j} > 0$. By Lemma~\ref{lemma:signing_2paths}, we can instead check if $\abs{\As^2_{i,j}} > 0$ and we would still correctly detect a triangle with probability at least 1/2. Note that this check requires requires only $O(\nnz(\A))$ additional time, since $\Abs{E_{1,2}} < \nnz(\A)$.

So, our goal now is to compute an accurate enough estimate of $\abs{\As_{i,j}}^2$ for all $\curs{i,j} \in E_{1,2}$ given the effective resistance estimate $ \rtildeij$. To this end, let $\N = \left(\I - \frac{\alpha}{n}\Abar \right)^{-1}$ for some $\alpha < \frac{1}{3}$. Note that the max row-sum of $\Abar$ is 1/3, so the inverse exists. Lemma~\ref{lemma:concentration} guarantees that with high probability, $\Rho{\Abar} \leq \Otilde{\sqrt{\dmax}} = \Otilde{\sqrt{n}}$. We condition on this event in the remainder of the proof. Consequently, we can express $\N$ as a power series,
\begin{align*}
    \N = \sum_{k=0}^\infty \left(\frac{\alpha}{n}\right)^k \Abar^k. 
\end{align*}
Now, let $\Ntilde$ denote the truncation of $\N$ at the third term in this power series. That is, $\Ntilde = \I + \frac{\alpha}{n} \Abar + \frac{\alpha^2}{n^2} \Abar^2$. Therefore, we have for all $\curs{i,j} \in E_{1, 2}$, 
\begin{align*}
    \As_{i,j}^2 = \frac{n^2}{\alpha^2}\Ntilde_{i, j}.
\end{align*}
Noticing that $\N_{i,j} = \frac{\N_{i,i} + \N_{j,j} - r_{i,j}}{2} $ 
 motivates us to define our estimate of $\As^2_{i,j}$ that we denote $P_{i,j}$ as follows
\begin{align*}
    P_{i,j} := \frac{n^2}{\alpha^2} \left[ \frac{\Ntilde_{i,i} + \Ntilde_{j,j} - \rtildeij}{2} \right].
\end{align*}
Now, observe that $\Ntilde_{i,i} = 1 + \frac{\alpha^2}{n^2}\As^2_{i,i}$, and $\As^2_{i,i}$ is simply the degree of vertex $i$ in $\Gprime$. Note that the random signing does not affect the fact that the diagonal entries of the square of the adjacency matrix are the degrees. Therefore, $\Ntilde_{i,i}$ can also be computed for all $i$ in only $O(\nnz(\A))$ time. The additive error between our estimate $P_{i,j}$ and $\abs{\As}^2_{i,j}$ takes the form
\begin{align*}
    \abs{P_{i,j} - \As^2_{i,j}} = \abs{\frac{n^2}{\alpha^2}\sqrs{\Ntilde_{i,j} - \frac{\Ntilde_{i,i} + \Ntilde_{j,j} - \rtildeij}{2}}}.
\end{align*}
By triangle inequality and plugging in the definition of $r_{i,j}$, we break up the error into four pieces
\begin{align}\label{eq:error_in_four_pieces}
    \abs{P_{i,j} - \As^2_{i,j}} \leq \frac{n^2}{\alpha^2} \sqrs{\abs{\Ntilde_{i, j} - \N_{i, j}} + \abs{\Ntilde_{i, i} - \N_{i, i}}/2 + \abs{\Ntilde_{j, j} - \N_{j, j}}/2 +\abs{\rtildeij - r_{i, j}}/2 }.
\end{align}
We now bound each term separately. Consider any $i \in V_1$. Since, $\Gprime$ contains no triangle containing $i$, $\left(\Abar^3\right)_{i,i} = 0$. So, we have that 
\begin{align*}
    \left\lvert \N_{i,i} - \Ntilde_{i,i}\right\rvert &= \left\lvert \sum_{k=4}^\infty \frac{\alpha^k}{n^k} \left(\Abar^k\right)_{i,i}\right\rvert \leq \Norm{ \sum_{k=4}^\infty \frac{\alpha^k}{n^k} \Abar^k }_2 \\
    &\leq \sum_{k=4}^\infty \widetilde{O}\pars{\frac{\alpha}{\sqrt{n}}}^k = \frac{\Otilde{\frac{\alpha}{\sqrt{n}}}^4}{1 - \Otilde{\frac{\alpha}{\sqrt{n}}}}
\end{align*}
Similarly, for any $i \in V_1$ and $j \in V_2$, note that $\Abar^3_{i,j} = 0$ because $\Gprime$ has no edges between $V_1$ and $V_2$, and there are no edges within each tripartition $V_i$. Consequently, by a similar argument as above, 
\begin{align*}
    \left\lvert \N_{i,j} - \Ntilde_{i,j}\right\rvert &= \sum_{k=4}^\infty (\Abar^4)_{i,j} = \frac{1}{2}\sum_{k=4}^\infty e_i^\top (\Abar^k) e_i  +  e_j^\top (\Abar^k) e_j  - \vec{\delta}_{i,j}^\top (\Abar^k) \vec{\delta}_{i,j} \\
    &= \frac{1}{2} \left( e_i^\top e_i + e_j^\top e_j + \vec{\delta}_{i,j}^\top \vec{\delta}_{i,j}\right) \Norm{ \sum_{k=4}^\infty \frac{\alpha^k}{n^k} \Abar^k }_2 \\
    &\leq 2 \sum_{k=4}^\infty \widetilde{O}\pars{\frac{\alpha}{\sqrt{n}}}^k = \frac{2 \Otilde{\frac{\alpha}{\sqrt{n}}}^4}{1 - \Otilde{\frac{\alpha}{\sqrt{n}}}}. 
\end{align*} 

Finally, consider the magnitude of the approximation error between $\rtildeij$ and $r_{i,j}$. We have
\begin{align*}
    \abs{r_{i,j} - \rtildeij} \leq \epsilon \abs{r_{i,j}}.
\end{align*}
Note that $\abs{r_{i,j}} = \abs{\deltaij^\top \N \deltaij} \leq \norm{\deltaij}^2 \norm{\N} \leq 2 \norm{\pars{\I - \frac{\alpha}{n}\As}^{-1}} \leq \frac{2}{1-\frac{\alpha}{n} \Norm{\Abar}_2 } \leq \frac{2}{1 - \Otilde{\frac{\alpha}{\sqrt{n}}}}. $
Plugging these estimates of the errors into \eqref{eq:error_in_four_pieces}, we get
\begin{align*}
    \abs{P_{i,j} - \As^2_{i,j}} \leq \frac{n^2}{\alpha^2} \sqrs{\frac{\widetilde{O}\pars{\frac{\alpha}{\sqrt{n}}}^4 + \epsilon}{1 - \widetilde{O}\pars{\frac{\alpha}{\sqrt{n}}}}}.
\end{align*}

Since $\Abs{\Abar^2_{i,j}} \in \{0, 1\}$, to compute $\Abar^2_{i,j}$, we need approximate it to additive 1/2 error. That is, we require 
\begin{align*}
    \frac{1}{1 - \Otilde{\frac{\alpha}{\sqrt{n}}}} \left[{\Otilde{\alpha^2}} + \epsilon \frac{n^2}{\alpha^2} \right] &< \frac{1}{2},
\end{align*}
or equivalently, 
\begin{align*}
    \epsilon &< \frac{{\alpha^2 - \Otilde{\frac{\alpha^3}{\sqrt{n}}}}}{2n^2} - \frac{\Otilde{\alpha^4}}{n^2} = \widetilde{O}\pars{\frac{1}{n^2}},
\end{align*}
where the last step follows from the fact that we can take $\alpha$ to be a sufficiently small constant. Therefore, estimates $\rtildeij$ with $\epsilon = \widetilde{O}(n^{-2})$ for all $\curs{i, j} \in E_{1,2}$ are sufficient to determine if $\Abs{\Abar^2_{i,j}}$ is 0 or 1. As noted earlier, checking this for all edges in $E_{1,2}$ takes $O(\nnz (\A))$ additional time. Therefore, by plugging in $\epsilon = \widetilde{O}(n^{-2})$, we can use an algorithm that solves the all pairs SDD effective resistance estimation problem in $\widetilde{O}(n^2 \epsilon^{-c})$ time to solve the triangle detection problem in $\widetilde{O}(n^2 n^{2c}) = \widetilde{O}(n^{2(1+c)})$ time whp. and this completes the proof.  
\end{proof}

\Cref{theorem:sddeffres_hard_as_tridetect} and \Cref{lemma:grapheffres_to_sddeffres} with $c = 1/2 - \delta$ immediately imply our main result \Cref{thm:reff_lb_main_result}.

\subsection{Improved Lower Bounds for Spectral Sum Estimation} \label{sec:improved_lower_bounds_spectral_sums}
Finally, we discuss our improved lower bounds for various spectral sum estimation problems. Analogous to Lemma~\ref{lemma:signing_2paths}, in the following lemma we show that to determine whether a graph has a triangle (i.e., $\trace{\A_{G}^3} > 0$) with constant probability, it suffices to determine whether $\trace{\As_{G}^3} > 0$.

\begin{restatable}{lemma}{lemmaRandomSigningPreservesTriangles}\label{lemma:random_signing_preserves_triangles}
    If $\trace{\A_G^3} = 0$, then $\trace{\As_G^3} = 0$, and if $\trace{\A_G^3} > 0$ then $\Prob{\abs{\trace{\As_G^3}} > 0} \geq 1/4$. 
\end{restatable}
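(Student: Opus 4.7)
The plan is to expand $\trace{\As_G^3}$ as a weighted sum over triangles and exploit the independence of Rademachers across distinct edges. By definition $\trace{\As_G^3} = \sum_{i,j,k}\As_{i,j}\As_{j,k}\As_{k,i}$; a summand is nonzero only when $\{i,j\},\{j,k\},\{k,i\}$ are all edges of $G$, so $\{i,j,k\}$ is a triangle, and each triangle $T=\{a,b,c\}$ is counted by $3!=6$ cyclic orderings, all producing the same product $X_T := \rad_{a,b}\rad_{b,c}\rad_{c,a}$. Hence $\trace{\As_G^3} = 6\,S$ where $S := \sum_{T\in\mathcal{T}(G)} X_T$ and $\mathcal{T}(G)$ denotes the set of triangles of $G$. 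If $\trace{\A_G^3}=0$ then $\mathcal{T}(G)=\emptyset$ and $S\equiv 0$, which proves the first claim.

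For the second claim, fix any triangle $T_0=\{a,b,c\}\in\mathcal{T}(G)$ and let $\mathcal{F}$ be the $\sigma$-algebra generated by all $\rad_e$ with $e\neq\{a,b\}$. Splitting $S$ according to whether a triangle contains the edge $\{a,b\}$, write
\begin{equation*}
    S = \rad_{a,b}\cdot\alpha + \beta, \qquad \alpha := \sum_{d \,:\, \{a,b,d\}\in\mathcal{T}(G)} \rad_{a,d}\rad_{b,d}, \qquad \beta := \sum_{T \not\ni \{a,b\}} X_T,
\end{equation*}
so that $\alpha$ and $\beta$ are $\mathcal{F}$-measurable while $\rad_{a,b}$ is uniform on $\{\pm 1\}$ independent of $\mathcal{F}$. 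Conditionally on $\mathcal{F}$: if $\alpha\neq 0$, then $S$ takes the two distinct values $\pm\alpha+\beta$ equiprobably, giving $\Prob{S=0\mid\mathcal{F}}\leq 1/2$; if $\alpha=0$, we use only the trivial bound $\Prob{S=0\mid\mathcal{F}}\leq 1$. Averaging, $\Prob{S = 0} \leq \tfrac{1}{2} + \tfrac{1}{2}\Prob{\alpha = 0}$.

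The crux is to show $\Prob{\alpha = 0} \leq 1/2$. As $d$ ranges over common neighbors of $a$ and $b$ in $G$, the edge pairs $\{\{a,d\},\{b,d\}\}$ are pairwise disjoint (distinct $d$'s involve distinct edges, and $d\notin\{a,b\}$ since $G$ is simple), so the summands $Y_d := \rad_{a,d}\rad_{b,d}$ are i.i.d.\ Rademachers. Since $d=c$ always qualifies, $\alpha$ is a sum of $k\geq 1$ i.i.d.\ Rademachers, and a direct calculation gives $\Prob{\alpha=0}=0$ for odd $k$ and $\Prob{\alpha=0}=\binom{k}{k/2}2^{-k}$ for even $k$, maximized at $k=2$ with value $1/2$. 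Combining the two bounds yields $\Prob{S=0}\leq 3/4$, i.e., $\Prob{|\trace{\As_G^3}|>0}\geq 1/4$.

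The main obstacle is identifying the right variable to reveal last: conditioning on all but $\rad_{a,b}$ for an edge $\{a,b\}$ of some fixed triangle is what produces a coefficient $\alpha$ that is itself a sum of \emph{independent} Rademachers, which in turn makes the anti-concentration at zero clean and free of any dependence on the graph structure. Once this structural observation is in place, the $1/4$ constant follows from elementary moment bookkeeping and the central binomial coefficient estimate.
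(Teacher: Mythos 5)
Your proof is correct, and it takes a genuinely different (and arguably cleaner) route than the paper's. The paper proves the $1/4$ bound via a case analysis on parities: first on whether the total number of triangles is odd (in which case the sum $T$ of an odd number of $\pm1$ terms is deterministically nonzero), then, when even, on whether some edge lies in an odd number of triangles (yielding a two-term split and a sign-flip coupling argument), and finally, in the residual case, via a four-way decomposition $S_1+S_2+S_3+S_4$ indexed by the two-bit configuration of $(\rad_{ab},\rad_{bc})$, asserting that not all four configurations can simultaneously vanish. Your argument instead fixes one triangle $\{a,b,c\}$, conditions on all Rademachers except $\rad_{a,b}$, and writes $S=\rad_{a,b}\,\alpha+\beta$; the key structural observation you make — that $\alpha=\sum_{d}\rad_{a,d}\rad_{b,d}$ is a sum of $k\geq 1$ \emph{independent} Rademachers because the edge pairs $\{\{a,d\},\{b,d\}\}$ are pairwise disjoint over distinct common neighbors $d$ — turns the problem into elementary anti-concentration for a Rademacher sum, with $\Prob{\alpha=0}\leq\binom{k}{k/2}2^{-k}\leq 1/2$ giving $\Prob{S=0}\leq 3/4$. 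This replaces the paper's somewhat delicate parity casework (and the unproved nonsolvability of a $4\times 4$ sign system) with a single uniform conditioning step and a standard binomial estimate, which is easier to verify; the paper's casework does yield sharper probabilities ($1$ or $\geq 1/2$) in the easier cases, but both arguments bottom out at the same worst-case constant $1/4$, which is all the theorem needs.
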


\begin{proof}
    The central idea of the proof is that if a triangle $\{a, b\}, \{b, c\}, \{c, a\}$ exists in $G$, then amongst the 4 possible configurations of the Rademacher random signing variables $\xi_{a,b}$ and $\xi_{b,c}$, at least one configuration must result in $\abs{\trace{\As_G^3}} > 0$. 

    Again for convenience, let $\As = \As_G$. Denote by $\rad_{ab} = \rad_{ba}$ the Rademacher random variable used to decide the sign of edge $(a,b)$. We can write
    \begin{align}\label{eq:sum_over_all_triangles}
        \trace{\As^3}/6 = \sum_{\text{triangles } \curs{i,j,k} \text{ in } G}  \rad_{ij} \rad_{jk} \rad_{ki} =: T
    \end{align}
    If $\trace{\A^3} > 0$, then $G$ must have at least one triangle. Consider the following cases:
    \begin{enumerate}
        \item $G$ has an odd number of triangles. In this case, since each term in the sum in \eqref{eq:sum_over_all_triangles} is either $+1$ or $-1$, and there is an odd number of terms in the sum, so $\trace{\As^3} > 0$ wp $1$.
        \item $G$ has an even number of triangles. First, define 
        \begin{align*}
            T(a, b) := \sum_{\text{triangles } \curs{a,b,k} \text{ that contain }\curs{a,b}} \rad_{ab} \rad_{bk} \rad_{ak}
        \end{align*}
        We now subdivide this into two cases:
        \begin{enumerate}
            \item There exists an edge $\curs{a,b}$ that is part of an odd number of triangles. We can decompose the sum in \eqref{eq:sum_over_all_triangles} as follows:
            \begin{align*}
                \trace{\As^3}/6 = \underbrace{T(a,b)}_{S_1} + \underbrace{T-T(a,b)}_{S_2}
            \end{align*}
            Suppose there exists some realization of the random variables $\rad_{ij}$ such that $\trace{\As^3}/6 = 0$. Since $S_1$ has odd terms, it must be non-zero. By flipping the sign of $\rad_{ab}$, we can flip the sign of $S_1$, and so $-S_1 + S_2 \neq 0$. Therefore, for every configuration the variables $\rad_{ij}$ that result in a 0 trace, there exists an equally likely configuration that results in $\trace{\As^3}/6 \neq 0$. Therefore, $\Prob{\abs{\trace{\As^3}} > 0} \geq 1/2$.
            \item Every edge in $G$ is a part of an even number of triangles. Let $\curs{a,b,c}$ be a triangle in $G$. In this case, we decompose \eqref{eq:sum_over_all_triangles} as follows:
            \begin{align*}
                \trace{\As^3}/6 &= \underbrace{\rad_{ab} \rad_{bc} \rad_{ac}}_{S_1} +\underbrace{T(a,b) - \rad_{ab} \rad_{bc} \rad_{ac}}_{S_2} + \underbrace{T(b,c) - \rad_{ab} \rad_{bc} \rad_{ac}}_{S_3}  \\ &+\underbrace{T - T(a,b) - T(b,c) + \rad_{ab} \rad_{bc} \rad_{ac}}_{S_4}
            \end{align*}
            Consider all 4 possible values of the pair of random variables $\curs{\rad_{ab}, \rad_{bc}}$. Since each $S_i$ has an odd number of terms, $S_i \neq 0$ for all $i$. We observe that it is not possible for all 4 equally likely configurations of $\curs{\rad_{ab}, \rad_{bc}}$ to result in $S_1 + S_2 + S_3 + S_4 = 0$, so at least one configuration must result in $T \neq 0$. Therefore,  $\Prob{\abs{\trace{\As^3}} > 0} \geq 1/4$.
        \end{enumerate}
    \end{enumerate}
\end{proof}

By following the proof of Theorem 15 from \cite{musco2017spectrum}, and replacing their use of $\A_G$ with a symmetric random signing $\As_G$, we obtain an improved randomized version of their result by leveraging the smaller spectral radius of $\As_G$: 
\begin{restatable}[Improved randomized version of Theorem 15 from \cite{musco2017spectrum}]{theorem}{thmFifteenSigned}\label{thm:thm15_signed}
Let $f:\R^+ \to \R^+$ be a function such that it can be expressed as $f(x) = \sum_{k=0}^\infty c_k (x - 1)^k$ where $\abs{c_k/c_3} \leq h^{k-3}$ for $k > 3$ and $x \in (0,2)$. Given an algorithm which takes as input a graph $G = (V, E)$ on $n$ nodes and, in $O\pars{n^\gamma \epsilon^{-c}}$ time, outputs an estimate $X \approx_{\epsilon_1/9} \mathcal{S}_f(\I - \delta \As_G)$ with $\delta$ and $\epsilon_1$ satisfying 
\begin{align*}
    &\delta = \min\curs{({\sqrt{n} \log (\alpha n)})^{-1}, ({10n^3 h \log (\alpha n)})^{-1}}\\
    &\epsilon_1 = \min\curs{1, \abs{{c_3 \delta^3}/{(c_0 n)}}, \abs{{c_3 \delta}/{(c_2 n^2)}}}
\end{align*} 
for some constant $\alpha>1$, we can produce an algorithm that solves the triangle detection problem in $O\pars{n^2 + n^\gamma \epsilon_1^{-c}}$ time whp.
\end{restatable}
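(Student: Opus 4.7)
The plan is to mirror the proof of Theorem 15 in \cite{musco2017spectrum}, substituting the random signing $\As_G$ for $\A_G$ throughout and exploiting the improved bound $\rho(\As_G) \leq C\sqrt{n}\log n$ whp from Lemma~\ref{lemma:concentration}, rather than the potentially $\Theta(n)$-sized bound $\rho(\A_G)$. Condition on this high-probability event so that the spectrum of $\I - \delta \As_G$ lies in $(0, 2)$, justifying the power series expansion
\begin{align*}
\mathcal{S}_f(\I - \delta \As_G) = \sum_{k=0}^\infty c_k (-\delta)^k \trace{\As_G^k}.
\end{align*}
The first three terms are $c_0 n$, $0$ (since $\As_G$ has zero diagonal), and $c_2 \delta^2 \cdot 2|E|$ (which does not depend on the signing), all computable directly in $O(n+m)=O(n^2)$ time. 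Subtracting these from the given estimate $X$ and dividing by $-c_3\delta^3$ produces an estimate of $\trace{\As_G^3}$ plus error terms to be controlled.

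Next, bound the tail $T := \sum_{k\geq 4}c_k(-\delta)^k\trace{\As_G^k}$ by combining $|c_k| \leq |c_3| h^{k-3}$ and $|\trace{\As_G^k}| \leq n \rho(\As_G)^k$, yielding a geometric series
\begin{align*}
    |T| \leq |c_3|\, n\, \delta^3 \rho(\As_G)^3 \sum_{k \geq 1} \bigl(h\delta \rho(\As_G)\bigr)^k.
\end{align*}
The two constraints $\delta \leq 1/(\sqrt{n}\log(\alpha n))$ and $\delta \leq 1/(10 n^3 h \log(\alpha n))$ ensure respectively that $\delta \rho(\As_G) = O(1/\log n)$ (so the geometric series converges rapidly) and that $|T| \leq |c_3\delta^3|/O(1)$. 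Meanwhile, the approximation guarantee $|X - \mathcal{S}_f(\I-\delta\As_G)| \leq (\epsilon_1/9)|\mathcal{S}_f(\I-\delta\As_G)|$ combined with $|\mathcal{S}_f| \leq |c_0 n| + |c_2 \delta^2 \cdot 2|E|| + |c_3\delta^3\trace{\As_G^3}| + |T|$ and the explicit choice $\epsilon_1 \leq \min\curs{|c_3\delta^3/(c_0 n)|,\ |c_3\delta/(c_2 n^2)|}$ implies the approximation error, when divided through by $|c_3\delta^3|$, contributes at most a small constant per ``large'' known term. The factor $1/9$ absorbs the triangle-inequality overhead across the three error sources (tail, $c_0$-term, $c_2$-term), giving an overall additive error strictly less than $1/2$ in the estimate of $\trace{\As_G^3}$.

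Since Lemma~\ref{lemma:random_signing_preserves_triangles} guarantees $\trace{\As_G^3}$ is an integer multiple of $6$ (each triangle contributes $\pm 6$), an additive-$1/2$ estimate distinguishes $\trace{\As_G^3}=0$ from $|\trace{\As_G^3}|\geq 6$. By that same lemma, this distinguishes triangle-free graphs from triangle-containing graphs with probability at least $1/4$ over the signing. Repeating with fresh signings $O(\log n)$ times and taking a logical OR boosts this to whp, adding only a polylogarithmic factor to the runtime, which becomes $\widetilde{O}(n^2 + n^\gamma \epsilon_1^{-c})$. The main obstacle is the careful accounting of the three error sources against the rather delicate definition of $\epsilon_1$, particularly the degenerate cases in which $c_0$ or $c_2$ vanishes (where the corresponding term in the $\min$ should be interpreted as $+\infty$ and the argument simplifies), and verifying that the constant $9$ is indeed enough slack for all the triangle-inequality steps to go through simultaneously.
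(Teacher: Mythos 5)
Your proposal follows essentially the same route as the paper's own proof: replace $\A_G$ by the random signing $\As_G$, condition on the $\Otilde{\sqrt{n}}$ spectral radius bound from Lemma~\ref{lemma:concentration}, expand $\mathcal{S}_f(\I - \delta\As_G)$ in a power series, bound the tail geometrically, subtract off the directly computable $k=0,1,2$ terms, and conclude via Lemma~\ref{lemma:random_signing_preserves_triangles} with $O(\log n)$ fresh signings to boost the $1/4$ success probability. The only superficial differences are your use of the bound $\abs{\trace{\As_G^k}}\leq n\rho(\As_G)^k$ where the paper uses $\rho(\As_G)^{k-2}\normgen{\As_G}{F}^2$ (both give $\Otilde{n^{k/2+1}}$), and your observation that $\trace{\As_G^3}$ is a multiple of $6$ (which the paper doesn't need but doesn't contradict).
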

\begin{proof}
For convenience, we write $\A = \A_G$. The proof closely follows the proof of Theorem 15 from \cite{musco2017spectrum}, but we replace $\A$ with its symmetrically random signed version that we denote $\As$. We present a full proof here for completeness.

First, we note that by lemma~\ref{lemma:concentration}, we have that $\norm{\As}_2 \leq \sqrt{n} \log{\alpha n}$ whp. for some constant $\alpha >1$. We define $\Bs = \I - \delta \As$ and consequently $\Bs$ is PSD whp. Now, using the definition of $f$, we have
    \begin{align*}
        \sum_{i=1}^n \sigma_i(\Bs) = \sum_{i=1}^n f(1 - \delta \lambda_i(\As)) = \sum_{i=1}^n \sum_{k=0}^\infty c_k (\delta \lambda_i(\As))^k = \sum_{k=0}^\infty c_k \delta^k \trace{\As^k}.
    \end{align*}
We analyze the tail of this power series. Specifically, we have 
\begin{align}\label{eq:thm15proof_powerseries_tail_bound}
    \abs{\sum_{k=4}^\infty  c_k \delta^k \trace{\As^k}} \leq \abs{c_3} \delta^3 \sum_{k=4}^\infty \abs{\trace{\As^k}} \delta^{k-3} \abs{\frac{c_k}{c_3}}.
\end{align}
Now, we have $\abs{\trace{\As^k}} \leq \norm{\As}^{k-2}_2 \normgen{\As}{F} \leq n^{k/2 + 1}$ whp. and further $n^{k/2 + 1} \leq n^{3\pars{k-3}}$ for all $k>3$. Therefore, using the definition of $\delta$ as in the theorem we get that whp.,
    \begin{align*}
        \abs{\trace{\As^k}} \delta^{k-3} \abs{\frac{c_k}{c_3}} \leq \frac{1}{10^{k-3}}\text{ for all } k>3.
    \end{align*}
Plugging into \Cref{eq:thm15proof_powerseries_tail_bound}, we get 
\begin{align*}
    \abs{\sum_{k=4}^\infty  c_k \delta^k \trace{\As^k}} \leq \frac{\abs{c_3}\delta^3}{9}.
\end{align*}
The rest of the proof is essentially identical to the steps in the proof of Theorem 15 in \cite{musco2017spectrum}, but we reproduce them here for completeness.

Using the simple facts $\trace{\As^0} = n$, ${\trace{\As}} = 0$ and ${\trace{\As^2}} \leq n^2$, we have
\begin{align*}
    c_0\trace{\As^0} + c_1\trace{\As} + c_2 {\trace{\As^2}} \leq \abs{c_3}\delta^3 \pars{c_0 n/(c_3\delta^3) + c_2 n^2/(c_3 \delta)} \leq \frac{\abs{c_3} \delta^3}{\epsilon_1}.
\end{align*}
Given $X$, in $O(\nnz(\As))$ time, we can compute 
\begin{align*}
    X - c_0 n - c_2 \delta^2 {\trace{\As^2}} &= {c_3}\delta^3 \trace{\As^3} \pm \frac{\abs{c_3} \delta^3}{9} \pm \frac{\epsilon_1}{9}\pars{\frac{\abs{c_3}\delta^3}{9} + c_3 \delta^3 \trace{\As^3} +\frac{\abs{c_3}\delta^3}{\epsilon_1}}\\
    &= c_3 \delta^3 \sqrs{\trace{\As^3}\pars{1 \pm \frac{1}{20}} \pm \frac{1}{3}}.
\end{align*}
This is sufficient to detect if $\abs{\trace{\As}} = 0$ or if $\abs{\trace{\As^3}} \geq 1$. The final result then follows by applying Lemma~\ref{lemma:random_signing_preserves_triangles}.   
\end{proof}

We now prove \Cref{thm:our_results_spectralsums_lowerbound} by applying this result to the functions $f$ that define the corresponding spectral sums.

\begin{proof}[Proof of \Cref{thm:our_results_spectralsums_lowerbound}]
    We apply \Cref{thm:thm15_signed} to the specific spectral sums in \Cref{table:spectral_sum_hardness}.
    \paragraph{Schatten 3-norm} We have $f(x) = x^3$. Therefore, $c_k = 0$ for $k > 3$. So we apply Theorem \ref{thm:thm15_signed} with $h = 0$ and hence $\delta = \Otilde{\frac{1}{\sqrt{n}}}$ and $\epsilon_1 = \Otilde{\frac{1}{n^{2.5}}}$.
    
    \paragraph{Schatten p-norm $p \neq 1, 2$} We have $f(x) = x^p$. Using the Taylor series about 1, we have $\frac{c_k}{c_3} \leq p^{k-3}$ for all $k>3$ as well as $\abs{\frac{c_0}{c_3}} = \abs{\frac{1}{p(p-1)(p-2)}} \leq \abs{\frac{1}{2 \min \curs{p, (p-1), (p-2)}}}$ and similarly $\abs{\frac{c_2}{c_3}} \leq \abs{\frac{1}{2 \min \curs{p, (p-1)}}}$. Therefore, with $h=p$, we apply Theorem \ref{thm:thm15_signed} with $\delta = \Otilde{\frac{1}{n^3 p}}$ and $\epsilon_1 = \frac{c_3 \delta^3}{c_0 n} = \Otilde{\frac{\abs{\min \curs{p, (p-1), (p-2)}}}{n^{10}p^3}}$, which gives the result.

    \paragraph{SVD Entropy} We have $f(x) = x \log x$. For $x \in (0, 2)$, using the Taylor Series about 1 we can write $x \log x =\sum_{k=0}^\infty c_k(x-1)^k$ where $c_0 = 1 \log(1) = 0$, $c_1 = \log(1) + 1 = 1$, and $|c_k| = \frac{(k-2)!}{k!} \leq 1$ for $k \geq 2$. So we have $c_k < c_3$ for all $k > 3$, $\frac{c_0}{c_3} = 0$ and $\frac{c_2}{c_3} = \frac{1}{3}$. So with $h=1$,  Applying Theorem \ref{thm:thm15_signed} with $\delta = \Otilde{\frac{1}{n^3}}$ and $\epsilon_1 = \frac{\delta}{3n^2} = \Otilde{\frac{1}{n^5}}$ gives the result.

    \paragraph{Log Determinant} We have $f(x) = \log x$. For $x \in (0, 2)$, using the Taylor Series about 1 we can write $\log x =\sum_{k=0}^\infty c_k(x-1)^k$ where $c_0 = 0$, $\abs{c_i} = 1/i$ for $i \geq 1$. Again we have $c_k < c_3$ for all $k > 3$ and $\frac{c_0}{c_3} = 0$ while $\frac{c_2}{c_3} = \frac{3}{2}$. So with $h=1$,  Applying Theorem \ref{thm:thm15_signed} with $\delta = \Otilde{\frac{1}{n^3}}$ and $\epsilon_1 = \frac{\delta}{3n^2} = \Otilde{\frac{1}{n^5}}$ gives the result.

    \paragraph{Trace of Exponential} We have $f(x) = e^x$. Using the Taylor Series about $1$ we can write $e^x = \sum_{k=0}^\infty \frac{e(x-1)^k}{k!}$. We have $\frac{c_0}{c_3} = 6$, $\frac{c_2}{c_3} = 3$, and $c_k < c_3$ for all $k \geq 3$. So with $h=1$, Applying Theorem \ref{thm:thm15_signed} with $\delta = \Otilde{\frac{1}{n^3}}$ and $\epsilon_1 = \frac{c_3 \delta^3}{c_0 n} = \Otilde{\frac{1}{n^{10}}}$ gives the result.
\end{proof}
\section{Conclusion}
\label{sec:conclude}

In this paper we provided improved upper and lower bounds on the problem of estimating and sketching effective resistances on expanders. On the algorithmic side we show how sketches tailored to $\ell_1$ when carefully applied to asymmetric formulations of the quadratic form of the Laplacian pseudoinverse gave our results. On the lower bound side, we provided an alternative to the trace estimation approach of \cite{musco2017spectrum} for showing lower bounds and coupled it with techniques of randomly signing edges of the graph to obtain our results. Further, we showed that these techniques had broader implications for addressing algorithmic challenges in numerical linear algebra. 

Beyond the natural open problem of improving both our upper and lower bounds towards bringing them together there are interesting open problems in broadening the applicability of both our upper and lower bounds. For example, obtaining an $\Otilde{m \epsilon^{-1}}$ time algorithm for estimating the effective resistance of all edges in a general (non-expander) graph and extending our $\widetilde{\Omega}(n^2 \epsilon^{-1/2})$ lower bounds to deterministic algorithms remain interesting open problems. We hope that the results of this paper provide useful tools for addressing each.

\paragraph{Acknowledgements}
We thank Hongyue Li for helpful discussions and work on this project at various stages. Aaron Sidford was supported in part by a Microsoft Research Faculty Fellowship, NSF CAREER Award CCF-1844855, NSF Grant CCF-1955039, a PayPal research award, and a Sloan Research Fellowship. 

\bibliographystyle{unsrt}
\bibliography{ref}

\end{document}